\newcommand{\mand}{\sqcap}
\newcommand{\mor}{\sqcup}
\newcommand{\V}{\forall}
\newcommand{\E}{\exists}
\def\tuple#1{\langle#1\rangle} 
\newcommand{\ovl}[1]{\overline{#1}}
\newcommand{\EXPTIME}{\textsc{Exp\-Time}\xspace}
\newcommand{\NEXPTIME}{\textsc{NExp\-Time}\xspace}
\newcommand{\NtEXPTIME}{\textsc{N2Exp\-Time}\xspace}
\newcommand{\mT}{\mathcal{T}} 
\newcommand{\mR}{\mathcal{R}} 
\newcommand{\mA}{\mathcal{A}} 
\newcommand{\mI}{\mathcal{I}} 
\newcommand{\mE}{\mathcal{E}} 
\newcommand{\mS}{\mathcal{S}} 
\newcommand{\KB}{\mathit{KB}}
\newcommand{\nI}{\mathbb{I}} 
\newcommand{\nC}{\mathbb{C}} 
\newcommand{\nE}{\mathbb{E}}
\def\defeq{=}
\def\eqref#1{(\ref{#1})}
\newcommand{\myEnd}{\mbox{}\hfill$\Box$}
\newcommand{\koniec}{\mbox{}\hfill$\Box$}
\let\oldmarginpar\marginpar
\renewcommand\marginpar[1]{\oldmarginpar{\center{\footnotesize{\em #1}}}}
\newcommand{\comment}[1]{} 
\newcommand{\CPDLreg}{CPDL$_{reg}$\xspace}
\newcommand{\SHI}{$\mathcal{SHI}$\xspace}
\newcommand{\SROIQ}{$\mathcal{SROIQ}$\xspace}
\newcommand{\ALC}{$\mathcal{ALC}$\xspace} 
\newcommand{\SHIQ}{$\mathcal{SHIQ}$\xspace} 
\newcommand{\SHQ}{$\mathcal{SHQ}$\xspace} 
\newcommand{\SHOIQ}{$\mathcal{SHOIQ}$\xspace} 
\newcommand{\ALCI}{$\mathcal{ALCI}$\xspace}
\newcommand{\CSHIQ}{$C_\mathcal{SHIQ}$\xspace}
\newcommand{\CN}{\mathbf{C}} 
\newcommand{\RN}{\mathbf{R}} 
\newcommand{\IN}{\mathbf{I}} 
\newcommand{\closure}{\mathsf{closure}}
\newcommand{\Type}{\mathit{Type}}
\newcommand{\SType}{\mathit{SType}}
\newcommand{\Status}{\mathit{Status}}
\newcommand{\Label}{\mathit{Label}}
\newcommand{\CELabel}{\mathit{CELabel}}
\newcommand{\CELabelT}{\mathit{CELabelT}}
\newcommand{\CELabelR}{\mathit{CELabelR}}
\newcommand{\CELabelI}{\mathit{CELabelI}}
\newcommand{\StatePred}{\mathit{StatePred}}
\newcommand{\AfterTransPred}{\mathit{ATPred}}
\newcommand{\RFormulas}{\mathit{RFmls}}
\newcommand{\DSTimeStamp}{\mathit{DSTimeStamp}}
\newcommand{\ETimeStamp}{\mathit{ETimeStamp}}
\newcommand{\FmlsRC}{\mathit{FmlsRC}}
\newcommand{\FmlFB}{\mathit{FmlFB}}
\newcommand{\ILConstraints}{\mathit{ILConstraints}}
\newcommand{\State}{\mathsf{state}}
\newcommand{\NonState}{\mathsf{non\textrm{-}state}}
\newcommand{\Complex}{\mathsf{complex}}
\newcommand{\Simple}{\mathsf{simple}}
\newcommand{\True}{\mathsf{true}}
\newcommand{\False}{\mathsf{false}}
\newcommand{\Sat}{\mathsf{open}}
\newcommand{\Unsat}{\mathsf{closed}}
\newcommand{\Incomplete}{\mathsf{incomplete}}
\newcommand{\PExpanded}{\mathsf{p\textrm{-}expanded}}
\newcommand{\FExpanded}{\mathsf{f\textrm{-}expanded}}
\newcommand{\Unexpanded}{\mathsf{unexpanded}}
\newcommand{\Null}{\mathsf{null}}
\newcommand{\TUS}{\mathsf{testingClosedness}}
\newcommand{\CQF}{\mathsf{checkingFeasibility}}
\newcommand{\Ext}{\mathit{ext}}
\newcommand{\Trans}[1]{\mathit{trans}_\mR(#1)}
\newcommand{\IFDL}[1]{\textrm{IFDL}(#1)}
\newcommand{\Size}{\mathsf{N}}
\newcommand{\UPS}{(\textrm{UPS})\xspace}
\newcommand{\UPSa}{(\textrm{UPS$_1$})\xspace}
\newcommand{\UPSb}{(\textrm{UPS$_2$})\xspace}
\newcommand{\US}{(\textrm{US})\xspace}
\newcommand{\KCC}{(\textrm{KCC})\xspace}
\newcommand{\KCCd}{(\textrm{KCC$_1$})\xspace}
\newcommand{\KCCa}{(\textrm{KCC$_2$})\xspace}
\newcommand{\KCCb}{(\textrm{KCC$_3$})\xspace}
\newcommand{\KCCc}{(\textrm{KCC$_4$})\xspace}
\newcommand{\KCCcp}{(\textrm{KCC$_5$})\xspace}
\newcommand{\NUS}{(\textrm{NUS})\xspace}
\newcommand{\FS}{(\textrm{FS})\xspace}
\newcommand{\FSa}{(\textrm{FS$_1$})\xspace}
\newcommand{\FSb}{(\textrm{FS$_2$})\xspace}
\newcommand{\TP}{(\textrm{TP})\xspace}
\newcommand{\TF}{(\textrm{TF})\xspace}
\newtheorem{Explanation}{Explanation}
\newenvironment{explanation}{\begin{Explanation}\begin{em}}{\end{em}\end{Explanation}}
\newcommand{\sssection}[1]{\subsubsection{#1}}
\title{ExpTime Tableaux for the Description Logic SHIQ Based on Global~State~Caching and Integer~Linear~Feasibility~Checking}
\titlerunning{ExpTime Tableaux for SHIQ}
\author{Linh Anh Nguyen} 
\institute{ 
Institute of Informatics, University of Warsaw\\ 
Banacha 2, 02-097 Warsaw, Poland\\ 
\email{nguyen@mimuw.edu.pl}
}
\authorrunning{L.A. Nguyen}	 
\begin{document} 
\maketitle  
\sloppy 

\begin{abstract}
We give the first \EXPTIME (complexity-optimal) tableau decision procedure for checking satisfiability of a knowledge base in the description logic \SHIQ when numbers are coded in unary. Our procedure is based on global state caching and integer linear feasibility checking. 

\medskip

\noindent\textbf{Keywords:} automated reasoning, description logics, global state caching, integer linear feasibility
\end{abstract}


\section{Introduction}

Automated reasoning in description logics (DLs) has been an active research area for more than two decades. It is useful, for example, for the Semantic Web in engineering and querying ontologies~\cite{BaaderDLH}. One of basic reasoning problems in DLs is to check satisfiability of a knowledge base in a considered DL. Most of other reasoning problems in DLs are reducible to this one~\cite{BaaderSattler01}.
In this paper we study the problem of checking satisfiability of a knowledge base in the DL \SHIQ, which extends the basic DL \ALC with transitive roles, hierarchies of roles, inverse roles and quantified number restrictions. 

Tobies in his PhD thesis~\cite{TobiesThesis} proved that the problem is \EXPTIME-complete (even when numbers are coded in binary). On page~127 of~\cite{TobiesThesis} he wrote: {\em ``The previous \EXPTIME-completeness results for \SHIQ rely on the highly inefficient automata construction of Definition~4.34 used to prove Theorem~4.38 and, in the case of knowledge base reasoning, also on the wasteful pre-completion technique used to prove Theorem~4.42. Thus, we cannot expect to obtain an implementation from these algorithms that exhibits acceptable runtimes even on relatively ``easy'' instances. This, of course, is a prerequisite for using SHIQ in real-world applications.''}

Together with Horrocks and Sattler, Tobies therefore developed a tableau decision procedure for \SHIQ~\cite{HorrocksST00,TobiesThesis}. Later, in~\cite{HorrocksS07} Horrocks and Sattler gave a tableau decision procedure for \SHOIQ, and in~\cite{HorrocksKS06} Horrocks et al.\ gave a tableau decision procedure for \SROIQ. Both the logics \SHOIQ and \SROIQ are more expressive than \SHIQ, but they also have higher complexity (\SHOIQ is \NEXPTIME-complete and \SROIQ is \NtEXPTIME-complete).  
According to the recent survey~\cite{MishraK11}, the third generation ontology reasoners that support \SHIQ or \SROIQ are FaCT, FaCT++, RACER, Pellet, KAON2 and HermiT. The reasoners FaCT, FaCT++, RACER and Pellet are based on tableaux like the ones in~\cite{HorrocksST00,HorrocksS07,HorrocksKS06}, KAON2 is based on a translation into disjunctive Datalog~\cite{MotikS06}, and HermiT is based on hypertableaux~\cite{GlimmHM10}.
Traditional tableau decision procedures like the ones in~\cite{HorrocksST00,HorrocksS07,HorrocksKS06} can be implemented with various optimization techniques to increase efficiency. 
However, such procedures use backtracking to deal with disjunction ($\mor$) and ``or''-branching (e.g., the ``choice'' operation~\cite{HorrocksST00}) and despite advanced blocking techniques (e.g., pairwise anywhere blocking~\cite{HorrocksST00}), their complexities are non-optimal. In particular, the decision procedure for \SHIQ given in~\cite{HorrocksST00,TobiesThesis} has complexity \NtEXPTIME  when numbers are coded in unary.\footnote{When the procedure is improved by using anywhere pairwise blocking, the complexity will be \NEXPTIME.} 

By applying global caching~\cite{Pratt80,GoreN11}, together with colleagues (Gor{\'e}, Sza{\l}as and Dunin-K\c{e}plicz) we have developed complexity-optimal (\EXPTIME) tableau decision procedures for several modal and description logics with \EXPTIME complexity~\cite{GoreNguyen05tab,GoreNguyenTab07,GoreNguyen07clima,NguyenSzalas09ICCCI,NguyenSzalas-KSE09,NguyenS10FI,NguyenS10TCCI,NguyenS11SL,dkns2011}. 
In~\cite{GoreNguyenTab07,NguyenSzalas-KSE09,NguyenS11SL,dkns2011} analytic cuts are used to deal with inverse roles and converse modal operators. As cuts may be not efficient in practice, Gor{\'e} and Widmann developed the first cut-free \EXPTIME tableau decision procedures, based on global state caching, for the DL \ALCI (for the case without ABoxes) \cite{GoreW09} and CPDL~\cite{GoreW10}. 
We have applied global state caching for the modal logic \CPDLreg~\cite{nCPDLreg-long} and the DLs \ALCI~\cite{Nguyen-ALCI} and \SHI~\cite{SHI-ICCCI} for the case with ABoxes to obtain cut-free \EXPTIME tableau decision procedures for them.

As \SHIQ is a well-known and useful DL, developing a complexity-optimal tableau decision procedure for checking satisfiability of a knowledge base in \SHIQ is very desirable. The lack of such a procedure in a relatively long period up to the current work suggests that the problem is not easy at all. For example, in~\cite{GoreW09} Gor{\'e} and Widmann wrote {\em ``The extension to role hierarchies and transitive roles should not present difficulties, but the extension to include nominals and qualified number restrictions is not obvious to us.''.}

In this paper we give the first \EXPTIME tableau decision procedure for checking satisfiability of a knowledge base in the DL \SHIQ when numbers are coded in unary.\footnote{This corrects the claim that the complexity is measured using binary representation for numbers, which appeared in the previous versions of the current paper and the conference paper~\cite{SHIQ-ICCSAMA}.} Our procedure is based on global state caching and integer linear feasibility checking. 
Both of them are essential for our procedure in order to have the optimal complexity. Global state caching can be replaced by global caching plus cuts, which still guarantee the optimal complexity. However, we choose global state caching to avoid inefficient cuts although it makes our procedure more complicated. 

We are aware of only Farsiniamarj's master thesis~\cite{Farsiniamarj08} (written under supervision of Haarslev) as a work that directly combines tableaux with integer programming. Some related works~\cite{OK99,HM01,HTM01} are discussed in that thesis and we refer the reader there for details. In~\cite{Farsiniamarj08} Farsiniamarj presented a hybrid tableau calculus for the DL \SHQ (a restricted version of \SHIQ without inverse roles), which is based on the so-called atomic decomposition technique and combines arithmetic and logical reasoning. He stated that {\em ``The most prominent feature of this hybrid calculus is that it reduces reasoning about qualified number restrictions to integer linear programming. [\ldots] In comparison to other standard description logic reasoners, our approach demonstrates an overall runtime improvement of several orders of magnitude.''}~\cite{Farsiniamarj08}. On the complexity matter, Farsiniamarj wrote {\em ``the complexity of the algorithm seems to be characterized by a double-exponential function''} \cite[page 79]{Farsiniamarj08}. That is, his algorithm for \SHQ is not complexity-optimal. 
 
Combining global (state) caching and integer linear feasibility checking to obtain a complexity-optimal (\EXPTIME) tableau decision procedure for \SHIQ is not a simple exercise as one might think. Integer linear programming using the ``branch and bound'' method~\cite{BranchAndBound} is known since 1960, Pratt's idea on global caching for PDL~\cite{Pratt80} is known since 1980 (the term ``global caching'' is absent in~\cite{Pratt80}), a formal formulation of global caching for some modal and description logics~\cite{GoreNguyenTab07,GoreN11} is known since 2007 (if not earlier~\cite{GoreNguyen05tab}), but the first complexity-optimal (\EXPTIME) tableau decision procedure for \SHIQ for the case when numbers are coded in unary is only proposed now, in the current paper. As for experts on tableaux with global caching, it took us a few months to make the idea of the combination precise and one year since our paper on \SHI~\cite{SHI-ICCCI} to the current paper (on \SHIQ).\footnote{Of course, we were busy with teaching duties and other papers.} Our method of exploiting integer linear programming is different from Farsiniamarj's one~\cite{Farsiniamarj08}: in order to avoid nondeterminism, we only check feasibility but do not find and use solutions of the considered set of constraints as in~\cite{Farsiniamarj08}. 

The rest of this paper is structured as follows. In Section~\ref{section: prel} we recall the notation and semantics of \SHIQ and introduce an integer feasibility problem for DLs. In Section~\ref{section: calculus} we present our tableau decision procedure for \SHIQ. We start by introducing data structures and the tableau framework. We then present illustrative examples. After that we specify our tableau rules in detail. At the end of the section we present theoretical results about our tableau decision procedure. Those results are proved in Section~\ref{section: proofs}. We conclude this work in Section~\ref{section: conc}. 


\section{Preliminaries}
\label{section: prel}

\subsection{Notation and Semantics of \SHIQ}

Our language uses a finite set $\CN$ of {\em concept names}, a finite set $\RN$ of {\em role names}, and a finite set $\IN$ of {\em individual names}. 
We use letters like $A$ and $B$ for concept names, $r$ and $s$ for role names, and  $a$ and $b$ for individual names. We refer to $A$ and $B$ also as {\em atomic concepts}, and to $a$ and $b$ as {\em individuals}. 

For $r \in \RN$, let $r^-$ be a new symbol, called the {\em inverse} of $r$. 
Let $\RN^{-} \defeq \{r^{-} \mid$ $r \in \RN\}$ be the set of {\em inverse roles}. For $r \in \RN$, define $(r^-)^- \defeq r$. A {\em role} is any member of $\RN \cup \RN^{-}$. We use letters like $R$ and $S$ to denote roles.  

An (\SHIQ) {\em RBox} $\mR$ is a finite set of role axioms of the form $R \sqsubseteq S$ or $R \circ R \sqsubseteq R$. 
By $\Ext(\mR)$ we denote the least extension of $\mR$ such that:
\begin{itemize}
\item $R \sqsubseteq R \in \Ext(\mR)$ for any role $R$
\item if $R \sqsubseteq S \in \Ext(\mR)$ then $R^- \sqsubseteq S^- \in \Ext(\mR)$
\item if $R \circ R \sqsubseteq R \in \Ext(\mR)$ then $R^- \circ R^- \sqsubseteq R^- \in \Ext(\mR)$
\item if $R \sqsubseteq S \in \Ext(\mR)$ and $S \sqsubseteq T \in \Ext(\mR)$ then $R \sqsubseteq T \in \Ext(\mR)$.
\end{itemize}

By $R \sqsubseteq_\mR S$ we mean $R \sqsubseteq S \in \Ext(\mR)$, and by $\Trans{R}$ we mean $(R \circ R \sqsubseteq R) \in \Ext(\mR)$. If $R \sqsubseteq_\mR S$ then $R$ is a~{\em subrole} of $S$ (w.r.t.~$\mR$).
If $\Trans{R}$ then $R$ is a~{\em transitive role} (w.r.t.~$\mR$). 
A~role is {\em simple} (w.r.t.~$\mR$) if it is neither transitive nor has any transitive subroles (w.r.t.~$\mR$).

{\em Concepts} in \SHIQ\ are formed using the following BNF grammar, where $n$ is a nonnegative integer and $S$ is a simple role:
\[
C, D ::=
        \top
        \mid \bot
        \mid A
        \mid \lnot C
        \mid C \mand D
        \mid C \mor D
        \mid \E R.C
        \mid \V R.C
	\mid\ \geq\!n\,S.C
	\mid\ \leq\!n\,S.C
\]

We use letters like $C$ and $D$ to denote arbitrary concepts.

A {\em TBox} is a~finite set of axioms of the form $C
\sqsubseteq D$ or $C \doteq D$.

An {\em ABox} is a~finite set of {\em assertions} of the form $a\!:\!C$, $R(a,b)$ or $a \not\doteq b$.

A {\em knowledge base} in \SHIQ\ is a tuple $\tuple{\mR,\mT,\mA}$, where $\mR$ is an RBox, $\mT$ is a~TBox and $\mA$ is an ABox.

We say that a role $S$ is {\em numeric} w.r.t. a knowledge base $\KB = \tuple{\mR,\mT,\mA}$ if: 
\begin{itemize}
\item it is simple w.r.t.~$\mR$ and occurs in a concept $\geq\!n\,S.C$ or $\leq\!n\,S.C$ in $\KB$, or 
\item $S \sqsubseteq_\mR R$ and $R$ is numeric w.r.t.~$\KB$, or
\item $S^-$ is numeric w.r.t.~$\KB$. 
\end{itemize}
We will simply call such an $S$ a numeric role when $\KB$ is clear from the context. 

A {\em formula} is defined to be either a concept or an ABox assertion. 
We use letters like $\varphi$, $\psi$ and $\xi$ to denote formulas.
%
Let $\Null\!:\!C$ stand for $C$. We use $\alpha$ to denote either an individual or $\Null$. Thus, $\alpha\!:\!C$ is a formula of the form $a\!:\!C$ or $\Null\!:\!C$ (which means $C$). 

An {\em interpretation} $\mI = \langle \Delta^\mI, \cdot^\mI
\rangle$ consists of a~non-empty set $\Delta^\mI$, called the {\em
domain} of $\mI$, and a~function $\cdot^\mI$, called the {\em
interpretation function} of $\mI$, that maps each concept name
$A$ to a~subset $A^\mI$ of $\Delta^\mI$, each role name
$r$ to a~binary relation $r^\mI$ on $\Delta^\mI$, and each individual name $a$ to an element $a^\mI \in \Delta^\mI$.
The interpretation function is extended to inverse roles and complex concepts as
follows, where $\sharp Z$ denotes the cardinality of a set~$Z$: 
\[
\begin{array}{c}
(r^-)^\mI = \{ \tuple{x,y} \mid \tuple{y,x} \in r^\mI\} 
\quad\quad\quad\quad
	\top^\mI = \Delta^\mI
\quad\quad\quad\quad
	\bot^\mI = \emptyset
\\[1.0ex]
(\lnot C)^\mI = \Delta^\mI - C^\mI
\quad\quad 
	(C \mand D)^\mI = C^\mI \cap D^\mI
\quad\quad 
	(C \mor D)^\mI = C^\mI \cup D^\mI
\\[1.0ex]
(\E R.C)^\mI =
        \big\{ x \in \Delta^\mI \mid \E y\big[ \tuple{x,y} \in R^\mI
                 \textrm{ and } y \in C^\mI\big]\big\}
\\[1.0ex]
(\V R.C)^\mI =
        \big\{ x \in \Delta^\mI \mid \V y\big[ \tuple{x,y} \in R^\mI
                       \textrm{ implies } y \in C^\mI\big]\big\}
\\[1.0ex]
(\geq\!n\,R.C)^\mI =
        \big\{ x \in \Delta^\mI \mid \sharp\{y \mid \tuple{x,y} \in R^\mI
                 \textrm{ and } y \in C^\mI \} \geq n \big\}
\\[1.0ex]
(\leq\!n\,R.C)^\mI =
        \big\{ x \in \Delta^\mI \mid \sharp\{y \mid \tuple{x,y} \in R^\mI
                 \textrm{ and } y \in C^\mI \} \leq n \big\}.
\end{array}
\]

Note that $(r^-)^\mI = (r^\mI)^{-1}$ and this is compatible with $(r^-)^- = r$.

For a set $\Gamma$ of concepts, define $\Gamma^\mI = \{x \in \Delta^\mI \mid x \in C^\mI \textrm{ for all } C \in \Gamma\}$.

The relational composition of binary relations $\mathrm{R}_1$, $\mathrm{R}_2$ is denoted by $\mathrm{R}_1 \circ \mathrm{R}_2$.

An interpretation $\mI$ is a~{\em model of an RBox} $\mR$ if for every axiom $R \sqsubseteq S$ (resp.\ $R \circ R \sqsubseteq R$) of $\mR$, we have that $R^\mI \subseteq S^\mI$ (resp.\ $R^\mI \circ R^\mI \subseteq R^\mI$). 
Note that if $\mI$ is a model of $\mR$ then it is also a model of $\Ext(\mR)$.

An interpretation $\mI$ is a~{\em model of a~TBox} $\mT$ if for
every axiom $C \sqsubseteq D$ (resp.\ $C \doteq D$) of $\mT$,
we have that $C^\mI \subseteq D^\mI$ (resp.\ $C^\mI = D^\mI$).

An interpretation $\mI$ is a~{\em model of an ABox} $\mA$ if for every assertion $a\!:\!C$ (resp.\ $R(a,b)$ or $a \not\doteq b$) of $\mA$, we have that $a^\mI \in C^\mI$ (resp.\ $\tuple{a^\mI,b^\mI} \in R^\mI$ or $a^\mI \neq b^\mI$).

An interpretation $\mI$ is a~{\em model of a~knowledge base}
$\tuple{\mR,\mT,\mA}$ if $\mI$ is a~model of all $\mR$, $\mT$ and~$\mA$.
A knowledge base $\tuple{\mR,\mT,\mA}$ is {\em satisfiable} if it has
a~model.

An interpretation $\mI$ {\em satisfies} a concept $C$ (resp.\ a set $X$ of concepts) if $C^\mI \neq \emptyset$ (resp.\ $X^\mI \neq \emptyset$). 
It {\em validates} a concept $C$ if $C^\mI = \Delta^\mI$. 
A set $X$ of concepts is {\em satisfiable w.r.t.\ an RBox $\mR$ and a TBox $\mT$} if there exists a model of $\mR$ and $\mT$ that satisfies $X$. 
For $X = \mA \cup \mA'$, where $\mA$ is an ABox and $\mA'$ is a set of assertions of the form $\lnot R(a,b)$ or $a \doteq b$, we say that $X$ is {\em satisfiable w.r.t.\ an RBox $\mR$ and a TBox $\mT$} if there exists a~model $\mI$ of $\tuple{\mR,\mT,\mA}$ such that: $\tuple{a^\mI,b^\mI} \notin R^\mI$ for all $(\lnot R(a,b)) \in \mA'$, and $a^\mI = b^\mI$ for all $(a \doteq b) \in \mA'$. In that case, we also say that $\mI$ is a model of $\tuple{\mR,\mT,X}$. 


\subsection{An Integer Feasibility Problem for Description Logics}

For dealing with number restrictions in \SHIQ, we consider the following integer feasibility problem:
\[
\begin{array}{c}
\displaystyle \sum_{j=1}^m a_{i,j}\cdot x_j \,\bowtie_i\, b_i,\  
	\textrm{ for } 1 \leq i \leq l;\\[1.5ex]
x_j \geq 0,\ 
	\textrm{ for } 1 \leq j \leq m;
\end{array}
\]
where each $a_{i,j}$ is either 0 or 1, each $x_j$ is a variable standing for a natural number, each $\bowtie_i$ is either $\leq$ or $\geq$, each $b_i$ is a natural number encoded by using no more than $n$ bits (i.e., $b_i \leq 2^n$). We call this an $\IFDL{l,m,n}$-problem (a~problem of Linear Integer Feasibility for Description Logics with size specified by $l,m,n$). The problem is {\em feasible} if it has a solution (i.e., values for the variables $x_j$, $1 \leq j \leq l$, that are natural numbers satisfying the constraints), and is {\em infeasible} otherwise. By solving an $\IFDL{l,m,n}$-problem we mean checking its feasibility. 

It is known from linear programming that, if the variables $x_j$ are not required to be natural numbers but can be real numbers then the above feasibility problem can be solved in polynomial time in $l$, $m$ and $n$. The general integer linear optimization problem is known to be NP-hard.\footnote{\url{http://en.wikipedia.org/wiki/Integer_programming}}

To solve an integer feasibility problem, we propose to use the decomposition technique and the ``branch and bound'' method~\cite{BranchAndBound}. One can first analyze dependencies between the variables and the constraints to decompose the problem into smaller independent subproblems, then solve the subproblems that are trivial, and after that apply the ``branch and bound'' method~\cite{BranchAndBound} to the remaining subproblems. 

The above mentioned approach may not guarantee that a given $\IFDL{l,m,n}$-problem is solved in exponential time in~$n$. We give below an estimation of the upper bound for the complexity for some specific cases, using another approach. 

\newcommand{\LemmaIFDL}{Every $\IFDL{l,m,n}$-problem such that 
$l \leq n$, 
$m$ is (at most) exponential in $n$, 
and $b_i \leq n$ for all $1 \leq i \leq l$ 
can be solved in (at most) exponential time in~$n$.}
\begin{lemma}\label{lemma: IFDL}
\LemmaIFDL
\end{lemma}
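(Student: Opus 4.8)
\medskip
\noindent\textbf{Proof proposal.}
The plan is to read an $\IFDL{l,m,n}$-problem as an integer linear feasibility question and to decide it by dynamic programming over the variables $x_1,\dots,x_m$, keeping about a partial assignment to $x_1,\dots,x_k$ only the vector of the $l$ left-hand-side values, each \emph{capped} so that it stays small. Write $S_i=\{j : a_{i,j}=1\}$ for the support of constraint $i$, and for a partial assignment put $\sigma_i=\sum_{j\le k,\ j\in S_i}x_j$. If $\bowtie_i$ is $\le$ we record $\sigma_i$, but only while $\sigma_i\le b_i$ (once $\sigma_i$ exceeds $b_i$ the partial assignment is dead and is discarded); if $\bowtie_i$ is $\ge$ we record $\min(\sigma_i,b_i)$, since partial sums only grow and once $\sigma_i$ reaches $b_i$ that constraint is permanently satisfied and any larger value behaves identically. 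Thus a live ``state'' is a vector $(s_1,\dots,s_l)$ with $0\le s_i\le b_i$.

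First I would note that variables never need large values: given any feasible assignment $x^\ast$, replacing each $x_j^\ast$ by $\min(x_j^\ast,\max_i b_i)$ is still feasible (a $\le$-constraint can only benefit; a $\ge$-constraint is either unchanged on its support or already has a single summand $\ge b_i$), so it suffices to consider $x_j\in\{0,1,\dots,2^n\}$. The algorithm then starts from the all-zero state and, for $k=1,\dots,m$, computes the set $R_k$ of reachable live states: for each state of $R_{k-1}$ and each value $x_k\in\{0,\dots,2^n\}$ it updates every component $s_i$ with $k\in S_i$ and keeps the result iff no $\le$-component exceeds its $b_i$. A routine induction on $k$ shows $R_k$ is exactly the set of capped profiles of \emph{extendable} partial assignments to $x_1,\dots,x_k$; hence the problem is feasible iff some state of $R_m$ has $s_i=b_i$ for every $\ge$-constraint (the $\le$-constraints hold by liveness). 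The number of states is at most $\prod_{i=1}^l(b_i+1)\le(2^n+1)^l$, there are $m$ stages, each state has at most $2^n+1$ successors, and each transition touches at most $l$ components of $n$ bits, so the total time is $O\!\big(m\cdot(2^n+1)^{l+1}\cdot l\cdot n\big)$, which is exponential in $n$ under the stated size assumptions (in our applications $l$, the number of number restrictions in the knowledge base, is in fact polynomial in the input size, so the factor $(2^n+1)^l=2^{O(nl)}$ stays single-exponential).

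The two accounting steps are exactly where I expect the real work to sit, and the second is the genuine obstacle. Justifying that capping the partial sums (and capping the $x_j$'s at $2^n$) loses no feasibility is the argument sketched above and is largely bookkeeping. Keeping $\prod_i(b_i+1)$ single-exponential in $n$ is the subtle point: it is what ties the running time to the number of constraints rather than only to their magnitudes, so the analysis really needs $l$ to be at most polynomial in $n$. If one insisted on allowing genuinely exponentially many constraints, the crude product bound on the state space collapses and a different idea would be required — e.g.\ a Carath\'eodory-type bound showing that a feasible instance has an integer solution with support bounded in terms of $l$ and the bit-length, followed by a search over small supports — which is where I would expect the argument to become delicate.
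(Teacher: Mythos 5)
Your dynamic program is sound as an algorithm, and your preliminary reduction (capping every $x_j$ at $\max_i b_i \le 2^n$ preserves feasibility) is correct, but the proof does not establish the lemma as stated. The hypothesis explicitly allows $l$ to be exponential in $n$, and your state space has size up to $\prod_{i=1}^{l}(b_i+1) \le (2^n+1)^l = 2^{O(nl)}$, which is \emph{doubly} exponential in $n$ when $l$ is exponential in $n$; the sentence claiming the running time ``is exponential in $n$ under the stated size assumptions'' is therefore false, and the parenthetical retreat to ``in our applications $l$ is polynomial'' concedes the point rather than fixing it. You have proved a strictly weaker lemma (with $l$ polynomial in $n$), and the appeal to how the lemma happens to be invoked elsewhere is outside a self-contained proof of the statement. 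You correctly identify this as the genuine obstacle, but you do not supply the missing idea.

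The paper closes exactly this gap with a different algorithm whose cost is accounted per variable rather than per constraint. After noting, much as you do, that every variable occurring in some $\le$-constraint is bounded by the minimum of the corresponding $b_i \le 2^n$, it deletes every constraint containing a variable that occurs in no $\le$-constraint (such a constraint must be a $\ge$-constraint and is always satisfiable by inflating that free variable), so that afterwards \emph{all} $m$ variables range over intervals of length at most $2^n$. It then runs branch and bound: each branching step splits one variable's interval into two disjoint pieces whose lengths sum to one less than the parent's, so the search tree has $O(m\cdot 2^n)$ nodes in total, and each node costs time polynomial in $l+m$. This is single-exponential in $n$ even when both $l$ and $m$ are exponential in $n$. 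To repair your write-up you would either have to adopt such a variable-interval accounting in place of the product-of-constraints state space, or weaken the lemma's hypothesis to $l$ polynomial in $n$ and verify that this still supports the use made of it in Lemma~\ref{lemma: Complexity} (it does, since there $l=\Size$), but the latter changes the statement being proved.
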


\begin{proof}
Consider the following nondeterministic procedure:
\begin{enumerate}
\item initialize $c_{i,j} := 0$ for each $1 \leq i \leq l$ and $1 \leq j \leq m$ such that $a_{i,j} = 1$
\item for each $i$ from 1 to $l$ do
  \begin{itemize}
  \item[] for each $k$ from 1 to $b_i$ do
     \begin{itemize}
     \item[] choose some $j$ among $1, \ldots, m$ such that $a_{i,j} = 1$ and set $c_{i,j} := c_{i,j} + 1$
     \end{itemize}
  \end{itemize}
\item if the set of constraints $\{x_j \,\bowtie_i\, c_{i,j} \mid 1 \leq i \leq l, 1 \leq j \leq m, a_{i,j} = 1\}$ is feasible then return ``yes'', else return ``no''.
\end{enumerate}

Observe that the considered $\IFDL{l,m,n}$-problem is feasible iff there exists a run of the above procedure that returns ``yes''. Since $b_i \leq n$ for all $1 \leq i \leq l$, there are no more than $m^{l\cdot n}$ possible runs of the above procedure. All the steps of the procedure can be executed in time $O(l\cdot m \cdot n)$. Since $l \leq n$ and $m$ is (at most) exponential in $n$, we conclude that the considered $\IFDL{l,m,n}$-problem can deterministically be solved in (at most) exponential time in~$n$.
\myEnd
\end{proof}

The following lemma is more general than the above lemma. 

\begin{lemma}\label{lemma: IFDL2}
Every $\IFDL{l,m,n}$-problem satisfying the following properties can be solved in (at most) exponential time in~$n\,$: 
\begin{itemize}
\item $l \leq n$, $m$ is (at most) exponential in $n$, 
\item and 
   \begin{itemize}
   \item either $b_i \leq n$ for all $1 \leq i \leq l$ such that $\bowtie_i$ is $\leq$
   \item or $b_i \leq n$ for all $1 \leq i \leq l$ such that $\bowtie_i$ is $\geq$.
   \end{itemize}
\end{itemize}
\end{lemma}

\begin{proof}
Suppose $l \leq n$, $m$ is (at most) exponential in $n$, and $b_i \leq n$ for all $1 \leq i \leq l$ such that $\bowtie_i$ is $\leq$. The other case is similar and omitted. Consider the following nondeterministic procedure:
\begin{enumerate}
\item let $J = \{ j \mid 1 \leq j \leq m$ and there exists $1 \leq i \leq l$ such that $\bowtie_i$ is $\leq$ and $a_{i,j} = 1\}$
\item for each $1 \leq i \leq l$ and $1 \leq j \leq m$ such that $\bowtie_i$ is $\leq$ and $a_{i,j} = 1$, set $c_{i,j} := 0$
\item for each $i$ from 1 to $l$ such that $\bowtie_i$ is $\leq$, do
  \begin{itemize}
  \item[] for each $k$ from 1 to $b_i$ do
     \begin{itemize}
     \item[] choose some $j$ among $1, \ldots, m$ such that $a_{i,j} = 1$ and set $c_{i,j} := c_{i,j} + 1$
     \end{itemize}
  \end{itemize}
\item for each $j \in J$ do
  \begin{itemize}
  \item[] $d_j := \min\{c_{i,j} \mid 1 \leq i \leq l,\; \bowtie_i$ is $\leq$ and $a_{i,j} = 1\}$
  \end{itemize}
\item if the set of constraints $\{\sum_{j=1}^m a_{i,j}\cdot x_j \geq b_i \mid 1 \leq i \leq l,\; \bowtie_i$ is $\geq\} \cup \{x_j = d_j \mid j \in J\}$ is feasible then return ``yes'', else return ``no''.
\end{enumerate}

Observe that the considered $\IFDL{l,m,n}$-problem is feasible iff there exists a run of the above procedure that returns ``yes''. Under the assumptions of the lemma, there are no more than $m^{l\cdot n}$ possible runs of the above procedure. All the steps of the procedure can be executed in time $O(l\cdot m \cdot n)$. Since $l \leq n$ and $m$ is (at most) exponential in $n$, we conclude that the considered $\IFDL{l,m,n}$-problem can deterministically be solved in (at most) exponential time in~$n$.
\myEnd
\end{proof}


\section{\EXPTIME Tableaux for \SHIQ}
\label{section: calculus}

\subsection{Data Structures and the Tableau Framework}

We assume that concepts and ABox assertions are represented in
negation normal form (NNF), where $\lnot$ occurs only directly
before atomic concepts.\footnote{Every formula can be
transformed to an equivalent formula in NNF in polynomial time.} 
We use $\overline{C}$ to denote the NNF of $\lnot C$, and for $\varphi = (a\!:\!C)$, we use $\ovl{\varphi}$ to denote $a\!:\!\ovl{C}$.
For simplicity, we treat axioms
of $\mT$ as concepts representing global assumptions: an axiom
$C \sqsubseteq D$ is treated as $\overline{C} \mor D$, while an
axiom $C \doteq D$ is treated as $(\overline{C} \mor D) \mand
(\overline{D} \mor C)$. That is, we assume that $\mT$ consists
of concepts in NNF. A~concept $C \in \mT$ can be thought of as an axiom $\top \sqsubseteq C$. Thus, an interpretation $\mI$ is a~model of
$\mT$ iff $\mI$ validates every concept $C \in \mT$. 
%

We define tableaux as rooted graphs. Such a graph is a tuple $G = \tuple{V,E,\nu}$, where $V$ is a set of nodes, $E \subseteq V \times V$ is a set of edges, $\nu \in V$ is the root, and each node $v \in V$ has a number of attributes. If there is an edge $\tuple{v,w} \in E$ then we call $v$ a {\em predecessor} of $w$, and call $w$ a {\em successor} of~$v$. 
Attributes of tableau nodes are:
\begin{itemize}
\item $\Type(v) \in \{\State, \NonState\}$. If $\Type(v) = \State$ then we call $v$ a {\em state}, else we call $v$ a {\em non-state} (or an {\em internal} node). If $\Type(v) = \State$ and $\tuple{v,w} \in E$ then $\Type(w) = \NonState$. If $\tuple{v,w} \in E$ and $\Type(w) = \State$ then $v$ has only $w$ as a successor.

\item $\SType(v) \in \{\Complex, \Simple\}$ is called the subtype of $v$. If $\SType(v) = \Complex$ then we call $v$ a {\em complex node}, else we call $v$ a {\em simple node}. The graph never contains edges from a simple node to a complex node. If $\tuple{v,w}$ is an edge from a complex node $v$ to a simple node $w$ then $\Type(v) = \State$ and $\Type(w) = \NonState$. If $\Type(v) = \State$ and $\tuple{v,w} \in E$ then $\SType(w) = \Simple$. The root of the graph is a complex node. 

\item $\Status(v) \in \{\Unexpanded$, $\PExpanded$, $\FExpanded$, $\Incomplete$, $\Unsat$, $\Sat\}$, where $\PExpanded$ and $\FExpanded$ mean ``partially expanded'' and ``fully expanded'', respectively. $\Status(v)$ may be $\PExpanded$ or $\Incomplete$ only when $\Type(v) = \State$. The status of a node may only be changed:
  \begin{itemize}
  \item from $\Unexpanded$ to any other status; or 
  \item from $\PExpanded$ to $\FExpanded$, $\Incomplete$ or $\Unsat$; or  
  \item from $\FExpanded$ to $\Incomplete$, $\Unsat$ or $\Sat$. 
  \end{itemize}
The values $\Incomplete$, $\Unsat$ and $\Sat$ are thus ``final''. Roughly speaking, $\Unsat$ means ``unsatisfiable'' and $\Sat$ means ``satisfiable'' in a certain sense. 

\item $\Label(v)$ is a finite set of formulas, called the label of $v$. The label of a complex node consists of ABox assertions, while the label of a simple node consists of concepts.

\item $\RFormulas(v)$ is a finite set of formulas, called the set of reduced formulas of~$v$.

\item $\StatePred(v) \in V \cup \{\Null\}$ is called the state-predecessor of $v$. It is available only when $\Type(v) = \NonState$. 
If $v$ is a non-state and $G$ has no paths connecting a state to $v$ then $\StatePred(v) = \Null$. Otherwise, $G$ has exactly one state $u$ that is connected to $v$ via a path not containing any other states, and we have that $\StatePred(v) = u$.

\item $\AfterTransPred(v) \in V$ is called the after-transition-predecessor of~$v$. It is available only when $\Type(v) = \NonState$ and $\SType(v) = \Simple$. In that case, if $v_0 = \StatePred(v)$ ($\neq \Null$) then there is exactly one successor $v_1$ of $v_0$ such that every path connecting $v_0$ to $v$ must go through $v_1$, and we have that $\AfterTransPred(v) = v_1$. 
We define $\AfterTrans(v) = (\AfterTransPred(v) = v)$. 
If $\AfterTrans(v)$ holds then $v$ has exactly one predecessor $u$ and $u$ is a state. 

\item $\CELabel(v)$ is a tuple $\tuple{\CELabelT(v),\CELabelR(v),\CELabelI(v)}$ called the coming edge label of $v$. It is available only when $\AfterTrans(v)$ holds. In that case:
  \begin{itemize}
  \item $\CELabelT(v) \in \{\TUS,\CQF\}$ (the letter $T$ stands for ``type''), 
  \item $\CELabelR(v) \subseteq \RN \cup \RN^{-}$ (is a set of roles), 
  \item $\CELabelI(v) \in \IN \cup \{\Null\}$ (for $u = \StatePred(v)$ being the only predecessor of~$v$, if $\SType(u) = \Complex$ then $\CELabelI(v)$ is an individual, else $\CELabelI(v) = \Null$).
  \end{itemize}

\item $\FmlsRC(v)$ is a set of formulas, called the set of formulas required by converse (inverse) for~$v$. It is available only when $\Type(v) = \State$. 

\item $\FmlFB(v)$ is a formula, called the formula for branching on at $v$. It is available only when $\Type(v) = \State$ and $\Status(v) = \Incomplete$. 

\item $\ILConstraints(v)$ is a set of integer linear constraints. It is available only when $\Type(v) = \State$. The constraints use variables $x_w$ indexed by successors $w$ of $v$ with $\CELabelT(w) = \CQF$. Such a variable $x_w$ specifies how many copies of $w$ will be used as successors of~$v$. 
\end{itemize}

We will use also new concept constructors $\preceq\!n\,R.C$ and $\succeq\!n\,R.C$, where $R$ is a numeric role. The difference between $\preceq\!n\,R.C$ and $\leq\!n\,R.C$ is that, for checking $\preceq\!n\,R.C$, we do not have to look to predecessors of the node. The aim of $\succeq\!n\,R.C$ is similar. We use $\preceq\!n\,R.C$ and $\succeq\!n\,R.C$ only as syntactic representations of some expressions, and do not provide semantics for them. 
We define 
\[
\aligned \FullLabel(v) =\; & \Label(v) \cup \RFormulas(v)\, -\\
& \{\textrm{formulas of the form } \alpha\!:(\preceq\!n\,R.C) \textrm{ or } \alpha\!:(\succeq\!n\,R.C)\}.\endaligned
\]

By the {\em local graph} of a state $v$ we mean the subgraph of $G$ consisting of all the paths starting from $v$ and not containing any other states. 
Similarly, by the local graph of a non-state $v$ we mean the subgraph of $G$ consisting of all the paths starting from $v$ and not containing any states.

\begin{remark}
We give here a further description/intuition about the structure of $G$:
\begin{itemize}
\item If $u$ is a state of $G$ and $v_1,\ldots,v_k$ are all successors of $u$ then: 
  \begin{itemize}
  \item the local graph of each $v_i$ is a directed acyclic graph,
  \item if $i,j \in \{1,\ldots,k\}$ and $i \neq j$ then the local graphs of $v_i$ and $v_j$ are disjoint,
  \item the local graph of $u$ is a graph rooted at $u$ and consisting of the edges from $u$ to $v_1,\ldots,v_k$ and the local graphs of $v_1,\ldots,v_k$,
  \item if $w$ is a node in the local graph of some $v_i$ then $\StatePred(w) = u$ and $\AfterTransPred(w) = v_i$.
  \end{itemize}

\item If $u$ is a state of $G$ then:
  \begin{itemize}
  \item each edge from outside the local graph of $u$ to the local graph of $u$ must end at~$u$,
  \item each edge outgoing from the local graph of $u$ must start from a non-state and is the only outgoing edge of that non-state.
  \end{itemize}

\item $G$ consists of:
  \begin{itemize}
  \item the local graph of the root $\nu$,
  \item the local graphs of states,
  \item edges coming to states.
  \end{itemize}

\item Each complex node of $G$ is like an ABox (more formally: its label is an ABox), which can be treated as a graph whose vertices are named individuals. On the other hand, a simple node of $G$ stands for an unnamed individual. If there is an edge from a simple non-state $v$ to (a simple node) $w$ then $v$ and $w$ stand for the same unnamed individual. An edge from a complex node $v$ to a simple node $w$ with $\CELabelI(w) = a$ can be treated as an edge from the named individual $a$ (an inner node of the graph representing $v$) to the unnamed individual corresponding to $w$, and that edge is via the roles from $\CELabelR(w)$.

\item $G$ consists of two layers: the layer of complex nodes and the layer of simple nodes. The former layer consists of the local graph of the root $\nu$ together with a number of complex states and edges coming to them. The edges from the layer of complex nodes to the layer of simple nodes are exactly the edges outgoing from those complex states. There are no edges from the layer of simple nodes to the layer of complex nodes. 
\koniec
\end{itemize}
\end{remark}

We apply global state caching: if $v_1$ and $v_2$ are different states then $\Label(v_1) \neq \Label(v_2)$. 
If $v$ is a non-state such that $\AfterTrans(v)$ holds then we also apply global caching for the local graph of $v$: if $w_1$ and $w_2$ are different nodes of the local graph of $v$ then $\Label(w_1) \neq \Label(w_2)$. 
Creation of a new node or a new edge is done by Procedures $\ConToSucc$ (connect to a successor) or $\NewSucc$ (new successor) given on page~\pageref{proc: NewSucc}. These procedures create a connection from a node $v$ given as the first parameter to a node with attributes $\Type$, $\SType$, $\Label$, $\RFormulas$, $\CELabel$ specified by the remaining parameters. The difference between these procedures is that $\NewSucc$ always creates a new node and a new connection, while $\ConToSucc$ first checks whether there exists a node that can be used as a proxy for the successor to be created. $\NewSucc$ is called outside $\ConToSucc$ only when $v$ (the first parameter) is a state. 


\begin{figure*}[ht!]
\begin{function}[H]
\caption{NewSucc($v, type, sType, label, rFmls, ceLabel$)\label{proc: NewSucc}}
\GlobalData{a rooted graph $\tuple{V,E,\nu}$.}
\Purpose{create a new successor for $v$.}
\LinesNumberedHidden
create a new node $w$,\ \ 
$V := V \cup \{w\}$,\ \ 
\lIf{$v \neq \Null$}{$E := E \cup \{\tuple{v,w}\}$}\;

$\Type(w) := type$,\ 
$\SType(w) := sType$,\ 
$\Status(w) := \Unexpanded$\; 
$\Label(w) := label$,\ 
$\RFormulas(w) := rFmls$\; 

\uIf{$type = \NonState$}{
   \lIf{$v = \Null$ or $\Type(v) = \State$}{$\StatePred(w) := v$,\ \ 
	$\AfterTransPred(w) := w$\\
   }
   \lElse{$\StatePred(w) := \StatePred(v)$,\ \ 
	$\AfterTransPred(w) := \AfterTransPred(v)$}\;
   \lIf{$\Type(v) = \State$}{$\CELabel(w) := ceLabel$}\;
}
\lElse{$\FmlsRC(w) := \emptyset$,\ \ 
   $\ILConstraints(w) := \emptyset$}\;

\Return{$w$}
\end{function}

\medskip

\begin{function}[H]
\caption{FindProxy($type, sType, v_1, label$)\label{proc: FindProxy}}
\GlobalData{a rooted graph $\tuple{V,E,\nu}$.}
\LinesNumberedHidden

\lIf{$type = \State$}{$W := V$}
\lElse{$W := $ the nodes of the local graph of $v_1$}\;

\lIf{there exists $w \in W$ such that $\Type(w) = type$ and $\SType(w) = sType$ and $\Label(w) = label$}{\Return $w$}
\lElse{\Return $\Null$}
\end{function}

\medskip

\begin{function}[H]
\caption{ConToSucc($v, type, sType, label, rFmls, ceLabel$)\label{proc: ConToSucc}}
\GlobalData{a rooted graph $\tuple{V,E,\nu}$.}
\Purpose{connect a non-state $v$ to a successor, which is created if necessary.}

\lIf{$type = \State$}{$v_1 := \Null$}
\lElse{$v_1 := \AfterTransPred(v)$}\;

$w := \FindProxy(type, sType, v_1, label)$\;
\lIf{$w \neq \Null$}{$E := E \cup \{\tuple{v,w}\}$,\ \ $\RFormulas(w) := \RFormulas(w) \cup rFmls$}\\
\lElse{$w := \NewSucc(v, type, sType, label, rFmls, ceLabel)$}\;

\Return{$w$}
\end{function}
\end{figure*}


From now on, let $\tuple{\mR,\mT,\mA}$ be a knowledge base in NNF of the logic \SHIQ, with $\mA \neq \emptyset$.$\,$\footnote{If $\mA$ is empty, we can add $a\!:\!\top$ to it, where $a$ is a special individual.} In this section we present a tableau calculus \CSHIQ for checking satisfiability of $\tuple{\mR,\mT,\mA}$.
A~\CSHIQ-tableau for $\tuple{\mR,\mT,\mA}$ is a rooted graph $G = \tuple{V,E,\nu}$ constructed as follows:

\subsubsection*{Initialization:} 

At the beginning, $V = \{\nu\}$, $E = \emptyset$, and $\nu$ is the root with $\Type(\nu) = \NonState$, $\SType(\nu) = \Complex$, $\Status(\nu) = \Unexpanded$, $\Label(\nu) = \mA \cup \{(a\!:\!C) \mid C \in \mT$ and $a$ is an individual occurring in $\mA\}$, 
$\StatePred(\nu) = \Null$, 
$\RFormulas(\nu) = \emptyset$. 

\subsubsection*{Rules' Priorities and Expansion Strategies:} 

The graph is then expanded by the following rules, which will be specified shortly:

  \begin{tabular}{cl}
  \UPS & rules for updating statuses of nodes,\\
  \US & a unary static expansion rule,\\
  \KCC & rules for keeping converse compatibility,\\ 
  \NUS & a non-unary static expansion rule,\\ 
  \FS & forming-state rules,\\ 
  \TP & a transitional partial-expansion rule,\\ 
  \TF & a transitional full-expansion rule.
  \end{tabular}

Each of the rules is parametrized by a node $v$. We say that a rule is {\em applicable} to $v$ if it can be applied to $v$ to make changes to the graph. 
The rules \UPS, \US, \KCC (in the first three items of the above list) have highest priorities, and are ordered decreasingly by priority. If none of them is applicable to any node, then choose a node $v$ with status $\Unexpanded$ or $\PExpanded$, choose the first rule applicable to $v$ among the rules in the last four items of the above list, and apply it to $v$. Any strategy can be used for choosing $v$, but it is worth choosing $v$ for expansion only when $v$ could affect the status of the root $\nu$ of the graph, i.e., only when there may exist a path from $\nu$ to $v$ without any node of status $\Incomplete$, $\Unsat$ or $\Sat$. 

Note that the priorities of the rules are specified by the order in the above list, but the rules \UPS, \US, \KCC are checked globally (technically, they are triggered immediately when possible), while the remaining rules are checked for a chosen node.

The construction of the graph ends when the root $\nu$ receives status $\Unsat$ or $\Sat$ or when no more changes that may affect the status of $\nu$ can be made\footnote{That is, ignoring nodes that are unreachable from $\nu$ via a path without nodes with status $\Incomplete$, $\Unsat$ or $\Sat$, no more changes can be made to the graph.}


\begin{remark}\label{remark: main intuition}
To give a deeper intuition behind the structure of a \CSHIQ-tableau $G$ constructed for a knowledge base $\KB = \tuple{\mR,\mT,\mA}$, assume that the knowledge base is satisfiable and let us briefly describe how a model $\mI$ for $\KB$ can be constructed from $G$. A precise description will be given in the proof of completeness of the calculus. As we will see, the root $\nu$ must have status different from $\Unsat$ and there must exist a complex state $v_k$ with status different from $\Unsat$. $\Label(v_k)$ is an ABox that forms the base for the constructed model $\mI$. Each named individual $a$ occurring in that ABox is used as an element of the domain of $\mI$; role assertions in that ABox specify edges between those elements in $\mI$; concept assertions in that ABox specify whether a named individual $a$ is an instance of a given atomic concept in $\mI$. The rest of $\mI$ consists of disjoint trees rooted at those named elements, which may be infinite. Each element of the domain of $\mI$ corresponds either to one of those named individuals or to a simple state of $G$ with status different from $\Unsat$. It is possible that only a number of simple states of $G$ are used for constructing $\mI$, and each simple state of $G$ may correspond to many elements of the domain of~$\mI$ (due to global state caching). 

Let $y$ be an element of the domain of $\mI$. How successors of $y$ in $\mI$ can be constructed? 

If $y$ corresponds to a named individual $a$ then let $W$ be the set of successors $w$ of $v_k$ such that: either $\CELabelT(w) = \TUS$ and we set $n_w = 1$; or $\CELabelT(w) = \CQF$, $\CELabelI(w) = a$ and the value of $x_w$ in a fixed solution for $\ILConstraints(v_k)$, denoted by $n_w$, is greater than~0. 

If $y$ corresponds to a simple state $u$ of $G$ then let $W$ be the set of successors $w$ of $u$ such that: either $\CELabelT(w) = \TUS$ and we set $n_w = 1$; or $\CELabelT(w) = \CQF$ and the value of $x_w$ in a fixed solution for $\ILConstraints(u)$, denoted by $n_w$, is greater than~0. 

For each $w \in W$, there must exist a path in $G$ starting from (the non-state) $w$, going through some or zero other non-states and ending at a state $w'$ with status different from $\Unsat$. For each $w \in W$, we create $n_w$ successors for $y$ (in $\mI$) that correspond to $w'$, using edges labeled by the roles from $\CELabelR(w)$. Roughly speaking, all nodes in the path from $w$ to $w'$ are stitched together to make a successor $z$ for $y$, which is then cloned $n_w$ times. The set of atomic concepts $A$ such a successor $z$ must be an instance of in $\mI$ (i.e., $z \in A^\mI$) consists of the atomic concepts in $\Label(w')$. 

Notice that non-states are like ``or''-nodes, while states are more sophisticated than ``and''-nodes. The transitional partial-expansion rule deals with non-numeric roles, while the transitional full-expansion rule deals with numeric roles.
\koniec
\end{remark}


\subsection{Illustrative Examples}

Before specifying the tableau rules in detail, we present simple examples to illustrate some ideas (but not all aspects) of our method. Despite that these examples refer to the tableau rules, we choose this place for presenting them because the examples are quite intuitive and the reader can catch the ideas of our method without knowing the detailed rules. He or she can always consult the rules in the next subsection. 

\newcommand{\myhline}{\\[0.4ex] \hline \\[-1.6ex]}

\begin{figure}
\begin{center}
\begin{tabular}{c}
\begin{scriptsize}
\begin{tabular}{c@{\extracolsep{10em}}c}
(a) & (b) \\
\\
\xymatrix{
*+[F]{\begin{tabular}{c}
	$\nu$
	\myhline
	$a:\!:\!\top$, $a\!:\!\E r.(A \mand \V s.\lnot A)$
      \end{tabular}}
\ar@{->}[d] 
\\
*+[F=]{\begin{tabular}{c}
	$v_1$
	\myhline
	$a:\!:\!\top$, $a\!:\!\E r.(A \mand \V s.\lnot A)$
      \end{tabular}}
\ar@{->}[d] 
\\
*+[F]{\begin{tabular}{c}
	$v_2$
	\myhline
	$A \mand \V s.\lnot A$,\\
	$\E r.(A \mand \V s.\lnot A)$
      \end{tabular}}
\ar@{->}[d]
\\
*+[F]{\begin{tabular}{c}
	$v_3$
	\myhline
	$A$, $\V s.\lnot A$,\\
	$\E r.(A \mand \V s.\lnot A)$
      \end{tabular}}
\ar@{->}[d]
\\
*+[F]{\begin{tabular}{c}
	$v_4$
	\myhline
	$A$, $\V s.\lnot A$,\\ 
	$\V r.\lnot A$, $\V r^-.\lnot A$,\\
	$\E r.(A \mand \V s.\lnot A)$
      \end{tabular}}
}
&
\xymatrix{
*+[F]{\begin{tabular}{c}
	$\nu$
	\myhline
	$a:\!:\!\top$, $a\!:\!\E r.(A \mand \V s.\lnot A)$
      \end{tabular}}
\ar@{->}[d] 
\\
*+[F=]{\begin{tabular}{c}
	$v_1$
	\myhline
	$a:\!:\!\top$, $a\!:\!\E r.(A \mand \V s.\lnot A)$\\
	$\Incomplete$\\
	$\FmlsRC$ : $a\!:\!\lnot A$, $a\!:\!\V s.\lnot A$
      \end{tabular}}
\ar@{->}[d] 
\\
*+[F]{\begin{tabular}{c}
	$v_2$
	\myhline
	$A \mand \V s.\lnot A$,\\
	$\E r.(A \mand \V s.\lnot A)$
      \end{tabular}}
\ar@{->}[d]
\\
*+[F]{\begin{tabular}{c}
	$v_3$
	\myhline
	$A$, $\V s.\lnot A$,\\
	$\E r.(A \mand \V s.\lnot A)$
      \end{tabular}}
\ar@{->}[d]
\\
*+[F]{\begin{tabular}{c}
	$v_4$
	\myhline
	$A$, $\V s.\lnot A$,\\
	$\V r.\lnot A$, $\V r^-.\lnot A$,\\
	$\E r.(A \mand \V s.\lnot A)$
      \end{tabular}}
}
\end{tabular}
\end{scriptsize}
\end{tabular}
\end{center}
\caption{\label{fig: example0}An illustration for Example~\ref{ex:andorgraph}: part I. The graph (a) is the graph constructed until checking converse compatibility for $v_1$. In each node, we display the formulas of the node's label. The root $\nu$ is expanded by the forming-state rule \FSb. The complex state $v_1$ is expanded by the transitional partial-expansion rule, with $\CELabelT(v_2) = \TUS$, $\CELabelR(v_2) = \{r,s,s^-\}$ ($s$ and $s^-$ are included because $r \sqsubseteq_\mR s$ and $r \sqsubseteq_\mR s^-$) and $\CELabelI(v_2) = a$. The simple non-states $v_2$ and $v_3$ are expanded by the unary static expansion rule (the concepts $\V r.\lnot A$ and $\V r^-.\lnot A$ are added to $\Label(v_4)$ because $\V s.\lnot A \in \Label(v_3)$, $r \sqsubseteq_\mR s$ and $r^- \sqsubseteq_\mR s$). The node $v_1$ is the only state. We have, for example, $\StatePred(v_4) = v_1$ and $\AfterTransPred(v_4) = v_2$. Checking converse compatibility for $v_1$ using $v_4$ (i.e., using the facts that $\{\V s.\lnot A, \V r^-.\lnot A\} \subset \Label(v_4)$, $\StatePred(v_4) = v_1$, $\AfterTransPred(v_4) = v_2$, $\CELabelT(v_2) = \TUS$, $\CELabelR(v_2) = \{r,s,s^-\}$, $\CELabelI(v_2) = a$, $r^- \sqsubseteq_\mR s$ and $\Trans{R}$ holds), $\Status(v_1)$ is set to $\Incomplete$ and $\FmlsRC(v_1)$ is set to $\{a\!:\!\lnot A, a\!:\!\V s.\lnot A\}$. This results in the graph~(b). The construction is then continued by re-expanding the node $\nu$. See Figure~\ref{fig: example0-II} for the continuation.} 
\end{figure}

\begin{figure}
\begin{center}
\begin{tabular}{c}
\begin{scriptsize}
\xymatrix{
*+[F]{\begin{tabular}{c}
	$\nu$
	\myhline
	$a:\!:\!\top$, $a\!:\!\E r.(A \mand \V s.\lnot A)$
      \end{tabular}}
\ar@{->}[dr] 
\\
*+[F=]{\begin{tabular}{c}
	$v_1$
	\myhline
	$a:\!:\!\top$, $a\!:\!\E r.(A \mand \V s.\lnot A)$\\
	$\Incomplete$\\
	$\FmlsRC$ : $a\!:\!\lnot A$, $a\!:\!\V s.\lnot A$
      \end{tabular}}
\ar@{->}[d] 
&
*+[F]{\begin{tabular}{c}
	$v_5$
	\myhline
	$a:\!:\!\top$, $a\!:\!\E r.(A \mand \V s.\lnot A)$,\\ 
	$a\!:\!\lnot A$, $a\!:\!\V s.\lnot A$
      \end{tabular}}
\ar@{->}[d] 
\\
*+[F]{\begin{tabular}{c}
	$v_2$
	\myhline
	$A \mand \V s.\lnot A$,\\
	$\E r.(A \mand \V s.\lnot A)$
      \end{tabular}}
\ar@{->}[d]
&
*+[F]{\begin{tabular}{c}
	$v_6$
	\myhline
	$a:\!:\!\top$, $a\!:\!\E r.(A \mand \V s.\lnot A)$,\\ 
	$a\!:\!\lnot A$, $a\!:\!\V s.\lnot A$,\\
	$a\!:\!\V r.\lnot A$, $a\!:\!\V r^-.\lnot A$
      \end{tabular}}
\ar@{->}[d]
\\
*+[F]{\begin{tabular}{c}
	$v_3$
	\myhline
	$A$, $\V s.\lnot A$,\\
	$\E r.(A \mand \V s.\lnot A)$
      \end{tabular}}
\ar@{->}[d]
&
*+[F=]{\begin{tabular}{c}
	$v_7$
	\myhline
	$a:\!:\!\top$, $a\!:\!\E r.(A \mand \V s.\lnot A)$,\\ 
	$a\!:\!\lnot A$, $a\!:\!\V s.\lnot A$,\\
	$a\!:\!\V r.\lnot A$, $a\!:\!\V r^-.\lnot A$
      \end{tabular}}
\ar@{->}[d]
\\
*+[F]{\begin{tabular}{c}
	$v_4$
	\myhline
	$A$, $\V s.\lnot A$,\\
	$\V r.\lnot A$, $\V r^-.\lnot A$,\\
	$\E r.(A \mand \V s.\lnot A)$
      \end{tabular}}
&
*+[F]{\begin{tabular}{c}
	$v_8$
	\myhline
	$A \mand \V s.\lnot A$,\\ 
	$\lnot A$, $\V s.\lnot A$,\\
	$\E r.(A \mand \V s.\lnot A)$
      \end{tabular}}
\ar@{->}[d]
\\
&
*+[F]{\begin{tabular}{c}
	$v_9$
	\myhline
	$A$, $\V s.\lnot A$, $\lnot A$,\\
	$\E r.(A \mand \V s.\lnot A)$\\
	$\Unsat$
      \end{tabular}}
}
\end{scriptsize}
\end{tabular}
\end{center}
\caption{\label{fig: example0-II}An illustration for Example~\ref{ex:andorgraph}: part II. This is a \CSHIQ-tableau for $\tuple{\mR,\mT,\mA}$. As in the part~I, in each node we display the formulas of the node's label. The root $\nu$ is re-expanded by deleting the edge $\tuple{\nu,v_1}$ and connecting $\nu$ to a new complex non-state $v_5$. The node $v_5$ is expanded using the unary static expansion rule (the assertions $a\!:\!\V r.\lnot A$ and $a\!:\!\V r^-.\lnot A$ are added to $\Label(v_6)$ because $a\!:\!\V s.\lnot A \in \Label(v_5)$, $r \sqsubseteq_\mR s$ and \mbox{$r^- \sqsubseteq_\mR s$}). The complex non-state $v_6$ is expanded using the forming-state rule \FSb. The complex state $v_7$ is expanded using the transitional partial-expansion rule, with $\CELabelT(v_8) = \TUS$, $\CELabelR(v_8) = \{r,s,s^-\}$ ($s$ and $s^-$ are included because $r \sqsubseteq_\mR s$ and $r \sqsubseteq_\mR s^-$) and $\CELabelI(v_8) = a$. The simple non-state $v_8$ is expanded using the unary static expansion rule. Since $\{A,\lnot A\} \subset \Label(v_9)$, the simple non-state $v_9$ receives status $\Unsat$. After that the nodes $v_8, \ldots, v_5$ and $\nu$ receive status $\Unsat$ in subsequent steps. The nodes $v_1$ and $v_7$ are the only states. 
} 
\end{figure}

\begin{example}\label{ex:andorgraph}
This example is based on an example in the long version of our paper~\cite{SHI-ICCCI} on \SHI. 
Let 
\begin{eqnarray*}
\mR & = & \{r \sqsubseteq s,\ \ r^- \sqsubseteq s,\ \ s \circ s \sqsubseteq s\}\\
\mT & = & \{\E r.(A \mand \V s.\lnot A)\}\\
\mA & = & \{ a\!:\!\top \}.
\end{eqnarray*}
In Figures~\ref{fig: example0} and \ref{fig: example0-II} we illustrate the construction of a \CSHIQ-tableau for the knowledge base $\tuple{\mR,\mT,\mA}$. 
At the end the root $\nu$ receives status $\Unsat$. By Theorem~\ref{theorem: s-c} given later in this paper, $\tuple{\mR,\mT,\mA}$ is unsatisfiable. As a consequence, $\tuple{\mR,\mT,\emptyset}$ is also unsatisfiable.  
\koniec
\end{example}


\begin{example}\label{example2}
Let us construct a \CSHIQ-tableau for $\tuple{\mR,\mT,\mA}$, where $\mR = \emptyset$, $\mT = \emptyset$ and 
\[ \mA = \{a\!:\!\V r.A_1,\ a\!:\,\leq\!3\,r.A_1,\ a\!:\!\E r.A_2,\ a\!:\,\geq\!2\,r.A_3,\ r(a,b),\ b\!:\!(\lnot A_1 \mor \lnot A_2),\ b\!:\!\lnot A_3\} \]

\begin{figure}[t]
\begin{center}
\includegraphics{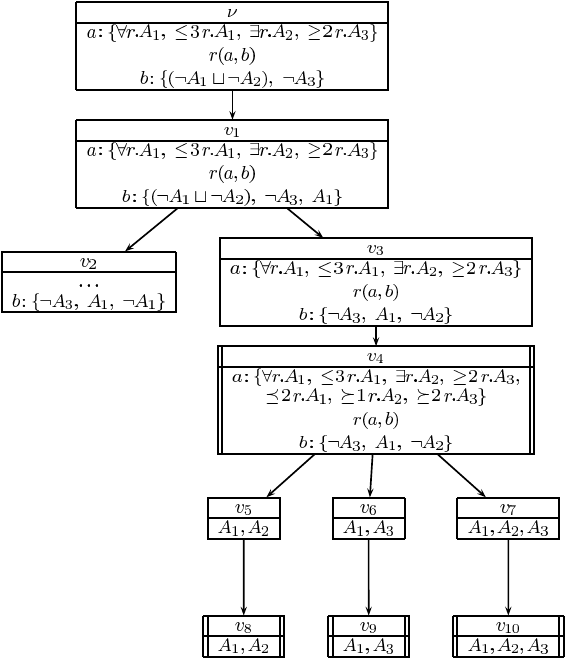}
\end{center}
\caption{An illustration of the tableau described in Example~\ref{example2}. 
The marked nodes $v_4$, $v_8$, $v_9$, $v_{10}$ are states. 
\label{fig-2}}
\end{figure} 

An illustration of the tableau is given in Figure~\ref{fig-2}.

At the beginning, the graph has only the root $\nu$ which is a complex non-state with \mbox{$\Label(\nu) = \mA$}. 

Since $\{a\!:\!\V r.A_1, r(a,b)\} \subset \Label(\nu)$, applying the unary static expansion rule to $\nu$, we connect it to a new complex non-state $v_1$ with $\Label(v_1) = \Label(\nu) \cup \{b\!:\!A_1\}$. 

Since $b\!:\!(\lnot A_1 \mor \lnot A_2) \in \Label(v_1)$, applying the non-unary static expansion rule to $v_1$, we connect it to two new complex non-states $v_2$ and $v_3$ with 
\begin{eqnarray*}
\Label(v_2) & = & \Label(v_1) - \{b\!:\!(\lnot A_1 \mor \lnot A_2)\} \cup \{b\!:\!\lnot A_1\}\\
\Label(v_3) & = & \Label(v_1) - \{b\!:\!(\lnot A_1 \mor \lnot A_2)\} \cup \{b\!:\!\lnot A_2\}. 
\end{eqnarray*}

Since both $b\!:\!A_1$ and $b\!:\!\lnot A_1$ belong to $\Label(v_2)$, the node $v_2$ receives status $\Unsat$. 

Applying the forming-state rule \FSb to $v_3$, we connect it to a new complex state $v_4$ with 
\[ \Label(v_4) = \Label(v_3) \cup \{a\!:\,\preceq\!2\,r.A_1,\ a\!:\,\succeq\!1\,r.A_2,\ a\!:\,\succeq\!2\,r.A_3\}. \]
The assertion $a\!:\,\preceq\!2\,r.A_1 \in \Label(v_4)$ is due to $a\!:\,\leq\!3\,r.A_1 \in \Label(v_3)$ and the fact that $\{r(a,b)$, $b\!:\!A_1\} \subset \Label(v_3)$. 
The assertion $a\!:\,\succeq\!1\,r.A_2 \in \Label(v_4)$ is due to $a\!:\!\E r.A_2 \in \Label(v_3)$ and the fact that $b\!:\!\lnot A_2 \in \Label(v_3)$. Similarly, the assertion $a\!:\,\succeq\!2\,r.A_3 \in \Label(v_4)$ is due to $a\!:\,\geq\!2\,r.A_3 \in \Label(v_3)$ and the fact that $b\!:\!\lnot A_3 \in \Label(v_3)$. 
When realizing the requirements for $a$ at $v_4$, we will not have to pay attention to the relationship between $a$ and $b$. 

Applying the transitional partial-expansion rule to $v_4$, we change its status to $\PExpanded$. After that, applying the transitional full-expansion rule to $v_4$, we connect it to new simple non-states $v_5$, $v_6$ and $v_7$ with $\CELabelT$ equal to $\CQF$, $\CELabelR$ equal to $\{r\}$, $\CELabelI$ equal to $a$, $\Label(v_5) = \{A_1,A_2\}$, $\Label(v_6) = \{A_1,A_3\}$ and $\Label(v_7) = \{A_1,A_2,A_3\}$. The creation of $v_5$ is caused by $a\!:\,\succeq\!1\,r.A_2 \in \Label(v_4)$, while the creation of $v_6$ is caused by \mbox{$a\!:\,\succeq\!2\,r.A_3 \in \Label(v_4)$}. The node $v_7$ results from merging $v_5$ and $v_6$. Furthermore, $\ILConstraints(v_4)$ consists of $x_{v_i} \geq 0$, for $5 \leq i \leq 7$, and 
\begin{eqnarray*}
x_{v_5} + x_{v_6} + x_{v_7} & \leq & 2\\
x_{v_5} + x_{v_7} & \geq & 1\\
x_{v_6} + x_{v_7} & \geq & 2.
\end{eqnarray*}

The set $\ILConstraints(v_4)$ is feasible, e.g., with $x_{v_5} = 0$ and $x_{v_6} = x_{v_7} = 1$. 

Applying the forming-state rule \FSa to $v_5$ (resp.\ $v_6$, $v_7$), we connect it to a new simple state $v_8$ (resp.\ $v_9$, $v_{10}$) with the same label.

Expanding the nodes $v_8$, $v_9$ and $v_{10}$, their statuses change to $\PExpanded$ and then to $\FExpanded$. The graph cannot be modified anymore and becomes a \CSHIQ-tableau for $\tuple{\mR,\mT,\mA}$, with $\Status(\nu) \neq \Unsat$. 
By Theorem~\ref{theorem: s-c} (given later in this paper), the considered knowledge base $\tuple{\mR,\mT,\mA}$ is satisfiable. Using the solution $x_{v_5} = 0$, $x_{v_6} = x_{v_7} = 1$ for $\ILConstraints(v_4)$, we can extract from the graph a model $\mI$ for $\tuple{\mR,\mT,\mA}$ with $\Delta^\mI = \{a,b,v_9,v_{10}\}$, $a^\mI = a$, $b^\mI = b$, $A_1^\mI = \{b,v_9,v_{10}\}$, $A_2^\mI = \{v_{10}\}$, $A_3^\mI = \{v_9,v_{10}\}$ and $r^\mI = \{\tuple{a,b},\tuple{a,v_9},\tuple{a,v_{10}}\}$.
\koniec
\end{example}


\begin{example}\label{example1}
Let us construct a \CSHIQ-tableau for $\tuple{\mR,\mT,\mA}$, where $\mR = \emptyset$, $\mT = \emptyset$ and 
\[ \mA = \{
	a\!:\!\E r.A,\; 
	a\!:\!\V r.(\lnot A \mor \lnot B),\; 
	a\!:\,\geq\!1000\,r.B,\; 
	a\!:\,\leq\!1000\,r.(A \mor B)\}. 
\]

An illustration of the tableau is given in Figure~\ref{fig-1}.

\begin{figure}[t]
\begin{center}
\includegraphics{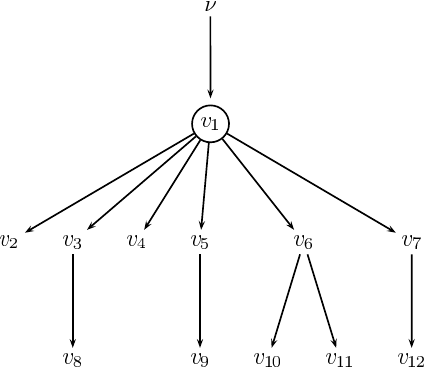}
\end{center}
\caption{An illustration of the tableau described in~Example~\ref{example1}.}
\label{fig-1}
\end{figure}

At the beginning, the graph has only the root $\nu$ which is a complex non-state with $\Label(\nu) = \mA$. Applying the forming-state rule \FSb to $\nu$, we connect it to a new complex state $v_1$ with 
\[ \Label(v_1) = \Label(\nu) \cup \{
	a\!:\,\succeq\!1\,r.A,\;
	a\!:\,\succeq\!1000\,r.B,\; 
	a\!:\,\preceq\!1000\,r.(A \mor B)\}. 
\]

Applying the transitional partial-expansion rule to $v_1$, we change its status to $\PExpanded$. After that, applying the transitional full-expansion rule to $v_1$, connect it to new simple non-states $v_2, \ldots, v_7$ with $\CELabelT$ equal to $\CQF$, $\CELabelR$ equal to $\{r\}$, $\CELabelI$ equal to $a$, and 
\begin{eqnarray*}
\Label(v_2) & = & \{A,\; \lnot A \mor \lnot B,\; A \mor B\} \\
\Label(v_3) & = & \{A,\; \lnot A \mor \lnot B,\; \lnot A \mand \lnot B\} \\
\Label(v_4) & = & \{B,\; \lnot A \mor \lnot B,\; A \mor B\} \\
\Label(v_5) & = & \{B,\; \lnot A \mor \lnot B,\; \lnot A \mand \lnot B\} \\
\Label(v_6) & = & \{A,\; B,\; \lnot A \mor \lnot B,\; A \mor B\} \\
\Label(v_7) & = & \{A,\; B,\; \lnot A \mor \lnot B,\; \lnot A \mand \lnot B\}.
\end{eqnarray*}
Note that $\lnot A \mand \lnot B$ is the NNF of $A \mor B$. The nodes $v_2$ and $v_3$ are created due to $a\!:\,\succeq\!1\,r.A \in \Label(v_1)$. The nodes $v_4$ and $v_5$ are created due to $a\!:\,\succeq\!1000\,r.B \in \Label(v_1)$. The node $v_6$ results from merging $v_2$ and $v_4$. The node $v_7$ results from merging $v_3$ and $v_5$. 
Furthermore, $\ILConstraints(v_1)$ consists of $x_{v_i} \geq 0$, for $2 \leq i \leq 7$, and 
\begin{eqnarray*}
x_{v_2} + x_{v_3} + x_{v_6} + x_{v_7} & \geq & 1\\
x_{v_4} + x_{v_5} + x_{v_6} + x_{v_7} & \geq & 1000\\
x_{v_2} + x_{v_4} + x_{v_6} & \leq & 1000.
\end{eqnarray*}

Applying the unary static expansion rule to $v_3$, we connect it to a new simple non-state $v_8$ with $\Label(v_8) = \{A,\lnot A, \ldots\}$. The status of $v_8$ is then updated to $\Unsat$ and propagated back to make the status of $v_3$ also become $\Unsat$, which causes addition of the constraint $x_{v_3} = 0$ to the set $\ILConstraints(v_1)$.

Similarly, applying the unary static expansion rule to $v_5$, we connect it to a new simple non-state $v_9$ with $\Label(v_9) = \{B,\lnot B, \ldots\}$. The status of $v_9$ is then updated to $\Unsat$ and propagated back to make the status of $v_5$ also become $\Unsat$, which causes addition of the constraint $x_{v_5} = 0$ to the set $\ILConstraints(v_1)$.

Applying the non-unary static expansion rule to $v_6$, we connect it to new simple non-states $v_{10}$ and $v_{11}$ with $\Label(v_{10}) = \{A,\lnot A, \ldots\}$ and $\Label(v_{11}) = \{B,\lnot B, \ldots\}$. The statuses of $v_{10}$ and $v_{11}$ are then updated to $\Unsat$ and propagated back to make the status of $v_6$ also become $\Unsat$, which causes addition of the constraint $x_{v_6} = 0$ to the set $\ILConstraints(v_1)$.

Applying the unary static expansion rule to $v_7$, we connect it to a new simple non-state $v_{12}$ with $\Label(v_{12}) = \{A,B,\lnot A,\lnot B, \ldots\}$. The status of $v_{12}$ is then updated to $\Unsat$ and propagated back to make the status of $v_7$ also become $\Unsat$, which causes addition of the constraint $x_{v_7} = 0$ to the set $\ILConstraints(v_1)$.

With $x_{v_3} = x_{v_5} = x_{v_6} = x_{v_7} = 0$, $\ILConstraints(v_1)$ becomes infeasible, and $\Status(v_1)$ becomes $\Unsat$. Consequently, $\nu$ receives status $\Unsat$. 
By Theorem~\ref{theorem: s-c} (given later in this paper), the considered knowledge base $\tuple{\mR,\mT,\mA}$ is unsatisfiable. 
\koniec
\end{example}


\subsection{The Tableau Rules}

In this section we formally specify the tableau rules of our calculus \CSHIQ. 
We also give explanations for them. They are informal and should be understood in the context of the described rule. 

\sssection{The Rules for Updating Statuses of Nodes:} 

\begin{description}
\item[\UPSa] The first rule is as follows:
  \begin{enumerate}
  \item if there exists $\alpha\!:\!\bot \in \Label(v)$ or $\{\varphi,\ovl{\varphi}\} \subseteq \FullLabel(v)$ then $\Status(v) := \Unsat$
  \item else if there exists $a \not\doteq a \in \Label(v)$ then $\Status(v) := \Unsat$
  \item else if $\Type(v) = \NonState$, $\SType(v) = \Complex$, $a\!:\!(\leq\!n R.C) \in \Label(v)$ and there are $b_0,\ldots,b_n \in \IN$ such that, for all $0 \leq i,j \leq n$ with $i \neq j$, we have that $\{R(a,b_i), b_i\!:\!C\} \subseteq \FullLabel(v)$ and $\{b_i \not\doteq b_j, b_j \not\doteq b_i\} \cap \Label(v) \neq \emptyset$ then $\Status(v) := \Unsat$
  \item else if $\Type(v) = \State$, $\Status(v) = \FExpanded$ and $\ILConstraints(v)$ is infeasible then $\Status(v) := \Unsat$
  \item else if $\Type(v) = \State$, $\Status(v) = \FExpanded$ and $v$ has no successors then $\Status(v) := \Sat$.
  \end{enumerate}

  \begin{explanation}
  As stated earlier, informally, $\Unsat$ means ``unsatisfiable'' and $\Sat$ means ``satisfiable'' in a certain sense. The above rule is thus intuitive. For a formal characterization of statuses, we refer the reader to Lemma~\ref{lemma: SHQWD} (given on page~\pageref{lemma: SHQWD}).
\koniec
  \end{explanation}

\item[\UPSb] The second rule states that, if $v$ is a predecessor of a node $w$ then, whenever the status of $w$ changes to $\Unsat$ or $\Sat$, the status of $v$ should be updated (as soon as possible by using a priority queue of tasks). The update is done as follows: 
  \begin{enumerate}
  \item if $\Type(v) = \NonState$ then
     \begin{enumerate}
     \item\label{item: IUQKD} if some successor of $v$ received status $\Sat$ then $\Status(v) := \Sat$
     \item else if all successors of $v$ have status $\Unsat$ then
	\begin{enumerate}
	\item\label{item: OIWGS} $\Status(v) := \Unsat$
	\item\label{item: OFDSH} if $\AfterTrans(v)$ holds, $\CELabelT(v) = \CQF$, $\StatePred(v) = u$ then add the constraint $x_v = 0$ to the set $\ILConstraints(u)$
	\end{enumerate}
     \end{enumerate}
  \item else 
     \begin{enumerate}
     \item\label{item: YUWQP} if some successor $w$ of $v$ with $\CELabelT(w) = \TUS$ received status $\Unsat$ then $\Status(v) := \Unsat$
     \item\label{item: HRUIA} else if some successor $w$ of $v$ with $\CELabelT(w) = \CQF$ received status $\Unsat$ and $\ILConstraints(v)$ is infeasible then $\Status(v) := \Unsat$
     \item\label{item: HFDDE} else if 
	\begin{itemize}
	\item $\Status(v) = \FExpanded$, 
	\item all successors $w$ of $v$ with $\CELabelT(w) = \TUS$ have status $\Sat$, and 
	\item $\ILConstraints(v) \cup \{x_w = 0 \mid$ $\tuple{v,w} \in E$, $\CELabelT(w) = \CQF$ and $\Status(w) \neq \Sat\}$ is feasible 
	\end{itemize}
     then $\Status(v) := \Sat$.
     \end{enumerate}
  \end{enumerate}

  \begin{explanation}\ 
  \begin{itemize}
  \item A non-state is like an ``or''-node, whose status is the disjunction of the statuses of its successors, treating $\Sat$ as $\True$ and $\Unsat$ as $\False$. This explains the items \ref{item: IUQKD} and \ref{item: OIWGS}. 
  \item A state $v$ is more sophisticated than an ``and''-node. Its status is different from $\Unsat$ iff the following conditions hold:
    \begin{itemize}
    \item all of its successors with $\CELabelT$ equal to $\TUS$ have status different from $\Unsat$,
    \item $\ILConstraints(v) \cup \{x_w = 0 \mid$ $\tuple{v,w} \in E$, $\CELabelT(w) = \CQF$ and $\Status(w) = \Unsat\}$ is feasible.
    \end{itemize}
  The first condition explains the item~\ref{item: YUWQP}. The second condition explains the item~\ref{item: HRUIA} because, whenever a successor $w$ of $v$ with $\CELabelT$ equal to $\CQF$ receives status $\Unsat$, the constraint $x_w = 0$ is added to $\ILConstraints(v)$ (see the item~\ref{item: OFDSH}). The item~\ref{item: HFDDE} is justifiable since the premises of the rule expressed by that item are stronger than the conditions listed above.
\koniec
  \end{itemize}
  \end{explanation}
\end{description}

\sssection{The Unary Static Expansion Rule:}
  \begin{description}
  \item[\US] If $\Type(v) = \NonState$ and $\Status(v) = \Unexpanded$ then 
    \begin{enumerate}
     \item let $X = \{(\alpha\!:\!C) \in \Label(v) \mid C$ is of the form $D \mand D'$ or $\geq\!0\,R.D$ or $\leq\!0\,R.D\}$ 
     \item let $label = \Label(v) \cup \{(\alpha\!:\!D), (\alpha\!:\!D') \mid \alpha\!:\!(D \mand D') \in \Label(v)\}$ \\
	\mbox{\hspace{4.5em}} $\cup\ \{\alpha\!:\!\V R.\ovl{D} \mid\ (\alpha\!:\,\leq\!0\,R.D) \in \Label(v)\}$ \\
	\mbox{\hspace{4.5em}} $\cup\ \{\alpha\!:\!\V R.D \mid \alpha\!:\!\V S.D \in \Label(v) \textrm{ and } R \sqsubseteq_\mR S\}$ \\
	\mbox{\hspace{4.5em}} $\cup\ \{R^-(b,a) \mid R(a,b) \in \Label(v)\}$ \\
	\mbox{\hspace{4.5em}} $\cup\ \{S(a,b) \mid R(a,b) \in \Label(v) \textrm{ and } R \sqsubseteq_\mR S\}$ \\
	\mbox{\hspace{4.5em}} $\cup\ \{b\!:\!D \mid \{a\!:\!\V R.D, R(a,b)\} \subseteq \Label(v)\}$ \\
	\mbox{\hspace{4.5em}} $\cup\ \{b\!:\!\V R.D \mid \{a\!:\!\V R.D, R(a,b)\} \subseteq \Label(v) \textrm{ and } \Trans{R}\}$ \\
	\mbox{\hspace{4.5em}} $-\ (X \cup \RFormulas(v))$
     \item if $label - \Label(v) \neq \emptyset$ then
	\begin{enumerate}
	\item $\ConToSucc(v,\NonState,\SType(v),label,\RFormulas(v) \cup X,\Null)$
	\item $\Status(v) := \FExpanded$.
	\end{enumerate}
     \end{enumerate}

  \begin{explanation}
  This rule makes a necessary expansion for a non-state $v$ by connecting it to only one successor $w$ which is a copy of $w$ with intuitive changes like:
  \begin{itemize}
  \item if $\alpha\!:\!(D \mand D') \in \Label(v)$ then $\alpha\!:\!(D \mand D')$ in $\Label(w)$ is replaced by $\alpha\!:\!D$ and $\alpha\!:\!D'$ and we remember this by adding it to $\RFormulas(w)$;
  \item if $\{a\!:\!\V R.D, R(a,b)\} \subseteq \Label(v)$ then we add $b\!:\!D$ to $\Label(w)$; and so on.
  \end{itemize}
  Note that $\Label(w) - (\Label(v) \cup \RFormulas(v)) \neq \emptyset$. That is, $w$ contains some ``new'' formulas. This is to guarantee that the local graph of any non-state is acyclic.
\koniec 
  \end{explanation}
  \end{description}

\sssection{The Rules for Keeping Converse Compatibility:}
Rules of this kind are listed below in the decreasing order w.r.t.\ priority: 
  \begin{description}
  \item[\KCCd] If $\tuple{u,v} \in E$ and $\Status(v) = \Incomplete$ then
     \begin{enumerate}
     \item (we must have that $\Type(v) = \State$ and $\Type(u) = \NonState$)
     \item delete the edge $\tuple{u,v}$ from $E$ and re-expand $u$ as follows
     \item if $\FmlsRC(v) \neq \emptyset$ then 
	\begin{enumerate}
	\item $newLabel := \Label(u) \cup \FmlsRC(v)$
	\item $\ConToSucc(u,\NonState,\SType(u),newLabel,\RFormulas(u),\Null)$
	\end{enumerate}
     \item else
	\begin{enumerate}
	\item $newLabel_1 := \Label(u) \cup \{\FmlFB(v)\}$
	\item $\ConToSucc(u,\NonState,\SType(u),newLabel_1,\RFormulas(u),\Null)$
	\item $newLabel_2 := \Label(u) \cup \{\ovl{\FmlFB(v)}\}$
	\item $\ConToSucc(u,\NonState,\SType(u),newLabel_2,\RFormulas(u),\Null)$.
	\end{enumerate}
     \end{enumerate}

  \begin{explanation}
  We have that $v$ is a state with status $\Incomplete$ and it is the only successor of $u$. The first expansion of $u$ (by connecting to $v$) was not a good move and we re-expand $u$ (only once) as follows. We first delete the edge $\tuple{u,v}$. Next, if $\FmlsRC(v) \neq \emptyset$ (i.e., there are formulas that should be added to $v$), then we connect $u$ to a node with label equal to $\Label(u) \cup \FmlsRC(v)$ (this node is a replacement for $v$). If $\FmlsRC(v) = \emptyset$ then the reason of $\Status(v) = \Incomplete$ is that we wanted to have either $\FmlFB(v)$ or its negation in $\Label(v)$. So, in that case we connect $u$ to two successors, one with label $\Label(u) \cup \{\FmlFB(v)\}$ and the other with label $\Label(u) \cup \{\ovl{\FmlFB(v)}\}$. The node $v$ is useful for later: whenever we expand a node $u'$ using the forming-state rule by connecting it to $v$ we know immediately that we should re-expand it. The other nodes in the local graph of $v$ can be deleted to save memory. However, we keep them for our presentation to make the proofs easier. 
\koniec
  \end{explanation}

  \item[\KCCa] If $\Type(v) = \State$, $\Status(v) = \PExpanded$ and only the rules for updating status and the unary static rule were applied to the nodes in the local graph of $v$ that are different from $v$ then
     \begin{enumerate}
     \item $X := \emptyset$
     \item for each node $w$ in the local graph of $v$ do 
	\begin{enumerate}
	\item let $w_0 = \AfterTransPred(w)$ and $\alpha = \CELabelI(w_0)$
	\item $X := X \cup \{\alpha\!:\!C \mid$ there exist $R \in \CELabelR(w_0)$, $\V R^-.C \in \Label(w)$ and $\alpha\!:\!C \notin \FullLabel(v)\}$ 
	\item $X := X \cup \{\alpha\!:\!\V R^-.C \mid$ $R \in \CELabelR(w_0)$, $\Trans{R}$, $\V R^-.C \in \Label(w)$ and $\alpha\!:\!\V R^-.C \notin \FullLabel(v)\}$ 
	\end{enumerate}
     \item if $X \neq \emptyset$ then set $\FmlsRC(v) := X$ and $\Status(v) := \Incomplete$. 
     \end{enumerate}

  \begin{explanation}\label{exp: HGREW}
  Assume that the considered knowledge base is satisfiable and has a model $\mI$ that satisfies $\Label(v)$. Consider the following cases:
     \begin{itemize}
     \item Case $v$ is a simple state and $\alpha = \Null$: Thus, $v$ corresponds to an unnamed individual $y_v \in \Delta^\mI$, while $w$ corresponds to an $R$-successor $y_w$ of $y_v$ in $\mI$. If $\V R^-.C \in \Label(w)$ then: $y_w \in (\V R^-.C)^\mI$ and hence $y_v \in C^\mI$, and for that reason we want to have $C$ in $\FullLabel(v)$ as a requirement to be realized; so, if $C \notin \FullLabel(v)$ then the status of $v$ becomes $\Incomplete$ and we add $C$ to $\FmlsRC(v)$ as a concept required for $v$ for converse compatibility. If $\V R^-.C \in \Label(w)$ and $R$ is a transitive role then: $R^-$ is also transitive and we also have $y_v \in (\V R^-.C)^\mI$; so, analogously, if $\V R^-.C \notin \FullLabel(v)$ then the status of $v$ becomes $\Incomplete$ and we add $\V R^-.C$ to $\FmlsRC(v)$.  

     \item Case $v$ is a complex state and $\alpha = a$: Thus, $v$ corresponds to an ABox with that individual~$a$. The node $w$ corresponds to an $R$-successor $y_w$ of $a^\mI$ in $\mI$. If $\V R^-.C \in \Label(w)$ then: $y_w \in (\V R^-.C)^\mI$ and hence $a^\mI \in C^\mI$, and for that reason we want to have $a\!:\!C$ in $\FullLabel(v)$ as a requirement to be realized; so, if $a\!:\!C \notin \FullLabel(v)$ then the status of $v$ becomes $\Incomplete$ and we add $a\!:\!C$ to $\FmlsRC(v)$ as an assertion required for $v$ for converse compatibility. If $a\!:\!(\V R^-.C) \in \Label(w)$ and $R$ is a transitive role then: $R^-$ is also transitive and we also have $a^\mI \in (\V R^-.C)^\mI$; so, analogously, if $a\!:\!(\V R^-.C) \notin \FullLabel(v)$ then the status of $v$ becomes $\Incomplete$ and we add $a\!:\!(\V R^-.C)$ to $\FmlsRC(v)$.
\koniec
     \end{itemize}
  \end{explanation}

  \item[\KCCb] If $\Type(u) = \State$, $v_0$ is a successor of $u$ with $\CELabelI(v_0) = \alpha$ and $R \in \CELabelR(v_0)$, and $v$ is a node in the local graph of $v_0$ with $\V R^-.C \in \Label(v)$ and $\Status(v) \neq \Unsat$ then
     \begin{enumerate}
     \item if $\alpha\!:\!\ovl{C} \in \FullLabel(u)$ then $\Status(v) := \Unsat$
     \item else if $\Trans{R}$ and $\alpha\!:\!\E R^-.\ovl{C} \in \FullLabel(u)$ then $\Status(v) := \Unsat$
     \item else if $(\alpha\!:\!C) \notin \FullLabel(u)$ and $\Status(u) \in \{\PExpanded$, $\FExpanded\}$ then $\Status(u) := \Incomplete$ and $\FmlFB(u) := (\alpha\!:\!C)$
     \item else if $\Trans{R}$ and $(\alpha\!:\!\V R^-.C) \notin \FullLabel(u)$ and $\Status(u) \in \{\PExpanded$, $\FExpanded\}$ then set $\Status(u) := \Incomplete$ and $\FmlFB(u) := (\alpha\!:\!\V R^-.C)$.
     \end{enumerate}

  \begin{explanation}
  Assume that the considered knowledge base is satisfiable and has a model $\mI$ that satisfies $\Label(u)$. We consider here only the case when $u$ is a simple state and $\alpha = \Null$. (The case when $u$ is a complex state and $\alpha$ is a named individual is similar, cf.\ Explanation~\ref{exp: HGREW}.) Thus, $u$ corresponds to an unnamed individual $y_u \in \Delta^\mI$. In the constructed tableau, there is an expansion path from $u$ via $v_0$ to $v$. The question is: whether that expansion path should be taken into account, and if we should consider that expansion path, what should be done for converse compatibility. Note that: if $\CELabelT(v_0) = \CQF$ and $(x_{v_0} = 0) \in \ILConstraints(u)$ then $v_0$ is not used for the construction of $\mI$; and if the path from $v_0$ to $v$ uses some non-unary static expansions then $v$ is just one of possible ``expansions'' of $v_0$. 
      \begin{itemize}
      \item If $v$ corresponds to an $R$-successor $y_v$ of $y_u$ in $\mI$ then $y_v \in (\V R^-.C)^\mI$ (since $\V R^-.C \in \Label(v)$) and hence $y_u \in C^\mI$, which means we should require $C \in \FullLabel(u)$. So, if $\ovl{C} \in \FullLabel(u)$ then the expansion path from $u$ via $v_0$ to $v$ cannot be used and we set $\Status(v) := \Unsat$ to mark that $v$ cannot be used to realize the requirements of $u$. 

      \item Else if $R$ is a transitive role and $v$ corresponds to an $R$-successor $y_v$ of $y_u$ in $\mI$ then we also have $y_u \in (\V R^-.C)^\mI$, which means we should require $\V R^-.C \in \FullLabel(u)$. So, if $R$ is transitive and $\E R^-.\ovl{C} \in \FullLabel(u)$ (note that $\E R^-.\ovl{C}$ is the negation of $\V R^-.C$) then the expansion path from $u$ via $v_0$ to $v$ cannot be used and we set $\Status(v) := \Unsat$ to mark that $v$ cannot be used to realize the requirements of $u$. 

      \item Else if $\V R^-.C \in \Label(v)$ and $\{C,\ovl{C}\} \cap \FullLabel(u) = \emptyset$ then we would like to have either $C$ or $\ovl{C}$ in $\Label(u)$ and therefore set $\Status(u) := \Incomplete$ and $\FmlFB(u) := C$ (as a concept for branching on at $u$).  

      \item Else if $R$ is transitive, $\V R^-.C \in \Label(v)$ and $\{\V R^-.C,\E R^-.\ovl{C}\} \cap \FullLabel(u) = \emptyset$ then we would like to have either $\V R^-.C$ or $\E R^-.\ovl{C}$ in $\Label(u)$ and therefore set $\Status(u) := \Incomplete$ and $\FmlFB(u) := \V R^-.C$ (as a concept for branching on at $u$).  
\koniec
      \end{itemize}
  \end{explanation}

  \item[\KCCc] If $u$ is a state with $\Status(u) \in \{\PExpanded$, $\FExpanded\}$, $v_0$ is a successor of $u$ with $\CELabelI(v_0) = \alpha$ and $R \in \CELabelR(v_0)$, $v$ is a node in the local graph of $v_0$ such that $\Status(v) \neq \Unsat$ and
     \begin{itemize}
     \item either $\Label(v)$ contains $\leq\!n\,R^-.C$,  
     \item or $\Label(v)$ contains $\geq\!n\,R^-.C$ or $\E R^-.C$, where $R$ is a numeric role, 
     \end{itemize} 
  and $\{(\alpha\!:\!C),(\alpha\!:\!\ovl{C})\} \cap \FullLabel(u) = \emptyset$\\ then set $\Status(u) := \Incomplete$ and $\FmlFB(u) := (\alpha\!:\!C)$.

  \begin{explanation}\label{exp: JHRES}
  This rule deals with the case when there is a lack of information at $u$ for deciding how to satisfy the number restrictions of $v$. We want to have either $\alpha\!:\!C$ or $\alpha\!:\!\ovl{C}$ in $\FullLabel(u)$. So, we set $\Status(u) := \Incomplete$ and $\FmlFB(u) := (\alpha\!:\!C)$ (as a formula for branching on at $u$).
\koniec
  \end{explanation}

  \item[\KCCcp] If $u$ is a state with $\Status(u) = \FExpanded$, $v_0$ is a successor of $u$ with $\CELabelT(v_0) = \CQF$, $R \in \CELabelR(v_0)$ and $\CELabelI(v_0) = \alpha$, and $v$ is a node in the local graph of $v_0$ such that
     \begin{itemize}
     \item $\Status(v) \neq \Unsat$, $\Label(v)$ contains $\leq\!m\,R^-.C$ as well as $\geq\!n\,S^-.D$ or $\E S^-.D$, 
     \item $S \sqsubseteq_\mR R$, $S \notin \CELabelR(v_0)$, $\alpha\!:\!C \in \FullLabel(u)$, $\alpha\!:\!\ovl{D} \notin \FullLabel(u)$, 
     \item $\{(\alpha\!:\!\E S.\top_R),(\alpha\!:\!\V S.\bot)\} \cap \Label(u) = \emptyset$
     \end{itemize}
then set $\Status(u) := \Incomplete$ and $\FmlFB(u) := (\alpha\!:\!\E S.\top_R)$.\\
(Here, $\top_R$ is $\top$ annotated by $R$. Semantically, $\top_R$ is equivalent to $\top$, i.e., $\top_R^\mI = \top^\mI = \Delta^\mI$ for every interpretation~$\mI$. Besides, let $\ovl{\top_R} = \bot$ and hence $\ovl{\E S.\top_R} = \V S.\bot$.)

  \begin{explanation}\label{exp: IUSDS}
  Assume that $\Label(v)$ contains $\varphi$ which is either $\geq\!n\,S^-.D$ (with $n \geq 1$) or $\E S^-.D$. To realize the requirement $\varphi$ at $v$ we want to know whether $u$ can be used for that purpose. That is, we want ($S \in \CELabelR(v_0)$ and $\alpha\!:\!D \in \FullLabel(u)$) or $S \notin \CELabelR(v_0)$ or $\alpha\!:\!\ovl{D} \in \FullLabel(u)$. If $S \in \CELabelR(v_0)$ then the will to have either $\alpha\!:\!D$ or $\alpha\!:\!\ovl{D}$ in $\FullLabel(u)$ can be realized by the rule $\KCCc$. If $\alpha\!:\!\ovl{D} \in$ $\FullLabel(u)$ then $u$ cannot be used to realize the requirement $\varphi$ at $v$. Consider the case when $S \notin \CELabelR(v_0)$ and $\alpha\!:\!\ovl{D} \notin \FullLabel(u)$. If $v$ was treated as a state then to realize the requirement $\varphi$ at $v$ we would create an $S^-$-successor $w$ of $v$ and put $D$ to $\Label(w)$. As $S \sqsubseteq_\mR R$, the node $w$ is also an $R^-$-successor of $v$. Since $\leq\!m\,R^-.C \in \Label(v)$ and $\alpha\!:\!C \in \FullLabel(u)$, there may be the need to merge $w$ to $u$. Such merging would add $S$ to $\CELabelR(v_0)$. However, our method does not merge nodes explicitly. Roughly speaking, we want to decide whether to add $S$ to $\CELabelR(v_0)$ or not. To solve the problem, we set $\Status(u) := \Incomplete$ and $\FmlFB(u) := (\alpha\!:\!\E S.\top_R)$. Later, if a state $u'$ is a ``completion'' of $u$, then either $\alpha\!:\!\E S.\top_R \in \Label(u')$ or $\alpha\!:\!\V S.\bot \in \Label(u')$. If $\alpha\!:\!\V S.\bot \in \Label(u')$ then $u'$ will not have any $S$-successor. If $\alpha\!:\!\E S.\top_R \in \Label(u')$ then $u'$ will have some $S$-successor $v'$ with $\top_R \in \Label(v')$, and the occurrence of $\top_R$ in $\Label(v')$ guarantees that the possibility of merging $v'$ to any $R$-successor of $u'$ will be considered by the transitional full-expansion rule. 
\koniec
  \end{explanation}
  \end{description}

\sssection{The Non-unary Static Expansion Rule:}
  \begin{description}
  \item[\NUS] If $\Type(v) = \NonState$ and $\Status(v) = \Unexpanded$ then
     \begin{enumerate}
     \item\label{item: JHREA} if $\alpha\!:\!(C \mor D) \in \Label(v)$ and $\{\alpha\!:\!C, \alpha\!:\!D\} \cap \FullLabel(v) = \emptyset$ then
	\begin{enumerate}
	\item let $X = \Label(v) - \{\alpha\!:\!(C \mor D)\}$ 
	\item let $Y = \RFormulas(v)\cup\{\alpha\!:\!(C \mor D)\}$
	\item $\ConToSucc(v,\NonState,\SType(v),X\cup\{\alpha\!:\!C\},Y,\Null)$
	\item $\ConToSucc(v,\NonState,\SType(v),X\cup\{\alpha\!:\!D\},Y,\Null)$
	\item $\Status(v) := \FExpanded$
	\end{enumerate}

     \begin{explanation}
     This subrule deals with syntactic branching on $\alpha\!:\!(C \mor D) \in \Label(v)$. We expand $v$ by connecting it to two successors $w_1$ and $w_2$, whose labels are the label of $v$ with $\alpha\!:\!(C \mor D)$ replaced by $\alpha\!:\!C$ or $\alpha\!:\!D$, respectively. The formula $\alpha\!:\!(C \mor D)$ is put into both $\RFormulas(w_1)$ and $\RFormulas(w_2)$. The expansion is done only when both $w_1$ and $w_2$ have a larger $\FullLabel$ than $v$. 
\koniec
     \end{explanation}

     \item\label{item: HGW3A} else if $\SType(v) = \Complex$, $R(a,b) \in \Label(v)$ and 
       \begin{itemize}
       \item either $\Label(v)$ contains $a\!:\!(\leq\!n\,R.C)$, 
       \item or $\Label(v)$ contains $a\!:\!(\geq\!n\,R.C)$ or $a\!:\!(\E R.C)$, where $R$ is a numeric role, 
       \end{itemize}
     and $\{b\!:\!C, b\!:\!\ovl{C}\} \cap \FullLabel(v) = \emptyset$ then 
	\begin{enumerate}
	\item $\ConToSucc(v, \NonState, \Complex, \Label(v)\cup\{b\!:\!C\}, \RFormulas(v), \Null)$
	\item $\ConToSucc(v, \NonState, \Complex, \Label(v)\cup\{b\!:\!\ovl{C}\}, \RFormulas(v), \Null)$
	\item $\Status(v) := \FExpanded$
	\end{enumerate}

     \begin{explanation}
  This subrule deals with the case when there is a lack of information about $b$ for deciding how to satisfy the number restrictions about $a$ (cf.\ Explanation~\ref{exp: JHRES}). We want to have either $b\!:\!C$ or $b\!:\!\ovl{C}$ in $\FullLabel(v)$. So, we expand $v$ by semantic branching: we connect it to two successors, one with label $\Label(v)\cup\{b\!:\!C\}$ and the other with label $\Label(v)\cup\{b\!:\!\ovl{C}\}$. The expansion is done only when both the successors have a larger $\FullLabel$ than~$v$.
\koniec
     \end{explanation}

     \item\label{item: JHEAA} else if $\SType(v) = \Complex$, $\{a\!:\!(\leq\!n\,R.C)$, $R(a,b)$, $R(a,b')$, $b\!:\!C$, $b'\!:\!C\} \subseteq$ $\FullLabel(v)$, $b \neq b'$ and $\{b \not\doteq b', b' \not\doteq b\} \cap \Label(v) = \emptyset$ then\footnote{Fix a linear order between named individuals. Then we can also assume that $b$ is less than $b'$ in that order.}
	\begin{enumerate}
	\item let $X$ be the set obtained from $\Label(v)$ by replacing every occurrence of $b'$ not in $\doteq$~expressions by $b$
	\item let $Y$ be the set obtained from $\RFormulas(v)$ by replacing every occurrence of $b'$ by $b$
	\item\label{item: HJSDA} $\ConToSucc(v,\NonState,\Complex,X \cup \{b \doteq b', b' \doteq b\},Y,\Null)$
	\item $\ConToSucc(v$, $\NonState$, $\Complex$, $\Label(v)\cup\{b \not\doteq b',b' \not\doteq b\}$, $\RFormulas(v)$, $\Null)$
	\item $\Status(v) := \FExpanded$
	\end{enumerate}

     \begin{explanation}
  This subrule deals with the case when there is a lack of information about whether $b$ and $b'$ denote the same individual for deciding how to satisfy the number restrictions about $a$. We expand $v$ by semantic branching: either $b$ and $b'$ denote the same individual or they do not. Technically, we connect $v$ to two successors with appropriate contents. 
\koniec
     \end{explanation}

     \item\label{item: OSJRS} else if $\SType(v) = \Complex$, $\{a\!:\!(\leq\!m\,R.C)$, $R(a,b)\} \subseteq \Label(v)$, $\Label(v)$ contains \mbox{$a\!:\!(\geq\!n\,S.D)$} or $a\!:\!\E S.D$ with $S \sqsubseteq_\mR R$, and $\{S(a,b), \lnot S(a,b)\} \cap \Label(v) = \emptyset$ then
	\begin{enumerate}
	\item $\ConToSucc(v,\NonState,\Complex,\Label(v) \cup \{S(a,b)\},\RFormulas(v),\Null)$
	\item $\ConToSucc(v,\NonState,\Complex,\Label(v) \cup \{\lnot S(a,b)\},\RFormulas(v),\Null)$
	\item $\Status(v) := \FExpanded$.
	\end{enumerate}

     \begin{explanation}
  This subrule deals with the case when there is a lack of information for deciding how to satisfy the number restrictions about $a$ (cf.\ the rule~\KCCcp and Explanation~\ref{exp: IUSDS}). We want to decide whether $b$ is an $S$-successor of $a$ or not. So, we expand $v$ by semantic branching: we connect it to two successors, one with label containing $S(a,b)$ and the other with label containing $\lnot S(a,b)$. The expansion is done only when both the successors have a larger label than~$v$.
\koniec
     \end{explanation}
     \end{enumerate}
  \end{description}

\sssection{The Forming-State Rules:}
  \begin{description}
  \item[\FSa] If $\Type(v) = \NonState$, $\SType(v) = \Simple$ and $\Status(v) = \Unexpanded$ then
     \begin{enumerate}
     \item let $u = \StatePred(v)$, $v_0 = \AfterTransPred(v)$ and $\alpha = \CELabelI(v_0)$
     \item set $X := \Label(v)$
     \item for each $(\leq\!n\,R.D) \in \Label(v)$ do
	\begin{enumerate}
	\item if $R^- \in \CELabelR(v_0)$ and $(\alpha\!:\!D) \in \FullLabel(u)$ then add $\preceq\!(n-1)\,R.D$ to $X$ 
	\item else add $\preceq\!n\,R.D$ to $X$
	\end{enumerate}
     \item for each $(\geq\!n\,R.D) \in \Label(v)$ with $n \geq 2$ do
	\begin{enumerate}
	\item if $R^- \in \CELabelR(v_0)$ and $(\alpha\!:\!D) \in \FullLabel(u)$ then add $\succeq\!(n-1)\,R.D$ to $X$ 
	\item else add $\succeq\!n\,R.D$ to $X$
	\end{enumerate}
     \item for each $(\geq\!1\,R.D)$ or $\E R.D$ from $\Label(v)$, where $R$ is a numeric role, do
	\begin{enumerate}
	\item if $R^- \notin \CELabelR(v_0)$ or $(\alpha\!:\!D) \notin \FullLabel(u)$ then add $\succeq\!1\,R.D$ to $X$ 
	\end{enumerate}
     \item $\ConToSucc(v,\State,\Simple,X,\RFormulas(v),\Null)$
     \item $\Status(v) := \FExpanded$.
     \end{enumerate}

  \begin{explanation}
  When the rules \UPS, \US and \KCC are not applicable (to any node) and the rule \NUS is not applicable to $v$, we apply this forming-state rule to $v$ by connecting it to a simple state $w$. Assume that the considered knowledge base is satisfiable and has a model $\mI$ such that the nodes $v$ and $w$ are used for the construction of $\mI$. In the constructed graph, such a state $w$ is globally cached and may have many predecessors. However, the elements $y_w$ of the domain of $\mI$ that correspond to $w$ are nodes in disjoint trees (cf.\ Remark~\ref{remark: main intuition}), and each of them has only one predecessor $y_u$, which corresponds to $u$. When computing contents for $w$ we put into $\Label(w)$ the requirements from $\Label(v)$ after an appropriate modification that takes into account the relationship between $y_w$ and $y_u$, i.e., the relationship between $v$ and $u$ via $v_0$. For example, if $(\leq\!n\,R.D) \in \Label(v)$, $R^- \in \CELabelR(v_0)$ and $(\alpha\!:\!D) \in \FullLabel(u)$ then $y_w \in \Delta^\mI$ (which corresponds to all $w$, $v$, $v_0$) already has the $R$-successor $y_u \in D^\mI$, and we have to guarantee only that $y_w$ has $n-1$ other $R$-successors satisfying $D$ (i.e., belonging to $D^\mI$), and that is why we add to $\Label(w)$ the requirement $\preceq\!(n-1)\,R.D$. Notice the use of $\preceq$ instead of $\leq$. Also note that we can assume $\Status(u) \neq \Incomplete$ (otherwise, there is no sense for expanding~$v$) and, as the rule for keeping converse compatibility is not applicable, in that case either $\alpha\!:\!D$ or $\alpha\!:\!\ovl{D}$ belongs to $\FullLabel(u)$. 
\koniec
  \end{explanation}

  \item[\FSb] If $\Type(v) = \NonState$, $\SType(v) = \Complex$ and $\Status(v) = \Unexpanded$ then
     \begin{enumerate}
     \item set $X := \Label(v)$
     \item for each $a\!:\!(\leq\!n\,R.D) \in \Label(v)$ do
	\begin{enumerate}
	\item let $m = \sharp\{b \mid \{R(a,b), b\!:\!D\} \subseteq \FullLabel(v)\}$
	\item add $a\!:\!(\preceq\!(n-m)\,R.D)$ to $X$
	\end{enumerate}
     \item for each $(a\!:\!C) \in \Label(v)$, where $C$ is $\geq\!n\,R.D$ or $\E R.D$ and $R$ is a numeric role, do
	\begin{enumerate}
	\item if $C = \E R.D$ then let $n = 1$
	\item let $m = \sharp\{b \mid \{R(a,b), b\!:\!D\} \subseteq \FullLabel(v)\}$
	\item if $n > m$ then add $a\!:\!(\succeq\!(n-m)\,R.D)$ to $X$
	\end{enumerate}
     \item $\ConToSucc(v,\State,\Complex,X,\RFormulas(v),\Null)$
     \item $\Status(v) := \FExpanded$.
     \end{enumerate}

  \begin{explanation}
When the rules \UPS, \US, \KCC and \NUS are not applicable to the complex non-state $v$, we apply this forming-state rule to $v$ by connecting it to a complex state $w$. When computing contents for $w$ we put into $\Label(w)$ the requirements from $\Label(v)$ after an appropriate modification that takes into account the assertions in $\Label(v)$ that represent the relationship between named individuals. For example, if $a\!:\!(\leq\!n\,R.D) \in \Label(v)$ and there are $m$ pairwise different individuals $b_1,\ldots,b_m$ such that $\{R(a,b_i), b_i\!:\!D \mid 1 \leq i \leq m\} \subseteq \FullLabel(v)$ then we add to $\Label(w)$ the requirements $a\!:\!(\preceq\!(n-m)\,R.D)$. Notice the use of $\preceq$ instead of $\leq$. Note that, since the rule \NUS is not applicable to $v$, we must have that $(b_i \not\doteq b_j) \in \Label(v)$ for any pair $i \neq j$, and for any individual $b$ such that $R(a,b) \in \Label(v)$, either $b\!:\!D$ or $b\!:\!\ovl{D}$ must belong to $\FullLabel(v)$. When expanding $w$ we will not have to pay attention to the relationship between the individuals occurring in $\Label(w)$. 
\koniec
  \end{explanation}
  \end{description}

\sssection{The Transitional Partial-Expansion Rule:}
  \begin{description}
  \item[\TP] If $\Type(v) = \State$ and $\Status(v) = \Unexpanded$ then
     \begin{enumerate}
     \item for each $(\alpha\!:\!\E R.D) \in \Label(v)$, where $R$ is a non-numeric role, do
       \begin{enumerate}
	\item $X := \{D\} \cup \{D' \mid \alpha\!:\!\V R.D' \in \Label(v)\}\ \cup$\\
		\mbox{\hspace{2.55em}}$\{\V S.D' \mid \alpha\!:\!\V S.D' \in \Label(v), R \sqsubseteq_\mR S$ and $\Trans{S}\} \cup \mT$
	\item $ceLabel := \langle \TUS, \{S \mid R \sqsubseteq_\mR S\}, \alpha\rangle$
	\item $\NewSucc(v,\NonState,\Simple,X,\emptyset,ceLabel)$
       \end{enumerate}

     \item $\Status(v) := \PExpanded$.
     \end{enumerate}

  \begin{explanation}
  To realize a requirement $\alpha\!:\!\E R.D$ at $v$, where $R$ is a non-numeric role, we connect $v$ to a new simple non-state $w$ with appropriate contents as shown in the rule.
\koniec
  \end{explanation}
  \end{description}

\sssection{The Transitional Full-Expansion Rule:}
  \begin{description}
  \item[\TF] If $\Type(v) = \State$ and $\Status(v) = \PExpanded$ then
     \begin{enumerate}
     \item $\mE := \emptyset$, $\mE' := \emptyset$
     \item\label{item: HJFDU} for each $(\alpha\!:\,\succeq\!n\,R.D) \in \Label(v)$ do
       \begin{enumerate}
	\item $X := \{S \mid R \sqsubseteq_\mR S\}$
	\item $Y := \{D\} \cup \{D' \mid \alpha\!:\!\V R.D' \in \Label(v)\}\ \cup$\\
		\mbox{\hspace{2.55em}}$\{\V S.D' \mid \alpha\!:\!\V S.D' \in \Label(v), R \sqsubseteq_\mR S$ and $\Trans{S}\} \cup \mT$
	\item $\mE := \mE \cup \{\tuple{X,Y,\alpha}\}$
       \end{enumerate}
     \item\label{item: HGDFW} for each $\alpha\!:\!(\preceq\!n\,R.C) \in \Label(v)$ do
	\begin{enumerate}
	\item for each $\tuple{X,Y,\alpha} \in \mE$ do
	   \begin{enumerate}
	   \item if $R \in X$ and $\{C,\ovl{C}\} \cap Y = \emptyset$ then 
		$\mE' := \mE' \cup \{\tuple{X,Y \cup \{C\},\alpha},\tuple{X,Y\cup\{\ovl{C}\},\alpha}\}$\\
	(i.e., $\tuple{X,Y,\alpha}$ is replaced by $\tuple{X,Y \cup \{C\},\alpha}$ and $\tuple{X,Y\cup\{\ovl{C}\},\alpha}$)
	   \item else $\mE' := \mE' \cup \{\tuple{X,Y,\alpha}\}$
	   \end{enumerate}
	\item $\mE := \mE'$, $\mE' := \emptyset$
	\end{enumerate}

     \item\label{item: HMFHS} repeat
	\begin{enumerate}
	\item for each $\alpha\!:\!(\preceq\!n\,R.C) \in \Label(v)$, $\tuple{X,Y,\alpha} \in \mE$ and $\tuple{X',Y',\alpha} \in \mE$ such that $R \in X$, $C \in Y$, $R \in X'$, $C \in Y'$, $\tuple{X \cup X', Y \cup Y', \alpha} \notin \mE$ and $Y \cup Y'$ does not contain any pair of the form $\varphi$, $\ovl{\varphi}$ do add $\tuple{X \cup X', Y \cup Y', \alpha}$ to $\mE$ (i.e., the merger of $\tuple{X,Y,\alpha}$ and $\tuple{X',Y',\alpha}$ is added to $\mE$)
	\item\label{item: YDMAT} for each $\tuple{X,Y,\alpha} \in \mE$, $\tuple{X',Y',\alpha} \in \mE$ and $R \in X \cap X'$ such that $\top_R \in Y \cup Y'$, $\tuple{X \cup X', Y \cup Y', \alpha} \notin \mE$ and $Y \cup Y'$ does not contain any pair of the form $\varphi$, $\ovl{\varphi}$ do add $\tuple{X \cup X', Y \cup Y', \alpha}$ to $\mE$
	\end{enumerate}
	until no tuples were added to $\mE$ during the last iteration

     \item\label{item: JHFSA} for each $\tuple{X,Y,\alpha} \in \mE$ do
	\begin{enumerate}
	\item $\NewSucc(v,\NonState,\Simple,Y,\emptyset,\tuple{\CQF,X,\alpha})$
	\end{enumerate}

     \item\label{item: JHDSA} let $W = \{w \mid \tuple{v,w} \in E \textrm{ and } \CELabelT(w) = \CQF\}$
     \item $\ILConstraints(v) := \{ x_w \geq 0 \mid w \in W \}$
     \item\label{item: JEROS} for each $(\alpha\!:\!C) \in \Label(v)$ do
       \begin{enumerate}
       \item if $C$ is of the form $\succeq\!n\,R.D$ 
	then add to $\ILConstraints(v)$ the constraint\\ $\sum \{ x_w \mid w \in W, \CELabelI(w) = \alpha, R \in \CELabelR(w), D \in \Label(w)\} \geq n$

       \item if $C$ is of the form $\preceq\!n\,R.D$ 
	then add to $\ILConstraints(v)$ the constraint\\ $\sum \{ x_w \mid w \in W, \CELabelI(w) = \alpha, R \in \CELabelR(w), D \in \Label(w)\} \leq n$
	\end{enumerate}

     \item $\Status(v) := \FExpanded$.
     \end{enumerate}

  \begin{explanation}
  To satisfy a requirement $\varphi = (\alpha\!:\,\succeq\!n\,R.C) \in \Label(v)$, one can first create a successor $w_\varphi$ of $v$ specified by the tuple $\tuple{X,Y,\alpha}$ computed at the step~\ref{item: HJFDU}, where $X$ presents $\CELabelR(w_\varphi)$, $Y$ presents $\Label(w_\varphi)$ and $\alpha$ presents $\CELabelI(w_\varphi)$, and then clone $w_\varphi$ to create $n$ successors for $v$ (or only record the intention somehow). The label of $w_\varphi$ contains only formulas necessary for realizing the requirement $\alpha\!:\!\E R.C$ and related ones of the form $\alpha\!:\!\V R'.C'$ at $v$. 
  To satisfy requirements of the form $\alpha\!:\preceq\!n'\,R'.C'$ at $v$, where $R \sqsubseteq_\mR R'$, we tend to use only copies of $w_\varphi$ extended with either $C'$ or $\ovl{C'}$ (for easy counting) as well as the mergers of such extended nodes. 
  So, we first start with the set $\mE$ constructed at the step~\ref{item: HJFDU}, which consists of tuples with information about successors to be created for $v$. 
  We then modify $\mE$ by taking necessary extensions of the nodes (see the step~\ref{item: HGDFW}). After that we continue modifying $\mE$ by adding to it also appropriate mergers of nodes (see the step~\ref{item: HMFHS}). The merging specified at the step~\ref{item: YDMAT} corresponds to the rule~\KCCcp with Explanation~\ref{exp: IUSDS}. Successors for $v$ are created at the step~\ref{item: JHFSA}. The number of copies of a node $w$ that are intended to be used as successors of $v$ is represented by a variable $x_w$ (we will not actually create such copies). The set $\ILConstraints(v)$ consisting of appropriate constraints about such variables are set at the steps \ref{item: JHDSA}-\ref{item: JEROS}. 
\koniec
  \end{explanation}
  \end{description}


\subsection{Properties of \CSHIQ-Tableaux}

Define the size of a knowledge base $\KB = \tuple{\mR,\mT,\mA}$ to be the number of bits used for the usual sequential representation of $\KB$. It is greater than the number of symbols occurring in $\KB$. If $\Size$ is the size of $\KB$ and $\leq\!n\,R.C$ or $\geq\!n\,R.C$ is a number restriction occurring in $\KB$ then:
\begin{itemize}
\item when numbers are coded in unary we have that $n \leq \Size$,  
\item when numbers are coded in binary we have that $n \leq 2^\Size$.  
\end{itemize}

\newcommand{\LemmaComplexity}{Let $\tuple{\mR,\mT,\mA}$ be a knowledge base in NNF of the logic \SHIQ and let $\Size$ be the size of $\tuple{\mR,\mT,\mA}$. Then a \CSHIQ-tableau for $\tuple{\mR,\mT,\mA}$ can be constructed in (at most) exponential time in~$\Size$ in the following cases:
\begin{enumerate}
\item numbers are coded in unary, 
\item numbers are coded in binary and, for any concept $\leq\!n\,R.C$ occurring in $\tuple{\mR,\mT,\mA}$, $n \leq \Size$,  
\item numbers are coded in binary and, for any concept $\geq\!n\,R.C$ occurring in $\tuple{\mR,\mT,\mA}$, $n \leq \Size$. 
\end{enumerate}
} 
\begin{lemma}[Complexity]\label{lemma: Complexity}
\LemmaComplexity
\end{lemma}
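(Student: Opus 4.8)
The plan is to bound the number of nodes of a \CSHIQ-tableau by an exponential in $\Size$, and then to bound the cost of each rule application by a polynomial (or at most exponential) in $\Size$ and in the current graph size, so that the whole construction terminates after exponentially many steps each of exponential cost. First I would fix the \emph{closure} of $\tuple{\mR,\mT,\mA}$: the set of all concepts (in NNF) that can occur in labels of simple nodes, and the set of all ABox assertions that can occur in labels of complex nodes. The concept closure consists of subconcepts of concepts appearing in $\KB$, their NNF-negations, the "starred'' variants $\V R.D$, $\preceq\!n\,R.D$, $\succeq\!n\,R.D$, $\top_R$ obtained by the rules, and concepts of the form $\V R.D$ generated by role inclusions $R\sqsubseteq_\mR S$; all the numeric parameters $n$ appearing here are bounded by those occurring in $\KB$, hence by $2^\Size$, and they are only ever decreased, never increased, by the rules (see \FSa, \FSb). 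One checks that the closure has size polynomial in $\Size$ (the binary encodings of the bounds contribute only linearly). For complex nodes, labels are built over the individual names occurring in the expansion; here one must argue that \NUS(\ref{item: JHEAA}) only ever \emph{identifies} individuals and never introduces fresh ones, and that the transitional rules pass to simple successors, so the set of individuals relevant to any complex node is a subset of those in $\mA$ together with... — in fact no new individuals are ever created, so the individual set is fixed and of linear size.

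Next I would count nodes. By global state caching, distinct states have distinct labels, so the number of states is at most $2^{|\text{closure}|}$ for simple states and at most $2^{(\text{assertions over fixed individuals})}$ for complex states, both exponential in $\Size$. For non-states, the key structural fact (stated in the Remark) is that the local graph of any non-state is \emph{acyclic}, and moreover each static-expansion step strictly enlarges $\FullLabel$ along a successor edge (this is exactly the parenthetical remarks in the explanations of \US and \NUS). Since $\FullLabel$ values are subsets of the closure, any chain of non-states in a local graph has length at most $|\text{closure}|$, i.e.\ polynomial; combined with global caching \emph{within} each local graph of an after-transition node and the bounded branching of the rules (binary for \NUS, and for \TF the branching is controlled below), the total number of non-states is again at most exponential in $\Size$. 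The edges coming into states and the two-layer structure add only a polynomial factor.

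The step I expect to be the main obstacle is bounding the transitional full-expansion rule \TF, because the \textbf{repeat} loop at item~\ref{item: HMFHS} closes $\mE$ under pairwise mergers $\tuple{X\cup X',Y\cup Y',\alpha}$, and a naive bound on the number of such mergers is doubly exponential. The fix is to observe that every tuple $\tuple{X,Y,\alpha}$ ever placed in $\mE$ has $X\subseteq\RN\cup\RN^-$ and $Y\subseteq\text{closure}$, so $\mE$ is a set of pairs of sets drawn from polynomially many elements; hence $|\mE|\le 2^{|\RN\cup\RN^-|}\cdot 2^{|\text{closure}|}$, which is exponential in $\Size$, and the loop adds at least one tuple per iteration, so it runs at most exponentially often, each iteration scanning $\mE$ in polynomial time in $|\mE|$. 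Consequently \TF creates at most exponentially many simple successors of a state, the integer linear constraints it generates form an $\IFDL{l,m,n}$-problem with $l,m$ exponential and $n\le\Size$, and — crucially — the feasibility checks triggered by \UPSa(4) and \UPSb(\ref{item: HRUIA},\ref{item: HFDDE}) are each solvable in exponential time in $\Size$ by Lemma~\ref{lemma: IFDL}.

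Finally I would assemble the argument: the number of rule applications is bounded by (number of nodes) $\times$ (maximum number of times a single node is touched), where a node is re-expanded at most once (by \KCCd, which deletes one edge and reconnects) plus $O(1)$ status changes (the status lattice $\Unexpanded\to\PExpanded\to\FExpanded\to\{\Incomplete,\Unsat,\Sat\}$ is of constant height, and status propagation by \UPSb visits each edge a bounded number of times), so the total number of steps is exponential in $\Size$; each step — constructing a label (polynomial), running \FindProxy (polynomial in the current graph size, hence exponential in $\Size$), expanding \TF (exponential, as above), or checking feasibility (exponential, by Lemma~\ref{lemma: IFDL}) — costs at most exponential time in $\Size$. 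Since the product of two functions exponential in $\Size$ is still exponential in $\Size$, the whole \CSHIQ-tableau is constructed in at most exponential time in $\Size$, as required. \myEnd
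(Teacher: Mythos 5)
Your proposal is correct and follows essentially the same route as the paper's proof: bound the closure polynomially, use global state caching (and caching within local graphs) to get an exponential bound on the number of nodes, bound each rule application — including the feasibility checks via Lemma~\ref{lemma: IFDL} — by an exponential, and multiply, noting that each node is re-expanded at most once. Your explicit bound on the set $\mE$ in the \TF rule is a detail the paper subsumes into its bound of $O(2^\Size \cdot 2^{\Size'} \cdot \Size)$ on the number of successors of a state, but the argument is the same.
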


\begin{theorem}[Soundness and Completeness]
\label{theorem: s-c}
Let $\tuple{\mR,\mT,\mA}$ be a knowledge base in NNF of the logic \SHIQ and $G = \tuple{V,E,\nu}$ be an arbitrary \CSHIQ-tableau for $\tuple{\mR,\mT,\mA}$. Then $\tuple{\mR,\mT,\mA}$ is satisfiable iff $\Status(\nu) \neq \Unsat$. 
\end{theorem}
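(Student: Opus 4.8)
I prove the two implications separately. For ``$\Status(\nu)\ne\Unsat \Rightarrow \tuple{\mR,\mT,\mA}$ satisfiable'' (\emph{model existence}) I build an explicit model of $\tuple{\mR,\mT,\mA}$ from $G$, following the blueprint of Remark~\ref{remark: main intuition}. For ``$\tuple{\mR,\mT,\mA}$ satisfiable $\Rightarrow \Status(\nu)\ne\Unsat$'' (\emph{soundness of the rules}) I show, by induction on the steps of the construction, that a suitable notion of realisability is preserved, so that a satisfiable input can never drive $\nu$ to $\Unsat$. Both directions become manageable once one isolates the ``characterization of statuses'' alluded to in the explanation of rule~\UPSa (Lemma~\ref{lemma: SHQWD}): for a terminated tableau it says, roughly, that $\Status(v)=\Unsat$ forces the content $\FullLabel(v)$ — read with $\preceq\!n$/$\succeq\!n$ as $\leq\!n$/$\geq\!n$ and taking the coming edge into account — to be unsatisfiable w.r.t.\ $\mR,\mT$, and conversely that a state $v$ with $\Status(v)\ne\Unsat$ all of whose $\TUS$-successors are realisable, together with a fixed solution of $\ILConstraints(v)$, carries enough data to be realised. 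Lemma~\ref{lemma: Complexity} guarantees termination, so ``upon termination'' makes sense.

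\textbf{Model existence.} Assume $\Status(\nu)\ne\Unsat$ when the construction stops. Tracing status propagation through rule~\UPSb (a non-state behaves as an ``or''-node; a state is non-$\Unsat$ iff the associated constraint set stays feasible) one obtains a complex state $v_k$ reachable from $\nu$ along nodes of status $\ne\Unsat$. Its label $\Label(v_k)$ is an ABox whose named individuals form the core of $\Delta^\mI$, with role edges and atomic-concept memberships read off directly. To each element $y$ one recursively attaches successors prescribed by the transitional rules applied at the relevant state: one copy for each successor $w$ with $\CELabelT(w)=\TUS$, and exactly $x_w$ copies for each $w$ with $\CELabelT(w)=\CQF$, where the $x_w$ come from a once-and-for-all chosen solution of $\ILConstraints$ at that state — such a solution exists because the state is not $\Unsat$ and every $w$ with $\Status(w)=\Unsat$ already had $x_w=0$ adjoined (rule~\UPSb, item~\ref{item: OFDSH}) — each copy being the non-$\Unsat$ state $w'$ reached by pushing $w$ through its local graph. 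Thus $\mI$ is a disjoint family of trees grafted onto the ABox core. One then verifies $\mI\models\mR$ (role inclusions and transitivity hold because $\CELabelR$ is closed under super-roles and rules~\US, \TP, \TF propagate $\V R.C$ along sub-roles and through transitive roles), $\mI\models\mT$ (every $C\in\mT$ lies in every relevant label), $\mI\models\mA$ ($\mA\subseteq\Label(\nu)$ and its assertions or their reductions survive to $v_k$), and, by induction on concept structure, that each $y$ satisfies every concept in its label. The delicate cases are $\V R.C$ (uses the downward propagation of rule~\US \emph{and} the converse-compatibility machinery — rules~\KCC, the sets $\FmlsRC(v)$, the formula $\FmlFB(v)$ and the $\top_R$-annotation — to see that the unique predecessor of a tree node already carries the concepts forced back along $R^-$); $\leq\!n\,S.C$ and $\preceq\!n\,S.C$ (the constraint $\sum x_w\le n$ in $\ILConstraints$ together with the $n-1$ bookkeeping of rules~\FSa/\FSb accounting for the single $S$-edge to the predecessor); and $\geq\!n\,S.C$, $\succeq\!n\,S.C$, $\E S.C$ (the constraint $\sum x_w\ge n$ plus the merging of type-tuples $\tuple{X,Y,\alpha}$ at step~\ref{item: HMFHS} of rule~\TF, so that one created successor may witness several lower bounds while being counted under an upper bound).

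\textbf{Soundness of the rules.} Conversely, let $\mI\models\tuple{\mR,\mT,\mA}$. Call a node $v$ \emph{realisable} if some model of $\mR,\mT$ realises $\FullLabel(v)$ in the sense appropriate to $\SType(v)$ — for complex $v$ the ABox with $\preceq$/$\succeq$ read as $\leq$/$\geq$ is satisfiable; for simple $v$ some domain element satisfies its concepts — and is moreover compatible with the commitments recorded on the edge into $v$. The root $\nu$ is realisable via $\mI$. I then check, rule by rule, that realisability propagates as required: the successor produced by rule~\US is realisable; for the ``or''-branching rules~\NUS and~\KCCd at least one successor is realisable; for the forming-state rules~\FSa/\FSb the produced state is realisable (the $\preceq$/$\succeq$ and $n-1$ adjustments are sound because the witnessed $S$-edge genuinely exists in the model); and for rules~\TP/\TF the produced non-states are realisable \emph{and} $\ILConstraints(v)$ is feasible — a solution is obtained by letting $x_w$ count how many $R$-successors of the realising element fall into the type of $w$, which is a legitimate $\IFDL{l,m,n}$-solution precisely because the type-tuples are, by construction, exhaustive for the relevant $\E$/$\geq$ requirements. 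Finally, a realisable node never receives $\Unsat$: the clash clauses and the $\leq\!n$-overflow clause of rule~\UPSa contradict realisability; the infeasibility clause is excluded by the feasibility just established together with the fact that every $x_w=0$ ever adjoined comes from a genuinely non-realisable $w$; and the propagation clauses of rule~\UPSb cannot fire, since a realisable ``or''-node has a realisable (hence non-$\Unsat$) successor, and a realisable state has all $\TUS$-successors realisable and a constraint set that stays feasible even after adjoining $x_w=0$ for its non-realisable $\CQF$-successors. As $\nu$ is realisable, $\Status(\nu)\ne\Unsat$.

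\textbf{Main obstacle.} The crux on both sides is the number-restriction bookkeeping. For model existence I must show that the chosen ILP solution, the type-tuple merging of rule~\TF, and the predecessor adjustments of the forming-state rules together force \emph{exactly} the right $S$-successor counts at every tree node, uniformly over all domain elements that a single globally cached simple state represents. For soundness I must extract a feasible ILP solution from an arbitrary model, which forces the type-tuples produced by rule~\TF to be rich enough that every $R$-successor of the realising element can be assigned to one of them without violating a $\preceq$-constraint. Bridging this purely combinatorial feasibility statement (Lemma~\ref{lemma: IFDL}) with the semantic content of the restrictions is where essentially all the work lies; the converse-compatibility layer (rules~\KCC with $\FmlsRC$, $\FmlFB$ and the $\top_R$ trick) is a second, largely orthogonal source of case analysis, needed exactly to make $\V R^-.C$ and $\leq\!m\,R^-.C$ information at a deep node available at its eventual predecessor.
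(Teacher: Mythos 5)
Your proposal is correct and follows essentially the same route as the paper: the paper's completeness proof builds exactly the model you describe (a saturation path to a complex state $v_k$, the ABox core, and trees attached using a fixed solution of $\ILConstraints$), merely packaging the final verification as a consistent $\mR$-saturated model graph with a separate truth lemma, and its soundness proof is your realisability invariant stated in contrapositive form (Lemma~\ref{lemma: SHQWD}, by induction on the status/expansion time stamps), with the crux likewise being the extraction of a feasible solution of $\ILConstraints(v)$ from a model by counting successors against maximal type-tuples. You have correctly located where the real work lies (the $\CEFullLabelR$/role-chain case analysis and the maximality argument for the $\preceq$-constraints), so the only thing separating your outline from the paper's proof is carrying out that case analysis in detail.
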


See the next section for the proofs of the above lemma and theorem.

To check satisfiability of $\tuple{\mR,\mT,\mA}$ one can construct a \CSHIQ-tableau for it, then return ``no'' when the root of the tableau has status $\Unsat$, or ``yes'' in the other cases. We call this the {\em \CSHIQ-tableau decision procedure}. The corollary given below immediately follows from Theorem~\ref{theorem: s-c} and Lemma~\ref{lemma: Complexity}.

\begin{corollary}
The \CSHIQ-tableau decision procedure has \EXPTIME complexity when numbers are coded in unary.
\end{corollary}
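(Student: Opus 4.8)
The plan is simply to compose the two preceding results, since the corollary is nothing more than their juxtaposition. Given an input knowledge base, the procedure first puts it into negation normal form (and, if the ABox is empty, adds $a\!:\!\top$ for a fresh individual $a$, as allowed by the footnote on page~\pageref{proc: NewSucc}). The NNF transformation is linear in the size of the input, so if $\Size$ denotes the size of the original knowledge base, the preprocessed knowledge base $\tuple{\mR,\mT,\mA}$ (in NNF, with $\mA \neq \emptyset$) still has size $O(\Size)$; hence it suffices to bound the cost of the remaining work in terms of~$\Size$.

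Next I would invoke Lemma~\ref{lemma: Complexity}: a \CSHIQ-tableau $G = \tuple{V,E,\nu}$ for $\tuple{\mR,\mT,\mA}$ can be constructed in time at most exponential in~$\Size$. Once $G$ is available, the \CSHIQ-tableau decision procedure merely inspects the single attribute $\Status(\nu)$ and returns ``no'' if it equals $\Unsat$ and ``yes'' otherwise, which is a constant-time lookup. Adding the (at most polynomial) preprocessing, the (at most exponential) tableau construction, and the (constant) final inspection, the total running time is at most exponential in~$\Size$, so the procedure runs in \EXPTIME.

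For correctness I would appeal directly to Theorem~\ref{theorem: s-c}: since $\tuple{\mR,\mT,\mA}$ is satisfiable iff $\Status(\nu) \neq \Unsat$, the procedure answers ``yes'' precisely on the satisfiable inputs, so it is a sound and complete decision procedure for knowledge-base satisfiability in \SHIQ. Combining soundness, completeness, and the \EXPTIME time bound yields the corollary.

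There is essentially no genuine obstacle here: all the real difficulty lives in Lemma~\ref{lemma: Complexity} and Theorem~\ref{theorem: s-c}, whose proofs are deferred to Section~\ref{section: proofs}. If I had to name the only thing that needs a moment's care, it is checking that the NNF transformation and the ``$\mA = \emptyset$'' fix do not increase the size beyond a constant/polynomial factor, so that ``exponential in the size of the NNF knowledge base'' and ``exponential in~$\Size$'' coincide; both are routine.
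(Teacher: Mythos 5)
Your proposal is correct and matches the paper, which gives no separate argument but states that the corollary follows immediately from Theorem~\ref{theorem: s-c} and Lemma~\ref{lemma: Complexity}. Your extra remark that NNF preprocessing and the empty-ABox fix only change the size polynomially is a reasonable (if routine) addition that the paper leaves implicit.
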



\section{Proofs}
\label{section: proofs}

\subsection{Complexity}

Let $\Size$ be the size of $\tuple{\mR,\mT,\mA}$. 
Define $\closure(\mR,\mT,\mA)$ to be the smallest set $\Gamma$ of formulas such that:
\begin{enumerate}
\item\label{item HGAAD 1} all concepts (and subconcepts) used in $\tuple{\mR,\mT,\mA}$ belong to $\Gamma$,
\item\label{item HGAAD 2} if $R$, $S$ are numeric roles and $S \sqsubseteq_\mR R$ then $\E S.\top_R$, $\V S.\bot$, $\top_R$ and $\bot$ belong to $\Gamma$, 
\item\label{item HGAAD 3} if $\V S.C \in \Gamma$ and $R \sqsubseteq_\mR S$ then $\V R.C \in \Gamma$,
\item\label{item HGAAD 4} if $\leq\!0\,R.C \in \Gamma$ then $\V R.\ovl{C} \in \Gamma$,
\item\label{item HGAAD 5} if $C \in \Gamma$ and $C$ is not of the form $\preceq\!n\,R.C$ nor $\succeq\!n\,R.C$ then $\ovl{C} \in \Gamma$,  
\item\label{item HGAAD 6} if $\E R.C \in \Gamma$ and $R$ is a numeric role then $\succeq\!1\,R.C \in \Gamma$,  
\item\label{item HGAAD 7} if $\geq\!n\,R.C \in \Gamma$, $0 \leq m \leq \Size$ and $m < n$ then $\succeq\!(n-m)\,R.C \in \Gamma$,  
\item\label{item HGAAD 8} if $\leq\!n\,R.C \in \Gamma$, $0 \leq m \leq \Size$ and $m \leq n$ then $\preceq\!(n-m)\,R.C \in \Gamma$,  
\item\label{item HGAAD 9} all assertions of $\mA$ belong to $\Gamma$, 
\item\label{item HGAAD 10} if $C \in \Gamma$ and $a$ is an individual occurring in $\Gamma$ then $a\!:\!C \in \Gamma$, 
\item\label{item HGAAD 11} if $b$ and $b'$ are individuals occurring in $\Gamma$ then $b \doteq b'$ and $b \not\doteq b'$ belong to $\Gamma$, 
\item\label{item HGAAD 12} if $R(a,b) \in \Gamma$ then $R^-(b,a)$ and $\lnot R(a,b)$ belong to $\Gamma$, 
\item\label{item HGAAD 13} if $R(a,b) \in \Gamma$ and $R \sqsubseteq_\mR S$ then $S(a,b) \in \Gamma$,
\item\label{item HGAAD 14} if $R(a,b) \in \Gamma$ and $S \sqsubseteq_\mR R$ then $S(a,b) \in \Gamma$.
\end{enumerate}

\begin{lemma}
The number of formulas of $\closure(\mR,\mT,\mA)$ is of rank $O(\Size^3)$, where $\Size$ is the size of $\tuple{\mR,\mT,\mA}$. 
\end{lemma}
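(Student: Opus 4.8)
The plan is a routine counting argument, organised by the syntactic shape of a formula and, within each shape, by the number of admissible choices for its components: the inner (sub)concept, the role, the individual name, and the numeric superscript. I would start from three elementary bounds on which everything rests: the number of role names is $O(\Size)$, hence $|\RN \cup \RN^{-}| = O(\Size)$; the number of individual names occurring in $\tuple{\mR,\mT,\mA}$ is $O(\Size)$; and the number of concepts, together with all of their subconcepts, that syntactically occur in $\tuple{\mR,\mT,\mA}$ is $O(\Size)$. Since every defining condition of $\closure(\mR,\mT,\mA)$ is monotone (of the form ``if $\dots \in \Gamma$ then $\dots \in \Gamma$''), the least $\Gamma$ satisfying all of them equals the set of formulas derivable from these seeds by finitely many applications of the conditions; so it suffices to list the finitely many formula shapes that can ever be produced and to bound the number of instances of each.

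The core of the work is to bound the concepts lying in $\closure(\mR,\mT,\mA)$. The conditions ``$\V S.C \in \Gamma$ and $R \sqsubseteq_\mR S$ implies $\V R.C \in \Gamma$'', ``$\leq\!0\,R.C \in \Gamma$ implies $\V R.\ovl{C} \in \Gamma$'', and ``$C \in \Gamma$ (not a $\preceq$- or $\succeq$-concept) implies $\ovl{C} \in \Gamma$'' never create a genuinely new \emph{inner} concept: the concept placed under a fresh $\V$ or a fresh negation is always (the NNF of the negation of) a concept that is already present, hence one of $O(\Size)$ possibilities, paired with one of $O(\Size)$ roles; so these clauses contribute $O(\Size^2)$ concepts. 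The clause about numeric roles contributes $\E S.\top_R$, $\V S.\bot$, $\top_R$, $\bot$ over the $O(\Size^2)$ pairs $(S,R)$ with $S \sqsubseteq_\mR R$, i.e.\ $O(\Size^2)$ concepts. What is left are the clauses producing $\succeq\!k\,R.C$ (from $\E R.C$ with $R$ numeric, or from $\geq\!n\,R.C$) and $\preceq\!k\,R.C$ (from $\leq\!n\,R.C$); this is the point that needs care, since $n$ is coded in binary and may be as large as $2^\Size$. The key observation is that, up to its numeric superscript, each such concept is tied to one of the $O(\Size)$ number restrictions already occurring in $\tuple{\mR,\mT,\mA}$, from which it inherits its role $R$ and inner concept $C$; counting these concepts by their template $\succeq\,R.C$ or $\preceq\,R.C$ (with the bound, itself coded in binary, recorded separately rather than as a multiplicity), there are only $O(\Size)$ of each, so the closure contains $O(\Size^2)$ concepts in total. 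I expect this clause to be the main obstacle, precisely because it is where one must argue that the binary coding of number restrictions does not inflate the count.

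It then remains to assemble the total from the remaining clauses. Concept assertions $a\!:\!C$ range over $O(\Size)$ individuals and $O(\Size^2)$ concepts, giving $O(\Size^3)$; the assertions $b \doteq b'$ and $b \not\doteq b'$ range over $O(\Size^2)$ ordered pairs of individuals; role assertions $R(a,b)$, together with the $R^-(b,a)$, $\lnot R(a,b)$, and $S(a,b)$ (for $R \sqsubseteq_\mR S$ or $S \sqsubseteq_\mR R$) that they induce, range over $O(\Size)$ roles and $O(\Size^2)$ ordered pairs of individuals, giving $O(\Size^3)$; and the assertions of $\mA$ themselves number $O(\Size)$. Summing these dominant $O(\Size^3)$ contributions gives $|\closure(\mR,\mT,\mA)| = O(\Size^3)$. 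To make the inventory rigorous I would close with an induction on the length of a derivation witnessing $\varphi \in \closure(\mR,\mT,\mA)$, verifying that each defining condition, applied to formulas of the listed shapes, again yields a formula of a listed shape, so that nothing else can appear. Apart from the number-restriction clause flagged above, the whole argument is bookkeeping.
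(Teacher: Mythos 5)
Your inventory of the closure's clauses is the right skeleton, and you correctly single out the clause generating $\succeq\!(n-m)\,R.C$ and $\preceq\!(n-m)\,R.C$ as the crux, but the resolution you offer there does not work. The lemma counts \emph{formulas}, and $\succeq\!3\,R.C$ and $\succeq\!5\,R.C$ are two distinct formulas; ``counting by template with the bound recorded separately rather than as a multiplicity'' is not a count of formulas at all. Taken literally, the clause ``if $\geq\!n\,R.C \in \Gamma$ and $0 \leq m < n$ then $\succeq\!(n-m)\,R.C \in \Gamma$'' puts $n$ pairwise distinct concepts $\succeq\!1\,R.C,\ldots,\succeq\!n\,R.C$ into the closure, and since $n$ may be as large as $2^\Size$ under binary coding, your argument as written would only yield an exponential bound, not $O(\Size^2)$ concepts. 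So the step you flag as ``the main obstacle'' is indeed where the proposal fails.

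The missing idea is to bound the decrement $m$, not to ignore it. Inspecting where these annotated concepts are actually produced: the forming-state rule \FSa decrements by at most $1$, and \FSb decrements by $m = \sharp\{b \mid \{R(a,b),\, b\!:\!D\} \subseteq \FullLabel(v)\}$, which is at most the number of individual names occurring in $\mA$, i.e.\ $O(\Size)$ (no fresh individual names are ever introduced). Hence only $O(\Size)$ instances of each template $\succeq\,R.C$ or $\preceq\,R.C$ can ever occur in a node label, giving $O(\Size^2)$ such concepts and, after prefixing by individuals, $O(\Size^3)$ assertions --- which is exactly what the complexity argument in the paper needs (labels being subsets of a set of size $\Size'$ with $2^{\Size'}$ singly exponential). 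A correct write-up must therefore either restrict the range of $m$ in the closure definition to $0 \leq m \leq \min(n,\sharp\IN_\mA)$ (plus the decrement-by-one case) and prove the bound for that set, or prove the bound directly for the set of formulas occurring in tableau nodes. The paper itself omits the proof as ``straightforward,'' but your remaining bookkeeping (roles, individuals, pairs, role assertions and their induced variants) is fine once this clause is repaired.
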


\begin{proof}
The set $\Gamma = \closure(\mR,\mT,\mA)$ can be constructed by initializing $\Gamma$ according to the items~\ref{item HGAAD 1} and \ref{item HGAAD 9}, and then repeatedly applying the rules stated in the remaining items of the list. After initialization the set $\Gamma$ has $O(\Size)$ formulas. The rules in the items~\ref{item HGAAD 2}-\ref{item HGAAD 8} add $O(\Size^2)$ formulas to $\Gamma$.  
The rule in the item~\ref{item HGAAD 10} adds $O(\Size^3)$ formulas to $\Gamma$ (as $\Gamma$ may contains $O(\Size^2)$ concepts and there may be $O(\Size)$ named individuals). The rules in the items~\ref{item HGAAD 11}-\ref{item HGAAD 14} add $O(\Size^2)$ formulas to $\Gamma$. Thus, at the end, $\Gamma$ is of rank $O(\Size^3)$. 
\myEnd
\end{proof}

We recall below Lemma~\ref{lemma: Complexity} before presenting its proof. 

\medskip

\noindent\textbf{Lemma~\ref{lemma: Complexity}.} {\em \LemmaComplexity}
\begin{proof}
Let's construct any \CSHIQ-tableau $G = \tuple{V,E,\nu}$ for $\tuple{\mR,\mT,\mA}$. 

Let $\Size'$ be the number of formulas of $\closure(\mR,\mT,\mA)$. We have $\Size' = O(\Size^3)$. 
For each $v \in V$, $\Label(v) \subseteq \closure(\mR,\mT,\mA)$. Since states of $G$ are cached, it follows that $G$ has no more than $2^{\Size'}$ states. Each state has no more than $O(2^\Size \cdot 2^{\Size'} \cdot \Size)$ successors (since each successor created by the transitional full-expansion rule is characterized by a tuple $\tuple{X,Y,\alpha}$, where $X$ is a set of roles, $Y$ is a set of concepts, and $\alpha$ is $\Null$ or a named individual).\footnote{The bound can be made tighter, e.g., using $O(\Size^2)$ instead of $\Size'$.} If $v$ is a successor of a state, then nodes in the local graph of $v$ are cached and hence there are no more than $2^{\Size'}$ of them. Therefore, $G$ has $O(2^{f(\Size)})$ nodes, where $f(\Size)$ is a polynomial of $\Size$. 

Checking feasibility of $\ILConstraints(v)$ for a state $v$ is an $\IFDL{\Size$, $2^\Size \cdot 2^{\Size'} \cdot \Size$, $\Size}$-problem that satisfies the assumptions of Lemma~\ref{lemma: IFDL} for the first case and satisfies the assumptions of Lemma~\ref{lemma: IFDL2} for the remaining two cases, and hence can be solved in (at most) exponential time in~$\Size$. Thus, checking whether a rule is applicable and applying a rule can be done in time $O(2^{g(\Size)})$, where $g(\Size)$ is a polynomial of $\Size$. 

Choosing a node to expand can be done in polynomial time in the size of the graph. As each node is re-expanded at most once (for converse compatibility), we conclude that the graph $G$ can be constructed in (at most) exponential time in~$\Size$ in the considered cases.
\myEnd
\end{proof}

\subsection{Soundness}

\begin{lemma} \label{lemma: UYDHW}
Let $G = \tuple{V,E,\nu}$ be a \CSHIQ-tableau for $\tuple{\mR,\mT,\mA}$. Then, for every simple node $v \in V$, $\FullLabel(v)$ is equivalent to $\Label(v)$. That is, for any interpretation $\mI$, $(\FullLabel(v))^\mI = (\Label(v))^\mI$.
\end{lemma}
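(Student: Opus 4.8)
\medskip
\noindent\textbf{Proof plan.}
The plan is to prove the statement by induction on the construction of $G$, exploiting the observation that for a simple $v$ the set $\FullLabel(v)$ differs from $\Label(v)$ only in two ways: it adds the extra concepts of $\RFormulas(v)$, and it deletes the pseudo-formulas of the form $\preceq\!n\,R.C$ or $\succeq\!n\,R.C$, which carry no semantics. Since the label of a simple node consists of concepts, only the concept-producing rules are relevant. I would first record the trivial inclusion: every genuine (i.e.\ non-$\preceq$, non-$\succeq$) concept of $\Label(v)$ belongs to $\FullLabel(v)$, hence $(\FullLabel(v))^\mI \subseteq (\Label(v))^\mI$ for every $\mI$ --- reading $(\cdot)^\mI$ on a set as the intersection over its genuine members, since $\preceq\!n\,R.C$ and $\succeq\!n\,R.C$ are given no interpretation. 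The whole content is therefore the reverse inclusion, i.e.\ that every genuine concept in $\RFormulas(v)$ is entailed by $\Label(v)$.

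To obtain this I would carry, as part of the induction, the following invariant for every node $v$ at every stage of the construction: there is a well-founded order on $\RFormulas(v)$ so that each genuine $C \in \RFormulas(v)$ comes with a finite set $\Phi_C$ of genuine concepts, contained in $\Label(v)$ together with the elements of $\RFormulas(v)$ strictly below $C$, and satisfying $(\Phi_C)^\mI \subseteq C^\mI$ for every interpretation $\mI$. A bottom-up induction along the order then gives $(\Label(v))^\mI \subseteq C^\mI$ for each $C \in \RFormulas(v)$, hence $(\Label(v))^\mI = (\FullLabel(v))^\mI$, which is the claim (and, incidentally, the same argument covers complex $v$). The witnesses are read directly off the rules: a conjunction $D \mand D'$ moved into $\RFormulas$ by \US takes $\Phi := \{D,D'\}$; a concept $\leq\!0\,R.D$ takes $\Phi := \{\V R.\ovl{D}\}$; a concept $\geq\!0\,R.D$ takes $\Phi := \emptyset$; and a disjunction $C \mor D$ pushed into $\RFormulas$ by \NUS takes $\Phi := \{C\}$, resp.\ $\{D\}$, in the two successors --- here only entailment, not equivalence, holds, which is exactly why the invariant is stated with $\subseteq$. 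In every case the members of $\Phi$ are precisely the formulas the rule simultaneously inserts into the successor's label.

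The induction step then reduces to checking, for each rule that creates or modifies a simple node --- \TP and \TF (which create a fresh simple node with $\RFormulas=\emptyset$, so the invariant is vacuous there), \US, \NUS, the forming-state rule \FSa, and the re-expansion rule \KCCd --- that the invariant survives, and likewise when an already-present node is reused as a proxy and its $\RFormulas$ grows by the set passed to $\ConToSucc$. The structural fact that makes all of these cases go through is \emph{monotonicity along static-expansion chains}: if $w$ is obtained from a simple $v$ by one of \US, \NUS, \FSa, \KCCd (or $v$ is extended in place as a proxy), then $\Label(w)\cup\RFormulas(w) \supseteq \Label(v)\cup\RFormulas(v)$, because these rules move a formula from the label into $\RFormulas$ only while adding its reduction back to the label and never drop anything from $\Label\cup\RFormulas$ except $\preceq/\succeq$ formulas. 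Consequently every $\Phi_C$ inherited from $v$ still lies in $\Label(w)\cup\RFormulas(w)$, so only the freshly inserted $\RFormulas$ entries need new witnesses, which are the ones above; these are placed at the top of the order, refined by subformula structure among themselves --- well-defined since all formulas involved belong to the finite set $\closure(\mR,\mT,\mA)$. The $\preceq/\succeq$ entries produced by \FSa and \FSb are never added to $\RFormulas$ and are stripped off in $\FullLabel$, so they never interfere.

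I expect the principal obstacle to be the bookkeeping for rule \US: it forms the successor's label as $\Label(v)$ augmented with new consequences but then \emph{removes} both the reduced formulas and the old $\RFormulas(v)$, so $\Label(w)$ need not include $\Label(v)$, and one must verify that each genuine formula so removed is still recoverable --- either because it sits unchanged in $\RFormulas(w)$ (handled by the inductive invariant for $w$ itself, resolved bottom-up) or because its own reduction was inserted into $\Label(w)$ and can be traced down the subformula order. A secondary delicate point is global caching: the union performed by $\ConToSucc$ can enlarge the $\RFormulas$ of a node reachable from several predecessors, so the invariant must be shown closed under this operation too, once more via monotonicity together with a fixed well-order on the finitely many formulas of $\closure(\mR,\mT,\mA)$.
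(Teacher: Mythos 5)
The paper offers no argument for this lemma beyond declaring it straightforward, and your plan is a correct and essentially canonical filling-in of that argument: the whole content is that each genuine formula moved into $\RFormulas$ by \US or \NUS is entailed by the label of the node that receives it (its conjuncts for a conjunction, $\V R.\ovl{D}$ for $\leq\!0\,R.D$, nothing for $\geq\!0\,R.D$, the chosen disjunct for a disjunction), and your monotonicity of $\Label(\cdot) \cup \RFormulas(\cdot)$ along static expansions, combined with a well-founded order extending the proper-subformula relation, correctly absorbs the only delicate point, namely the growth of $\RFormulas$ when $\ConToSucc$ reuses a proxy. The one inaccuracy is your parenthetical claim that the same argument covers complex nodes: the subrules of \NUS that handle number restrictions on named individuals place assertions such as $b \doteq b'$ and $\lnot S(a,b)$ into $\RFormulas$, and these are not entailed by the corresponding labels, which is precisely why the lemma is restricted to simple nodes. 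This aside does not affect the correctness of your proof of the statement as given.
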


The proof of this lemma is straightforward. 

Let $G$ be a \CSHIQ-tableau for $\tuple{\mR,\mT,\mA}$. 
For each node $v$ of $G$ with $\Status(v) \in \{\Incomplete$, $\Unsat$, $\Sat\}$, let $\DSTimeStamp(v)$ be the moment at which $\Status(v)$ was changed to its final value (i.e., determined to be $\Incomplete$, $\Unsat$ or $\Sat$). $\DSTimeStamp$ stands for ``determined-status time-stamp''. 
For each non-state $v$ of $G$, let $\ETimeStamp(v)$ be the moment at which $v$ was expanded the last time.\footnote{Each non-state may be re-expanded at most once (for making converse compatibility) and each state is expanded at most once.}

For a simple non-state $v$ with $\AfterTrans(v) = \True$, $\StatePred(v) = u$ and $\CELabelI(v) = \alpha$, we define $\CEFullLabelR(v)$ to be $\CELabelR(v)$ extended with all $\lnot S$ such that:
\begin{itemize}
\item $S \notin \CELabelR(v)$ and $S \sqsubseteq_\mR R$ for some $R \in \CELabelR(v)$, 
\item some node $w$ in the local graph of $v$ has $\Label(w)$ containing $\leq\!m\,R^-.C$ as well as $\E S^-.D$ or $\geq\!n\,S^-.D$ with $n > 0$,
\item $\alpha\!:\!C \in \FullLabel(u)$ and $\alpha\!:\!\ovl{D} \notin \FullLabel(u)$. 
\end{itemize}

For $X = \{\alpha\!:\!C_1,\ldots,\alpha\!:\!C_n\}$, let $\Cnj(X) = \alpha\!:\!(C_1 \mand \ldots \mand C_n)$. 

\begin{lemma} \label{lemma: SHQWD}
Let $G = \tuple{V,E,\nu}$ be a \CSHIQ-tableau for $\tuple{\mR,\mT,\mA}$. For every $v \in V :$
\begin{enumerate}
\item if $\Status(v) = \Unsat$ then
  \begin{enumerate}
  \item case $\Type(v) = \State$ and $\SType(v) = \Simple :$ for any predecessor $u$ of $v$ and for $u_0 = \StatePred(u)$, $u_1 = \AfterTransPred(u)$, we have that:
     \begin{enumerate}
     \item\label{ass: HGDSO} if $\Status(u_0) \neq \Incomplete$ and $\SType(u_0) = \Simple$ then there do not exist any model $\mI$ of both $\mR$ and $\mT$ and any elements $x,y \in \Delta^\mI$ such that $x \in (\FullLabel(u_0))^\mI$, $y \in (\Label(v))^\mI$, $\tuple{x,y} \in R^\mI$ for all $R \in \CELabelR(u_1)$, and $\tuple{x,y} \notin R^\mI$ for all $(\lnot R) \in \CEFullLabelR(u_1)$,
     \item\label{ass: HADSO} if $\Status(u_0) \neq \Incomplete$ and $\SType(u_0) = \Complex$ then there do not exist any model $\mI$ of $\tuple{\mR,\mT,\FullLabel(u_0)}$ and any element $y \in \Delta^\mI$ such that: $y \in (\Label(v))^\mI$, and for $x = (\CELabelI(u_1))^\mI$, $\tuple{x,y} \in R^\mI$ for all $R \in \CELabelR(u_1)$, and $\tuple{x,y} \notin R^\mI$ for all $(\lnot R) \in \CEFullLabelR(u_1)$, 
     \end{enumerate}
  \item\label{ass: JHDSM} case $\Type(v) = \State$ and $\SType(v) = \Complex :$ $\FullLabel(v)$ is unsatisfiable w.r.t.~$\mR$ and~$\mT$,
  \item\label{ass: JHWAQ} case $\Type(v) = \NonState$ and $\StatePred(v) = \Null :$ $\FullLabel(v)$ is unsatisfiable w.r.t.~$\mR$ and~$\mT$,

  \item\label{ass: UIEAL} case $\Type(v) = \NonState$, $u = \StatePred(v) \neq \Null$ and $\SType(u) = \Simple :$ if $v_0 = \AfterTransPred(v)$ then there do not exist any model $\mI$ of both $\mR$ and $\mT$ and any elements $x,y \in \Delta^\mI$ such that $x \in (\FullLabel(u))^\mI$, $y \in (\Label(v))^\mI$, $\tuple{x,y} \in R^\mI$ for all $R \in \CELabelR(v_0)$, and $\tuple{x,y} \notin R^\mI$ for all $(\lnot R) \in \CEFullLabelR(v_0)$,

  \item\label{ass: KSRLW} case $\Type(v) = \NonState$, $u = \StatePred(v) \neq \Null$ and $\SType(u) = \Complex :$ if $v_0 = \AfterTransPred(v)$ then there do not exist any model $\mI$ of $\tuple{\mR,\mT,\FullLabel(u)}$ and any element $y \in \Delta^\mI$ such that: $y \in (\Label(v))^\mI$, and for $x = (\CELabelI(v_0))^\mI$, $\tuple{x,y} \in R^\mI$ for all $R \in \CELabelR(v_0)$, and $\tuple{x,y} \notin R^\mI$ for all $(\lnot R) \in \CEFullLabelR(v_0)$,
  \end{enumerate}

\item if $\Status(v) = \Incomplete$, $\Type(v) = \State$ and $\FmlsRC(v) \neq \emptyset$ then $\FullLabel(v) \cup \{\lnot\Cnj(\FmlsRC(v))\}$ is unsatisfiable w.r.t.~$\mR$ and~$\mT$,

\item if $\Type(v) = \NonState$ and $w_1,\ldots,w_k$ are all the successors of $v$ then, for every model $\mI$ of $\mR$ and every $x \in \Delta^\mI$, 
  \begin{enumerate}
  \item case $\SType(v) = \Simple :$ $x \in (\FullLabel(v))^\mI$ iff there exists $1 \leq i \leq k$ such that $x \in (\FullLabel(w_i))^\mI$,
  \item case $\SType(v) = \Complex :$ $\mI$ is a model of $\FullLabel(v)$ iff there exists $1 \leq i \leq k$ such that $\mI$ is a model of $\FullLabel(w_i)$.
  \end{enumerate}
\end{enumerate}
\end{lemma}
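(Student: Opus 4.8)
We describe the plan. The four numbered claims are proved \emph{simultaneously} by induction along the timeline of the construction of $G$, where an ``event'' is either an expansion of a node (time-stamped by $\ETimeStamp$) or the determination of a status to one of the final values $\Incomplete$, $\Unsat$, $\Sat$ (time-stamped by $\DSTimeStamp$); at each event we establish the instance of the relevant claim that thereby becomes assertable, using as induction hypothesis all claim-instances whose triggering events are strictly earlier. Two facts are used throughout. First, Lemma~\ref{lemma: UYDHW}, so that for simple nodes $\Label$ and $\FullLabel$ may be used interchangeably. Second, a \emph{forward soundness} property of the rules, to be verified clause by clause: every formula added by \US or \NUS is a logical consequence (w.r.t.\ $\mR$ and $\mT$) of $\FullLabel$ of the predecessor; and whenever \TP or \TF creates a successor $w$ of a state $v$ on account of a requirement $\alpha\!:\!\E R.D$, $\alpha\!:\succeq\!n\,R.D$ or $\alpha\!:\preceq\!n\,R.C$ in $\Label(v)$, any model of $\mR,\mT$ realising $\FullLabel(v)$ at the $\alpha$-point must have an $R$-neighbour realising $\Label(w)$, the connecting edge carrying every role of $\CELabelR(w)$. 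Each clause mirrors a valid \SHIQ\ entailment, so these checks are routine.

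Claims~3 and~2 follow by inspecting which rule was last applied to $v$. For claim~3: if the last expansion of $v$ was by \US, the single successor $w$ satisfies $\FullLabel(w)^\mI=\FullLabel(v)^\mI$ for every model $\mI$ of $\mR$ by forward soundness; if by \NUS, the two successors add $\alpha\!:\!C$ versus $\alpha\!:\!D$ (for $\alpha\!:\!(C\mor D)$), or $b\!:\!C$ versus $b\!:\!\ovl C$, or $b\doteq b'$ versus $b\not\doteq b'$, or $S(a,b)$ versus $\lnot S(a,b)$, so the disjunction of the two $\FullLabel$-extensions is equivalent to $\FullLabel(v)$ by excluded middle; if by a forming-state rule, the unique successor only gains $\preceq$/$\succeq$-formulas, which $\FullLabel$ discards; and if $v$ was re-expanded by \KCCd (as the node $u$ of that rule), then either it gets the two successors built from $\FmlFB$ and its negation (excluded middle again), or the single successor with label $\Label(v)\cup\FmlsRC(v')$, where $v'$ is the deleted incomplete state, obtained from $v$ by a forming-state rule; hence $\FullLabel(v')=\FullLabel(v)$ and claim~2 for $v'$ (whose event is earlier) makes the added formulas redundant. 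Claim~2 is read off rule~\KCCa, the only producer of states with nonempty $\FmlsRC$: each member $\alpha\!:\!C$ (or $\alpha\!:\!\V R^-.C$) of $\FmlsRC(v)$ witnesses a node $w$ in the local graph of $v$ reached from $v$ through its after-transition successor $w_0$ via a chain of unary static expansions, with $R\in\CELabelR(w_0)$, $\alpha=\CELabelI(w_0)$ and $\V R^-.C\in\Label(w)$; forward soundness along this chain yields, in any model realising $\FullLabel(v)$ for $\alpha$, an $R$-neighbour $y\in(\V R^-.C)^\mI$, so the $\alpha$-point lies in $C^\mI$ (and in $(\V R^-.C)^\mI$ when $R$ is transitive), i.e.\ realises $\FmlsRC(v)$; hence $\FullLabel(v)\cup\{\lnot\Cnj(\FmlsRC(v))\}$ has no model of $\mR,\mT$.

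Claim~1 is the substantive part. The base cases are the first three clauses of \UPSa -- where the label of $v$ is blatantly unsatisfiable (a clash $\{\varphi,\ovl\varphi\}$, or $\alpha\!:\!\bot$, or $a\not\doteq a$, or an explicit violation of a $\leq\!n\,R.C$), covering cases~1(b),(c) and, a fortiori, 1(a),(d),(e) -- and rule~\KCCb in its clauses that set $\Unsat$, where a model as in case~1(d)/(e) would push the $\alpha$-point into $C^\mI$ (via an $R$-edge to some $y\in(\V R^-.C)^\mI$) and into $\ovl C^\mI$ at once. The inductive cases come from status propagation. For a non-state $v$ that is an ``or''-node, rule~\UPSb together with claim~3 and the induction hypothesis on the successors settle it, the side-effect that adds $x_v=0$ to $\ILConstraints(\StatePred(v))$ being what keeps the state case consistent with the stored constraint sets. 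For a state $v$: if a $\TUS$-successor $w$ created by \TP for a requirement $\alpha\!:\!\E R.D$ has become $\Unsat$, the induction hypothesis (claim~1, case~1(d)/(e) for $w$) says no model of $\mR,\mT$ realising $\FullLabel(v)$ can supply the demanded $R$-neighbour, so $\FullLabel(v)$ is unsatisfiable. The remaining, and hardest, situation is that $\ILConstraints(v)$, augmented by $x_w=0$ for every $\CQF$-successor $w$ with $\Status(w)=\Unsat$, is infeasible, and we must conclude unsatisfiability of $\FullLabel(v)$ -- or, when $v$ is a simple state, the contextual statement of case~1(a).

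This last step is the main obstacle, and the plan is to argue it contrapositively. Given a model $\mI$ of $\mR,\mT$ realising $\FullLabel(v)$ at the $\alpha$-points, one builds a feasible integer solution as follows. Recall that \TF creates one successor type per requirement $\alpha\!:\succeq\!n\,R.D$, extends every type by $C$ or $\ovl C$ for each relevant $\alpha\!:\preceq\!n\,R.C$ (so every type decides every such $C$), and closes $\mE$ under the mergers of step~(4). Each $R$-neighbour $y$ of the $\alpha$-point witnesses a clash-free set of $\succeq$-requirements and decides each relevant $C$, hence is realised by a unique merger type $w$ present in $\mE$; one sets $x_w$ to the number of neighbours so assigned. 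Every $\succeq\!n\,R.D$-constraint is then met since its $n$ witnesses go to types containing $D$; every $\preceq\!n\,R.C$-constraint is met since the neighbours assigned to $C$-containing types are among the at most $n$ neighbours in $C^\mI$; and a type $w$ with $x_w>0$ is realised by a neighbour, so $\Status(w)\neq\Unsat$ by the induction hypothesis and no added constraint $x_w=0$ is violated -- contradicting infeasibility. For the simple-state case~1(a) the same counting is run one transition higher, and the genuine extra work is the bookkeeping of the $\preceq$/$\succeq$ decrements performed by \FSa and of the $\lnot S$ entries of $\CEFullLabelR$ (which record exactly how the forming-state rule exploited the relationship between the $v$-point and its predecessor), so that the one $R$-neighbour leading back to the predecessor is counted consistently; the parallel complex-node arguments, with $\alpha$ a named individual and with the $\leq\!n\,R.C$ bookkeeping of \FSb tracked as well, are analogous. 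The two delicate points to get right are that the merger construction of \TF really yields every clash-free combination a model neighbour can realise, and that these decrement-adjustments mesh exactly with the counting.
\myEnd
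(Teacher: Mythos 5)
Your plan follows the same route as the paper's proof: simultaneous induction on $\DSTimeStamp$ and $\ETimeStamp$, routine verification of claims~2 and~3 by inspecting the last rule applied, and a contrapositive argument for the crux of claim~1 --- given a model realising $\FullLabel(v)$, build a feasible solution of $\ILConstraints(v)$ by assigning each $R$-neighbour of the $\alpha$-point to a maximal merger type created by \TF and counting. Up to that point the two arguments coincide.

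However, there is a genuine gap at exactly the place where the paper's proof does its real work, and you yourself flag it as a ``delicate point to get right'' without supplying the argument. In case~1(a) the solution must also account for the predecessor element $x$ of $y$: for a requirement $\succeq\!n\,S.D \in \Label(v)$ with $S^- \notin \CELabelR(u_1)$ and $\ovl{D} \notin \FullLabel(u_0)$, the counter $n_{w_i}$ of some type $w_i$ is incremented one extra time on account of $x$ itself. One must then show that this extra increment cannot violate a constraint $\sum x_w \leq m$ stemming from $\preceq\!m\,R.C \in \Label(v)$, i.e.\ that $R \notin \CELabelR(w_i)$ or $C \notin \Label(w_i)$ whenever the decrement from $\leq\!(m{+}1)\,R.C$ was already ``charged'' to the edge back to $u_0$ (that is, $R^- \in \CELabelR(u_1)$ and $C \in \FullLabel(u_0)$). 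This is not mere bookkeeping: the paper proves it by supposing the contrary, extracting from $\CELabelR(w_i)$ a chain of roles $R_0 = R, R_1, \ldots, R_h = S$ with connecting subroles $S_j \sqsubseteq_\mR R_{j-1}, R_j$ witnessed by number restrictions in $\Label(v)$, and propagating $R_j^- \in \CELabelR(u_1)$ along the chain using the rule \KCCc and the $\lnot S$ entries of $\CEFullLabelR(u_1)$, until it contradicts $S^- \notin \CELabelR(u_1)$; a separate application of the definition of $\CEFullLabelR$ disposes of the subcase $S \sqsubseteq_\mR R$. An analogous chain argument (through the subrule~\ref{item: OSJRS} of \NUS and the rule \US, deriving $S(a,b) \in \Label(v)$ step by step) is needed for the complex-state case~1(b) to show that the increments at named individuals $b$ with $S(a,b) \notin \Label(v)$ do not overload the $\preceq$ constraints. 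Without these arguments the claim that ``these decrement-adjustments mesh exactly with the counting'' is an assertion, not a proof, and it is precisely the point where the interaction of $\CEFullLabelR$, the merger step~\ref{item: YDMAT} of \TF, and the rules \KCCc, \KCCcp is actually used.
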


\begin{proof} We prove this lemma by induction on both $\DSTimeStamp(v)$ and $\ETimeStamp(v)$. 

Consider the case~\ref{ass: HGDSO} when $v$ gets status $\Unsat$ because $\ILConstraints(v)$ is infeasible. For this case, we prove the contrapositive. Suppose that:
\begin{equation}\label{eq: YUENB} 
\parbox{14cm}{$\Type(v) = \State$, $\SType(v) = \Simple$, $\tuple{u,v} \in E$, $u_0 = \StatePred(u)$, $u_1 = \AfterTransPred(u)$, $\Status(u_0) \neq \Incomplete$, $\SType(u_0) = \Simple$, $\mI$ is a model of both $\mR$ and $\mT$, $x \in (\FullLabel(u_0))^\mI$, $y \in (\Label(v))^\mI$, $\tuple{x,y} \in R^\mI$ for all $R \in \CELabelR(u_1)$, and $\tuple{x,y} \notin R^\mI$ for all $(\lnot R) \in \CEFullLabelR(u_1)$.}
\end{equation}
We show that $\ILConstraints(v)$ is feasible.
Without loss of generality, assume that $\mI$ is finitely-branching.\footnote{It is known that the DL \SHIQ has the finitely-branching model property.}
Thus, the set $Z = \{z \in \Delta^\mI \mid z \neq x$, $\tuple{y,z} \in R^\mI$ for some $R \in \RN\cup\RN^-\}$ is finite. 
Let us compute a solution $\mS$ for $\ILConstraints(v)$ as follows.
\begin{enumerate}
\item For each successor $w$ of $v$, set $n_w := 0$.
\item For each $z \in Z$ do:
  \begin{enumerate}
  \item\label{item: GHSDO} let $w_1,\ldots,w_k$ be all the successors of $v$ such that, for each $1 \leq i \leq k\,$:
     \begin{enumerate}
     \item\label{item: JDJAP} $\CELabelT(w_i) = \CQF$, 
     \item\label{item: JDYTS} $z \in (Label(w_i))^\mI$,
     \item\label{item: JPWBP} $\tuple{y,z} \in R^\mI$ for all $R \in \CELabelR(w_i)$,
     \item $w_i$ is ``maximal'' in the sense that there does not exist any successor $w'_i \neq w_i$ of $v$ such that
	\begin{itemize}
	\item $\Label(w'_i) \supseteq \Label(w_i)$ and $\CELabelR(w'_i) \supseteq \CELabelR(w_i)$, 
	\item $\CELabelT(w'_i) = \CQF$, 
	\item $z \in (\Label(w'_i))^\mI$,
	\item $\tuple{y,z} \in R^\mI$ for all $R \in \CELabelR(w'_i)$;
	\end{itemize}
     \end{enumerate} 

  \item\label{item: JRIAH} for each $1 \leq i \leq k$, set $n_{w_i} := n_{w_i} + 1$.
  \end{enumerate}
\item $\mS_0 := \{x_w = n_w \mid w$ is a successor of $v\}$.
\item\label{item: HWUZS} Let $w_1,\ldots,w_k$ be all the successors of $v$ defined as in the step~\ref{item: GHSDO} for the case $z = x$.
\item For each $1 \leq i \leq k$ do: 
  \begin{enumerate}
  \item if there exists $\succeq\!n\,S.D \in \Label(v)$ such that $S \in \CELabelR(w_i)$, $D \in \Label(w_i)$, $S^- \notin \CELabelR(u_1)$ and $\ovl{D} \notin \FullLabel(u_0)$ then 
	\begin{enumerate} 
	\item\label{item: HGRES} set $n_{w_i} := n_{w_i} + 1$.
	\end{enumerate}
  \end{enumerate}
\item $\mS := \{x_w = n_w \mid w$ is a successor of $v\}$.
\end{enumerate}

We prove that $\mS$ is a solution for $\ILConstraints(v)$. 

We first show that, for any $w_i$ at the step~\ref{item: GHSDO} or~\ref{item: HWUZS}, $\tuple{y,z} \notin S^\mI$ for all $(\lnot S) \in \CEFullLabelR(w_i)$. 
Let $(\lnot S) \in \CEFullLabelR(w_i)$. Thus, there exist $R$, $w'$, $m$, $C$, $n$, $D$ such that: $S \sqsubseteq_\mR R$, $R \in \CELabelR(w_i)$, $w'$ is a node in the local graph of $w_i$, $\Label(w')$ contains $\leq\!m\,R^-.C$ as well as $\E S^-.D$ or $\geq\!n\,S^-.D$ with $n > 0$, $C \in \FullLabel(v)$ and $\ovl{D} \notin \FullLabel(v)$. As $\Status(v) \neq \Incomplete$ and the rule $\KCCcp$ was not applicable to $w'$, either $\E S.\top_R$ or $\V S.\bot$ must belong to $\Label(v)$. If $\V S.\bot \in \Label(v)$ then, since $y \in (\Label(v))^\mI$, we have that $y \in (\V S.\bot)^\mI$ and therefore $\tuple{y,z} \notin S^\mI$. Suppose $\E S.\top_R \in \Label(v)$. If $\tuple{y,z} \in S^\mI$ then, by the nature of the transitional full-expansion rule and the maximality of $w_i$, we have that $\top_R \in \Label(w_i)$ and $S \in \CELabelR(w_i)$, which contradicts the assumption $(\lnot S) \in \CEFullLabelR(w_i)$. Therefore, $\tuple{y,z} \notin S^\mI$. 

We now show that if a constraint $x_w = 0$ was added to $\ILConstraints(v)$ because $w$ got status $\Unsat$ then $n_w$ was not increased and hence must be 0. For the contrary, suppose the constraint $x_w = 0$ was added to $\ILConstraints(v)$ (because $w$ got status $\Unsat$) and $n_w$ was increased at least once. Thus, there exists $z \in Z \cup \{x\}$ such that $z \in (\Label(w))^\mI$ and $\tuple{y,z} \in R^\mI$ for all $R \in \CELabelR(w)$. By the assertion stated in the above paragraph, we also have that $\tuple{y,z} \notin S^\mI$ for all $(\lnot S) \in \CEFullLabelR(w)$. By~\eqref{eq: YUENB}, we have that $y \in (\Label(v))^\mI$ and $\SType(v) = \Simple$, and by Lemma~\ref{lemma: UYDHW}, it follows that $y \in (\FullLabel(v))^\mI$. This situation contradicts the inductive assumption~\ref{ass: UIEAL} (with $v$, $v_0$, $u$, $x$, $y$ replaced by $w$, $w$, $v$, $y$, $z$, respectively). Therefore, every constraint $x_w = 0$ from $\ILConstraints(v)$ is satisfied by the solution~$\mS$. 

Consider a concept $\succeq\!n\,S.D \in \Label(v)$ and the corresponding constraint $\sum \{ x_w \mid S \in \CELabelR(w)$, $D \in \Label(w)\} \geq n$ of $\ILConstraints(v)$. There are the following cases:
\begin{itemize}
\item Case $\geq\!(n+1)\,S.D \in \Label(v)$, $S^- \in \CELabelR(u_1)$ and $D \in \FullLabel(u_0)$: By~\eqref{eq: YUENB}, we have that $y \in (\geq\!(n+1)\,S.D)^\mI$, $\tuple{y,x} \in S^\mI$ and $x \in D^\mI$. Hence, $Z$ contains pairwise different $z_1, \ldots, z_n$ such that $\tuple{y,z_i} \in S^\mI$ and $z_i \in D^\mI$ for all $1 \leq i \leq n$. Each $z_i$ makes $n_w$ increased by~1 for some successor $w$ of $v$ with $S \in \CELabelR(w)$ and $D \in \Label(w)$. It follows that the considered constraint is satisfied by the solution~$\mS$.
\item Case ($\geq\!n\,S.D \in \Label(v)$ or ($\E S.D \in \Label(v)$ and $n=1$)) and $\ovl{D} \in \FullLabel(u_0)$: By~\eqref{eq: YUENB}, we have that $y \in (\geq\!n\,S.D)^\mI$ and $x \notin D^\mI$. Analogously to the above case, the considered constraint is satisfied by the solution~$\mS$.
\item Case ($\geq\!n\,S.D \in \Label(v)$ or ($\E S.D \in \Label(v)$ and $n=1$)), $\ovl{D} \notin \FullLabel(u_0)$ and $S^- \in \CELabelR(u_1)$: By the rule \KCCc, $D \in \FullLabel(u_0)$, which implies that \mbox{$\geq\!(n+1)\,S.D \in \Label(v)$}. Thus, this case is reduced to the first one.  
\item Case ($\geq\!n\,S.D \in \Label(v)$ or ($\E S.D \in \Label(v)$ and $n=1$)), $\ovl{D} \notin \FullLabel(u_0)$ and $S^- \notin \CELabelR(u_1)$: By~\eqref{eq: YUENB}, we have that $y \in (\geq\!n\,S.D)^\mI$. If $\tuple{y,x} \notin S^\mI$ or $x \notin D^\mI$ then, analogously to the first case, the considered constraint is satisfied by the solution~$\mS$. Assume that $\tuple{y,x} \in S^\mI$ and $x \in D^\mI$. We have that $Z \cup \{x\}$ contains pairwise different $z_1 = x, z_2, \ldots, z_n$ such that $\tuple{y,z_i} \in S^\mI$ and $z_i \in D^\mI$, for $1 \leq i \leq n$. Each $z_i$ makes $n_w$ increased by~1 for some successor $w$ of $v$ with $S \in \CELabelR(w)$ and $D \in \Label(w)$. It follows that the considered constraint is satisfied by the solution~$\mS$.
\end{itemize}

Consider a concept $\preceq\!m\,R.C \in \Label(v)$ and the corresponding constraint $\sum \{ x_w \mid R \in \CELabelR(w)$, $C \in \Label(w)\} \leq m$ of $\ILConstraints(v)$. There are two cases: 
\begin{eqnarray}
& \mbox{- case} & (\leq\!m\,R.C) \in \Label(v),\label{eq: JSTAO}\\
& \mbox{- case} & (\leq\!(m+1)\,R.C) \in \Label(v), R^- \in \CELabelR(u_1) \mbox{ and } C \in \FullLabel(u_0).\label{eq: HGELS}
\end{eqnarray}

Consider the first case, i.e., assume that~\eqref{eq: JSTAO} holds. 
By~\eqref{eq: YUENB}, we have that $y \in (\leq\!m\,R.C)^\mI$. Hence, $Z \cup \{x\}$ contains no more than $m$ elements $z$ such that $\tuple{y,z} \in R^\mI$ and $z \in C^\mI$. Due to the ``maximality'' of $w$ and the nature of the transitional full-expansion rule, for such a~$z$ there exists at most one successor $w$ of $v$ such that $R \in \CELabelR(w)$, $C \in \Label(w)$ and the consideration of $z$ causes $n_w$ to be increased by~1. (Also, such an $n_w$ is increased only due to such a $z$.) Therefore, the considered constraint is satisfied by the solution~$\mS$. 

Consider the second case, i.e., assume that~\eqref{eq: HGELS} holds. 
By~\eqref{eq: YUENB}, we have that \mbox{$y \in (\leq\!(m+1)\,R.C)^\mI$}, $\tuple{y,x} \in R^\mI$ and $x \in C^\mI$. 
Hence, $Z$ contains no more than $m$ elements $z$ such that $\tuple{y,z} \in R^\mI$ and $z \in C^\mI$. Due to the ``maximality'' of $w$ and the nature of the transitional full-expansion rule, for such a~$z$ there exists at most one successor $w$ of $v$ such that $R \in \CELabelR(w)$, $C \in \Label(w)$ and the consideration of $z$ causes $n_w$ to be increased by~1. (Also, such an $n_w$ is increased only due to such a $z$.) Therefore, the considered constraint is satisfied by~$\mS_0$. To prove that it is also satisfied by~$\mS$, it suffices to show that if $n_{w_i}$ was increased at the step~\ref{item: HGRES} (in the construction of~$\mS$) then $R \notin \CELabelR(w_i)$ or $C \notin \Label(w_i)$. 

For the contrary, suppose that $n_{w_i}$ was increased at the step~\ref{item: HGRES} and 
\begin{equation}\label{eq: HDKAJ} 
R \in \CELabelR(w_i) \mbox{ and } C \in \Label(w_i).
\end{equation} 
As the condition of the step~\ref{item: HGRES}, 
\begin{equation}\label{eq: JHRIA}
\parbox{14cm}{there exists $\succeq\!n\,S.D \in \Label(v)$ such that $S \in \CELabelR(w_i)$, $D \in \Label(w_i)$, $S^- \notin \CELabelR(u_1)$, $\ovl{D} \notin \FullLabel(u_0)$, and hence $\tuple{y,x} \in S^\mI$ and $x \in D^\mI$.}
\end{equation}

Since $S^- \notin \CELabelR(u_1)$ (by~\eqref{eq: JHRIA}) and $R^- \in \CELabelR(u_1)$ (by~\eqref{eq: HGELS}), we have that $R \not\sqsubseteq_\mR S$. 

Consider the case $S \not\sqsubseteq_\mR R$. 
Since both $S$ and $R$ belong to $\CELabelR(w_i)$ (by~\eqref{eq: JHRIA} and~\eqref{eq: HDKAJ}), there exist roles
\begin{equation}\label{eq: HGDEO}
\parbox{14cm}{$R_0 = R, R_1, \ldots, R_{h-1}, R_h = S$ and $S_1, \ldots, S_h$, all belonging to $\CELabelR(w_i)$}
\end{equation}
such that, for every $1 \leq j \leq h$:
\begin{eqnarray}
& - & S_j \sqsubseteq_\mR R_{j-1} \mbox{ and } S_j \sqsubseteq_\mR R_j,\label{eq: HDKJS}\\
& - & \mbox{$\Label(v)$ contains $\E S_j.D'_j$ or $\geq\!n_j\,S_j.D'_j$ for some $D'_j \in \Label(w_i)$ and $n_j > 0$},\label{eq: JROSH}\\
& - & \mbox{if $j < h$ then $\Label(v)$ contains $\leq\!m_j\,R_j.C'_j$ for some $C'_j \in \Label(w_i)$ and $m_j$.}\label{eq: JEOSA}
\end{eqnarray}
For each $j$ from 1 to $h$, observe that:
\begin{itemize}
\item since $S_j \in \CELabelR(w_i)$, by the step~\ref{item: HWUZS} and the step~\ref{item: JPWBP} (with $z = x$) in the construction of the solution~$\mS$, we have that 
\begin{equation}\label{eq: HGDAK}
\tuple{y,x} \in S_j^\mI
\end{equation}
\item if $S_j^- \notin \CELabelR(u_1)$ then 
   \begin{itemize}
   \item $(\lnot S_j^-) \in \CEFullLabelR(u_1)$ because one can apply the definition of $\CEFullLabelR$ (with $u$, $v$, $w$, $\alpha$, $S$, $R$, $m$, $C$, $n$, $D$ replaced, respectively, by $u_0$, $u_1$, $u$, $\Null$, $S_j^-$, $R_{j-1}^-$, $m_{j-1}$, $C'_{j-1}$, $n_j$, $D'_j$) due to the following reasons:
	\begin{itemize}
	\item by~\eqref{eq: YUENB}, $\AfterTrans(u_1) = \True$ and $\StatePred(u_1) = u_0$,  
	\item $S_j^- \notin \CELabelR(u_1)$ (by the premise of the above ``if'' clause), 
	\item $S_j^- \sqsubseteq R_{j-1}^-$ (by~\eqref{eq: HDKJS})
	\item $R_{j-1}^- \in \CELabelR(u_1)$ since
	   \begin{itemize}
	   \item if $j=1$ then $R_{j-1}^- = R_0^- = R^-$ (by \eqref{eq: HGDEO}) and $R^- \in \CELabelR(u_1)$ (by~\eqref{eq: HGELS}),
	   \item if $j>1$ then $R_{j-1}^- \in \CELabelR(u_1)$ by induction of~\eqref{eq: KJERA} as shown below, 
	   \end{itemize}
	\item by~\eqref{eq: YUENB}, $u$ is a node in the local graph of $u_1$ and, since $\tuple{u,v} \in E$ and $v$ is a state, like $\Label(v)$, $\Label(u)$ contains $\leq\!m_{j-1}\,R_{j-1}.C'_{j-1}$ (by~\eqref{eq: JEOSA} and~\eqref{eq: HGELS}, using $C'_0 = C$ and $m_0 = m+1$) as well as $\E S_j.D'_j$ or $\geq\!n_j\,S_j.D'_j$ with $n_j > 0$ (by~\eqref{eq: JROSH}), 
	\item since $R_{j-1}^- \in \CELabelR(u_1)$ (as above), $(\leq\!m_{j-1}\,R_{j-1}.C'_{j-1}) \in \Label(u)$ (as above) and $\Status(u_0) \neq \Incomplete$ (by~\eqref{eq: YUENB}), by the rule \KCCc, either $C'_{j-1}$ or $\ovl{C'_{j-1}}$ belongs to $\FullLabel(u_0)$,
	\item if $j > 1$ then, since $C'_{j-1} \in \Label(w_i)$ (by~\eqref{eq: JEOSA}), by the step~\ref{item: HWUZS} and the step~\ref{item: JDYTS} (with $z = x$) in the construction of the solution~$\mS$, we have that $x \in (C'_{j-1})^\mI$, hence, by~\eqref{eq: YUENB}, $\ovl{C'_{j-1}} \notin \FullLabel(u_0)$, and by the assertion in the above item, $C'_{j-1} \in \FullLabel(u_0)$, 
	\item if $j = 1$ then, since $C \in \FullLabel(u_0)$ (by~\eqref{eq: HGELS}), for $C'_0 = C$ we have that $C'_{j-1} \in \FullLabel(u_0)$, 
	\item since $D'_j \in \Label(w_i)$ (by~\eqref{eq: JROSH}), by the step~\ref{item: HWUZS} and the step~\ref{item: JDYTS} (with $z = x$) in the construction of the solution~$\mS$, we have that $x \in (D'_j)^\mI$, hence, by~\eqref{eq: YUENB}, $\ovl{D'_j} \notin \FullLabel(u_0)$;
	\end{itemize}
   \item since $(\lnot S_j^-) \in \CEFullLabelR(u_1)$ (shown above), by~\eqref{eq: YUENB}, $\tuple{y,x} \notin S_j^\mI$, which contradicts~\eqref{eq: HGDAK};
   \end{itemize}
\item hence $S_j^- \in \CELabelR(u_1)$, and by~\eqref{eq: HDKJS}, it follows that
\begin{equation}\label{eq: KJERA}
R_j^- \in \CELabelR(u_1)
\end{equation}
\end{itemize}
As a consequence, for $j = h$, we have that $S_h^- \in \CELabelR(u_1)$. Since $S_h \sqsubseteq_\mR R_h$ (by~\eqref{eq: HDKJS}) and $R_h = S$ (by~\eqref{eq: HGDEO}), it follows that $S^- \in \CELabelR(u_1)$, which contradicts~\eqref{eq: JHRIA}. 

Now consider the case $S \sqsubseteq_\mR R$. One can derive $(\lnot S^-) \in \CEFullLabelR(u_1)$ by applying the definition of $\CEFullLabelR$ (with $u$, $v$, $w$, $\alpha$, $S$, $R$, $m$, $n$ replaced, respectively, by $u_0$, $u_1$, $u$, $\Null$, $S^-$, $R^-$, $m+1$, and some $n'$) due to the following reasons:
\begin{itemize}
\item by~\eqref{eq: YUENB}, $\AfterTrans(u_1) = \True$ and $\StatePred(u_1) = u_0$,  
\item $S^- \notin \CELabelR(u_1)$ (by~\eqref{eq: JHRIA}), $S^- \sqsubseteq_\mR R^-$ (since $S \sqsubseteq_\mR R$), $R^- \in \CELabelR(u_1)$ (by~\eqref{eq: HGELS}), 
\item by~\eqref{eq: YUENB}, $u$ is a node in the local graph of $u_1$ and, since $\tuple{u,v} \in E$ and $v$ is a state, like $\Label(v)$, $\Label(u)$ contains $(\leq\!(m+1)\,R.C)$ (by~\eqref{eq: HGELS}) as well as $\E S.D$ or $\geq\!n'\,S.D$ for some $n' > 0$ (since $\succeq\!n\,S.D \in \Label(v)$ -- by~\eqref{eq: JHRIA}),
\item $C \in \FullLabel(u_0)$ (by~\eqref{eq: HGELS}) and $\ovl{D} \notin \FullLabel(u_0)$ (by~\eqref{eq: JHRIA}).
\end{itemize}
By~\eqref{eq: YUENB}, it follows that $\tuple{y,x} \notin S^\mI$, which contradicts~\eqref{eq: JHRIA}.

We have proved the induction step for the case~\ref{ass: HGDSO} when $v$ gets status $\Unsat$ because $\ILConstraints(v)$ is infeasible. The case~\ref{ass: HADSO} when $v$ gets status $\Unsat$ because $\ILConstraints(v)$ is infeasible can be dealt with in a similar way, using the following modifications, with $a = \CELabelI(u_1)$ (and thus $x = a^\mI$):
\begin{itemize}
\item The assumption~\eqref{eq: YUENB} is modified by changing ``$\SType(u_0) = \Simple$'' to ``$\SType(u_0) = \Complex$'' and changing ``$\mI$ is a model of both $\mR$ and $\mT$, $x \in (\FullLabel(u_0))^\mI\,$'' to ``$\mI$ is a model of $\tuple{\mR,\mT,\FullLabel(u_0)}$''.

\item ``$D \in \FullLabel(u_0)$'' is replaced by ``$a\!:\!D \in \FullLabel(u_0)$'', and similarly for other cases with another concept in the place of $D$ and/or with $\notin$ instead of $\in$.  

\item The phrase ``either $C'_{j-1}$ or $\ovl{C'_{j-1}}$ belongs to $\FullLabel(u_0)$'' is replaced by ``either $a\!:\!C'_{j-1}$ or $a\!:\!\ovl{C'_{j-1}}$ belongs to $\FullLabel(u_0)$''.

\item The phrase ``$u$, $\Null$'' is replaced by ``$u$, $a$''. 
\end{itemize}


Now consider the case~\ref{ass: JHDSM} when $v$ gets status $\Unsat$ because $\ILConstraints(v)$ is infeasible. For this case, we prove the contrapositive. Suppose that $\Type(v) = \State$, $\SType(v) = \Complex$ and $\FullLabel(v)$ is satisfiable w.r.t.\ $\mR$ and $\mT$. We show that $\ILConstraints(v)$ is feasible. 
Let $\mI$ be a finitely-branching model of $\mR$, $\mT$ and $\FullLabel(v)$. 
We compute a solution $\mS$ for $\ILConstraints(v)$ as follows.
\begin{enumerate}
\item For each successor $w$ of $v$, set $n_w := 0$.
\item For each individual $a$ occurring in $\mA$ and each $z \in \Delta^\mI$ such that $\tuple{a^\mI,z} \in R^\mI$ for some $R \in \RN \cup \RN^-$ do:
  \begin{enumerate}
  \item\label{item: XHSDO} let $w_1,\ldots,w_k$ be all the successors of $v$ such that, for each $1 \leq i \leq k\,$:
     \begin{enumerate}
     \item\label{item: XDJAP} $\CELabelT(w_i) = \CQF$ and $\CELabelI(w_i) = a$, 
     \item\label{item: XDYTS} $z \in (Label(w_i))^\mI$,
     \item\label{item: XPWBP} $\tuple{a^\mI,z} \in R^\mI$ for all $R \in \CELabelR(w_i)$,
     \item $w_i$ is ``maximal'' in the sense that there does not exist any successor $w'_i \neq w_i$ of $v$ such that
	\begin{itemize}
	\item $\Label(w'_i) \supseteq \Label(w_i)$ and $\CELabelR(w'_i) \supseteq \CELabelR(w_i)$, 
	\item $\CELabelT(w'_i) = \CQF$ and $\CELabelI(w'_i) = a$, 
	\item $z \in (\Label(w'_i))^\mI$,
	\item $\tuple{a^\mI,z} \in R^\mI$ for all $R \in \CELabelR(w'_i)$;
	\end{itemize}
     \end{enumerate} 

  \item\label{item: XRIAH} for each $1 \leq i \leq k$ do
     \begin{enumerate}
     \item\label{item: XRIYH} if $z \neq b^\mI$ for all $b$ occurring in $\mA$ then $n_{w_i} := n_{w_i} + 1$;
     \item\label{item: XRIXH} else if $z = b^\mI$ for some $b$ occurring in $\mA$ and there exists $a\!:\!(\succeq\!l\,S.D) \in \Label(v)$ such that $S \in \CELabelR(w_i)$, $D \in \Label(w_i)$ and $S(a,b) \notin \Label(v)$ then $n_{w_i} := n_{w_i} + 1$.
     \end{enumerate}
  \end{enumerate}
\item $\mS := \{x_w = n_w \mid w$ is a successor of $v\}$.
\end{enumerate}

We prove that $\mS$ is a solution for $\ILConstraints(v)$. 

We first show that, for any $w_i$ at the step~\ref{item: XHSDO}, $\tuple{a^\mI,z} \notin S^\mI$ for all $(\lnot S) \in \CEFullLabelR(w_i)$. 
Let $(\lnot S) \in \CEFullLabelR(w_i)$. Thus, there exist $R$, $w'$, $m$, $C$, $n$, $D$ such that: $S \sqsubseteq_\mR R$, $R \in \CELabelR(w_i)$, $w'$ is a node in the local graph of $w_i$, $\Label(w')$ contains $\leq\!m\,R^-.C$ as well as $\E S^-.D$ or $\geq\!n\,S^-.D$ with $n > 0$, $a\!:\!C \in \FullLabel(v)$ and $a\!:\!\ovl{D} \notin \FullLabel(v)$. As $\Status(v) \neq \Incomplete$ and the rule $\KCCcp$ was not applicable to $w'$, either $a\!:\!\E S.\top_R$ or $a\!:\!\V S.\bot$ must belong to $\Label(v)$. If $(a\!:\!\V S.\bot) \in \Label(v)$ then $a^\mI \in (\V S.\bot)^\mI$ and therefore $\tuple{a^\mI,z} \notin S^\mI$. Suppose $(a\!:\!\E S.\top_R) \in \Label(v)$. If $\tuple{a^\mI,z} \in S^\mI$ then, by the nature of the transitional full-expansion rule and the maximality of $w_i$, we have that $\top_R \in \Label(w_i)$ and $S \in \CELabelR(w_i)$, which contradicts the assumption $(\lnot S) \in \CEFullLabelR(w_i)$. Therefore, $\tuple{a^\mI,z} \notin S^\mI$. 

We now show that if a constraint $x_w = 0$ was added to $\ILConstraints(v)$ because $w$ got status $\Unsat$ then $n_w$ was not increased and hence must be 0. For the contrary, suppose the constraint $x_w = 0$ was added to $\ILConstraints(v)$ (because $w$ got status $\Unsat$) and $n_w$ was increased at least once. Thus, there exists $z \in \Delta^\mI$ such that $z \in (\Label(w))^\mI$ and $\tuple{a^\mI,z} \in R^\mI$ for all $R \in \CELabelR(w)$. By the assertion stated in the above paragraph, we also have that $\tuple{a^\mI,z} \notin S^\mI$ for all $(\lnot S) \in \CEFullLabelR(w)$. This situation contradicts the inductive assumption~\ref{ass: KSRLW} (with $v$, $v_0$, $u$, $x$, $y$ replaced by $w$, $w$, $v$, $a^\mI$, $z$, respectively). Therefore, every constraint $x_w = 0$ from $\ILConstraints(v)$ is satisfied by the solution~$\mS$. 

Consider a concept $(a\!:\,\succeq\!n\,S.D) \in \Label(v)$ and the corresponding constraint $\sum \{ x_w \mid \CELabelI(w) = a$, $S \in \CELabelR(w)$, $D \in \Label(w)\} \geq n$ of $\ILConstraints(v)$. Let $m = \sharp\{b \mid \{S(a,b), b\!:\!D\} \subseteq \FullLabel(v)\}$. We have that $(a\!:\,\geq\!(n+m)\,S.D) \in \Label(v)$. Since $\mI$ is a model of $\FullLabel(v)$, there exist pairwise different $z_1,\ldots,z_{n+m}$ such that $\tuple{a^\mI,z_i} \in S^\mI$ and $z_i \in D^\mI$ for all $1 \leq i \leq n+m$. Note that, if $S(a,b) \in \Label(v)$ then, by the rule $\KCCc$, either $b\!:\!D \in \FullLabel(v)$ or $b\!:\!\ovl{D} \in \FullLabel(v)$. Since $z_i \in D^\mI$ and $\mI$ is a model of $\FullLabel(v)$, if $z_i = b^\mI$ then $b\!:\!\ovl{D} \notin \FullLabel(v)$. Therefore, for every $1 \leq i \leq n+m$, if $z_i = b^\mI$ and $S(a,b) \in \Label(v)$ then $b\!:\!D \in \FullLabel(v)$. Let $Z = \{z_1,\ldots,z_{n+m}\} \setminus \{b^\mI \mid S(a,b) \in \Label(v)\}$. We have that $\sharp Z = n$. Each $z$ from $Z$ makes $n_w$ increased by~1 for some successor $w$ of $v$ with $\CELabelI(w) = a$, $S \in \CELabelR(w)$ and $D \in \Label(w)$. It follows that the considered constraint is satisfied by the solution~$\mS$.

Consider a concept $(a\!:\,\preceq\!n\,R.C) \in \Label(v)$ and the corresponding constraint $\sum \{ x_w \mid \CELabelI(w) = a$, $R \in \CELabelR(w)$, $C \in \Label(w)\} \leq n$ of $\ILConstraints(v)$. Let $m = \sharp\{b \mid \{R(a,b), b\!:\!C\} \subseteq \FullLabel(v)\}$. We have that $(a\!:\,\leq\!(n+m)\,R.C) \in \Label(v)$. Since $\mI$ is a model of $\FullLabel(v)$, it follows that $a^\mI \in (\leq\!(n+m)\,R.C)^\mI$. Let $Z_1 = \{b^\mI \mid \{R(a,b), b\!:\!C\} \subseteq \FullLabel(v)\}$. Due to the subrule~\ref{item: JHEAA} of~\NUS, we have that $\sharp Z_1 = m$. 

Note that if $w$ is a successor of $v$, $\CELabelI(w) = a$, $R \in \CELabelR(w)$ and $C \in \Label(w)$ then $n_w$ is increased only due to some $z$ such that $\tuple{a^\mI,z} \in R^\mI$ and $z \in C^\mI$. 
Due to the ``maximality'' of $w$ and the nature of the transitional full-expansion rule, for such a~$z$ there exists at most one successor $w$ of $v$ such that $\CELabelI(w) = a$, $R \in \CELabelR(w)$, $C \in \Label(w)$ and the consideration of $z$ causes $n_w$ to be increased by~1. Since $a^\mI \in (\leq\!(n+m)\,R.C)^\mI$, to prove that the considered constraint is satisfied by the solution~$\mS$, it suffices to show that if $z \in Z_1$ causes $n_{w_i}$ to be increased by~1 at the step~\ref{item: XRIXH} then $R \notin \CELabelR(w_i)$ or $C \notin \Label(w_i)$. Suppose the contrary. We have that:
\begin{eqnarray}
& - & \{a\!:\,\leq\!(n+m)\,R.C, R(a,b), b\!:\!C\} \subseteq \FullLabel(v),\label{eq: ODJSA}\\
& - & w_i \mbox{ is a successor of } v, \CELabelT(w_i) = \CQF \mbox{ and } \CELabelI(w_i) = a,\label{eq: HJFGA}\\
& - & b^\mI \in (\Label(w_i))^\mI \mbox{ and } \tuple{a^\mI,b^\mI} \in (R')^\mI \mbox{ for all } R' \in \CELabelR(w_i),\label{eq: IUFDA}\\
& - & a\!:\!(\succeq\!l\,S.D) \in \Label(v), S \in \CELabelR(w_i), D \in \Label(w_i) \mbox{ and } S(a,b) \notin \Label(v),\label{eq: KLDFI}\\
& - & R \in \CELabelR(w_i) \mbox{ and } C \in \Label(w_i).\label{eq: JDFLA} 
\end{eqnarray}

Since both $S$ and $R$ belong to $\CELabelR(w_i)$ (by~\eqref{eq: KLDFI} and~\eqref{eq: JDFLA}), there exist roles
\begin{equation}\label{eq: FDKJO}
\parbox{14cm}{$R_0 = R, R_1, \ldots, R_{h-1}, R_h = S$ and $S_1, \ldots, S_h$, all belonging to $\CELabelR(w_i)$}
\end{equation}
such that, for every $1 \leq j \leq h$:
\begin{eqnarray}
& - & S_j \sqsubseteq_\mR R_{j-1} \mbox{ and } S_j \sqsubseteq_\mR R_j,\label{eq: KJDFS}\\
& - & \mbox{$\Label(v)$ contains $a\!:\!\E S_j.D'_j$ or $a\!:\,\geq\!n_j\,S_j.D'_j$ for some $D'_j \in \Label(w_i)$ and $n_j > 0$},\label{eq: JHFDS}\\
& - & \mbox{if $j < h$ then $\Label(v)$ contains $a\!:\,\leq\!m_j\,R_j.C'_j$ for some $C'_j \in \Label(w_i)$ and $m_j$.}\label{eq: HJFGS}
\end{eqnarray}

Note that the subrule~\ref{item: OSJRS} of \NUS was not applicable to $v$. Having \eqref{eq: ODJSA}, \eqref{eq: FDKJO}, \eqref{eq: KJDFS} and \eqref{eq: JHFDS}, we derive that $S_1(a,b) \in \Label(v)$ or $\lnot S_1(a,b) \in \Label(v)$.  Since~\eqref{eq: FDKJO} and~\eqref{eq: IUFDA}, $\tuple{a^\mI,b^\mI} \in S_1^\mI$. Since $\mI$ is a model of $\FullLabel(v)$, it follows that $\lnot S_1(a,b) \notin \Label(v)$, and hence $S_1(a,b) \in \Label(v)$. Since $S_1 \sqsubseteq R_1$ (by~\eqref{eq: KJDFS}), by the rule \US, we also have that $R_1(a,b) \in \Label(v)$. Analogously, using also~\eqref{eq: HJFGS}, for every $j$ from 1 to $h$, we can derive that $S_j(a,b) \in \Label(v)$ and $R_j(a,b) \in \Label(v)$. Since $S = R_h$, it follows that $S(a,b) \in \Label(v)$, which contradicts~\eqref{eq: KLDFI}. 
This completes the induction step for the case~\ref{ass: JHDSM} when $v$ gets status $\Unsat$ because $\ILConstraints(v)$ is infeasible. 


The induction steps for the other cases are straightforward. 
\myEnd
\end{proof}

\begin{corollary}[Soundness of \CSHIQ]
If $G = \tuple{V,E,\nu}$ is a \CSHIQ-tableau for $\tuple{\mR,\mT,\mA}$ and $\Status(\nu) = \Unsat$ then $\tuple{\mR,\mT,\mA}$ is unsatisfiable.
\end{corollary}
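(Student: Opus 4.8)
The plan is to obtain the corollary as an immediate consequence of part~1 of Lemma~\ref{lemma: SHQWD}, applied to the root. By the Initialization step, $\nu$ is a complex non-state with $\StatePred(\nu) = \Null$, and these three attributes are never changed afterwards: re-expansion of $\nu$ by rule \KCCd only reroutes an outgoing edge of $\nu$ and passes $\RFormulas(\nu)$ to the new successor, it does not alter $\Type(\nu)$, $\SType(\nu)$, $\StatePred(\nu)$, $\Label(\nu)$ or $\RFormulas(\nu)$; the same is true for any application of \US to $\nu$. Hence, assuming $\Status(\nu) = \Unsat$, the node $\nu$ falls under the case ``$\Status(v) = \Unsat$, $\Type(v) = \NonState$, $\StatePred(v) = \Null$'' of Lemma~\ref{lemma: SHQWD}, i.e.\ case~\ref{ass: JHWAQ}, which states that $\FullLabel(\nu)$ is unsatisfiable w.r.t.\ $\mR$ and~$\mT$.

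Next I would make $\FullLabel(\nu)$ explicit. Since $\RFormulas(\nu) = \emptyset$ and $\Label(\nu) = \mA \cup \{a\!:\!C \mid C \in \mT \textrm{ and } a \textrm{ occurs in } \mA\}$ contains no formula of the form $\alpha\!:\!(\preceq\!n\,R.C)$ or $\alpha\!:\!(\succeq\!n\,R.C)$ --- such annotated concepts are introduced only by the forming-state rules and only into states --- we have $\FullLabel(\nu) = \mA \cup \{a\!:\!C \mid C \in \mT \textrm{ and } a \textrm{ occurs in } \mA\}$, which is exactly a set of the shape $\mA \cup \mA'$ covered by the definition of satisfiability w.r.t.\ an RBox and a TBox.

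Finally I would prove the contrapositive of the remaining implication: if $\tuple{\mR,\mT,\mA}$ is satisfiable then $\FullLabel(\nu)$ is satisfiable w.r.t.\ $\mR$ and $\mT$. Let $\mI$ be a model of $\tuple{\mR,\mT,\mA}$. Then $\mI$ is a model of $\mR$ and of $\mT$, and by our convention for internalizing the TBox, $\mI$ validates every concept $C \in \mT$, so $a^\mI \in C^\mI$ for every individual $a$ and every $C \in \mT$; together with $\mI$ satisfying every assertion of $\mA$ this makes $\mI$ a model of $\tuple{\mR,\mT,\FullLabel(\nu)}$, i.e.\ $\FullLabel(\nu)$ is satisfiable w.r.t.\ $\mR$ and $\mT$ --- contradicting the previous paragraph. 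Hence $\tuple{\mR,\mT,\mA}$ is unsatisfiable.

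There is essentially no obstacle in this corollary itself; the entire weight of soundness is carried by Lemma~\ref{lemma: SHQWD}, whose proof --- in particular the construction of a solution of $\ILConstraints(v)$ from a model and the propagation of $\Unsat$ via \UPSb --- is the genuinely hard part and is already given above. The only points to be careful about here are bookkeeping: checking that the root matches the hypotheses of case~\ref{ass: JHWAQ}, that these hypotheses are invariant under re-expansion, and that ``unsatisfiable w.r.t.\ $\mR$ and $\mT$'' for the ABox-style set $\FullLabel(\nu)$ coincides with unsatisfiability of the knowledge base once the TBox has been internalized.
\myEnd
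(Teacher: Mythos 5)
Your proposal is correct and follows exactly the paper's route: the paper derives the corollary in one line from case~\ref{ass: JHWAQ} of Lemma~\ref{lemma: SHQWD} applied to the root, and your additional bookkeeping (invariance of the root's attributes, the explicit form of $\FullLabel(\nu)$, and the translation from ``unsatisfiable w.r.t.\ $\mR$ and $\mT$'' back to unsatisfiability of the knowledge base) is just a faithful unfolding of what that one line leaves implicit.
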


This corollary follows from the assertion~\ref{ass: JHWAQ} of Lemma~\ref{lemma: SHQWD}.


\subsection{Completeness}

%

We prove completeness of \CSHIQ via model graphs. The technique has been used for other logics (e.g., in~\cite{Rautenberg83,Gore99,nguyen01B5SL,dkns2011}).
A {\em model graph} is a tuple $\langle \Delta, \nI, \nC, \nE \rangle$, where:
\begin{itemize}
\item $\Delta$ is a non-empty set, 
\item $\nI$ is a mapping that associates each individual name with an element of $\Delta$, 
\item $\nC$ is a mapping that associates each element of $\Delta$ with a set of concepts, 
\item $\nE$ is a mapping that associates each role with a binary relation on $\Delta$.
\end{itemize}

A model graph $\langle \Delta, \nI, \nC, \nE \rangle$ is {\em consistent} and {\em $\mR$-saturated} if every $x \in \Delta$ satisfies:\footnote{A consistent and $\mR$-saturated model graph is like a Hintikka structure.} 
\begin{eqnarray}
&-\ & \textrm{$\nC(x)$ does not contain $\bot$ nor any pair $C$, $\ovl{C}$}\label{eq:HGDSX 0}\\
&-\ & \textrm{if $\tuple{x,y} \in \nE(R)$ then $\tuple{y,x} \in \nE(R^-)$}\label{eq:HGDSX a}\\
&-\ & \textrm{if $\tuple{x,y} \in \nE(R)$ and $R \sqsubseteq_\mR S$ then $\tuple{x,y} \in \nE(S)$}\label{eq:HGDSX b}\\
&-\ & \textrm{if $C \mand D \in \nC(x)$ then $\{C,D\} \subseteq \nC(x)$}\label{eq:HGDSX 1}\\
&-\ & \textrm{if $C \mor D \in \nC(x)$ then $C \in \nC(x)$ or $D \in \nC(x)$}\label{eq:HGDSX 2}\\
&-\ & \textrm{if $\V S.C \in \nC(x)$ and $R \sqsubseteq_\mR S$ then $\V R.C \in \nC(x)$}\label{eq:HGDSX 3}\\
&-\ & \textrm{if $\tuple{x,y} \in \nE(R)$ and $\V R.C \in \nC(x)$ then $C \in \nC(y)$}\label{eq:HGDSX 4}\\
&-\ & \textrm{if $\tuple{x,y} \in \nE(R)$, $\Trans{R}$ and $\V R.C \in \nC(x)$ then $\V R.C \in \nC(y)$}\label{eq:HGDSX 4b}\\
&-\ & \textrm{if $\E R.C \in \nC(x)$ then $\E y \in \Delta$ such that $\tuple{x,y} \in \nE(R)$ and $C \in \nC(y)$}\label{eq:HGDSX 6}\\
&-\ & \textrm{if $(\geq\!n\,R.C) \in \nC(x)$ then $\sharp\{\tuple{x,y} \in \nE(R) \mid C \in \nC(y)\} \geq n$}\label{eq:HGDSX 7} \\
&-\ & \textrm{if $(\leq\!n\,R.C) \in \nC(x)$ then $\sharp\{\tuple{x,y} \in \nE(R) \mid C \in \nC(y)\} \leq n$}\label{eq:HGDSX 8} \\
&-\ & \textrm{if $(\leq\!n\,R.C) \in \nC(x)$ and $\tuple{x,y}\in \nE(R)$ then $C \in \nC(y)$ or $\ovl{C} \in \nC(y)$.}\label{eq:HGDSX 9}
\end{eqnarray}

Given a~model graph $M = \langle \Delta, \nI, \nC, \nE \rangle$, the {\em $\mR$-model corresponding to~$M$} is the interpretation $\mI = \langle \Delta, \cdot^{\mI}\rangle$ where:
\begin{itemize}
\item $a^\mI = \nI(a)$ for every individual name~$a$,
\item $A^\mI = \{x \in \Delta \mid A \in \nC(x)\}$ for every concept name $A$,
\item $r^\mI = \nE'(r)$ for every role name $r \in \RN$, where $\nE'(R)$ for $R \in \RN \cup \RN^{-}$ are the smallest binary relations on $\Delta$ such that:
  \begin{itemize}
  \item $\nE(R) \subseteq \nE'(R)$, 
  \item if $R \sqsubseteq_\mR S$ then $\nE'(R) \subseteq \nE'(S)$,
  \item if $\Trans{R}$ then $\nE'(R) \circ \nE'(R) \subseteq \nE'(R)$.
  \end{itemize}
\end{itemize}

Note that the smallest binary relations mentioned above always exist: for each $R \in \RN \cup \RN^{-}$, initialize $\nE'(R)$ with $\nE(R)$; then, while one of the above mentioned condition is not satisfied, extend the corresponding $\nE'(R)$ minimally to satisfy the condition.

\begin{lemma} \label{lemma: model graph}
If $\mI$ is the $\mR$-model corresponding to a consistent $\mR$-saturated model graph $\langle \Delta, \nI, \nC, \nE \rangle$, then $\mI$ is a model of $\mR$ and, for every $x \in \Delta$ and $C \in \nC(x)$, we have that $x \in C^\mI$.
\end{lemma}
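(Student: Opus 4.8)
The strategy is to first determine the closure $\nE'$ explicitly, then deduce $\mI \models \mR$, and finally prove the ``truth lemma'' $C \in \nC(x) \Rightarrow x \in C^\mI$ by induction on the structure of $C$ (in NNF). The workhorse is the characterization
\[
\nE'(R) \;=\; \nE(R) \;\cup\; \bigcup\{\, (\nE(R'))^{+} \;:\; R' \sqsubseteq_\mR R \text{ and } \Trans{R'} \,\},
\]
where $(\cdot)^{+}$ denotes transitive closure; in particular $\nE'(R) = (\nE(R))^{+}$ whenever $\Trans{R}$. Both inclusions follow from minimality of $\nE'$ together with the fact that $\nE$ is already closed under role inclusion by~\eqref{eq:HGDSX b}. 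This has two crucial consequences. First, if $R$ is \emph{simple} then no transitive $R'$ satisfies $R' \sqsubseteq_\mR R$, so $\nE'(R) = \nE(R)$; since the role of every number restriction $\geq\!n\,R.D$, $\leq\!n\,R.D$ occurring in a \SHIQ concept is simple, this is exactly what prevents the transitive closures from fabricating new successors. Second, using~\eqref{eq:HGDSX a}, the commutation of $(\cdot)^{-1}$ with transitive closure, and the fact that inverting a role preserves both $\sqsubseteq_\mR$ and transitivity, one gets $(\nE'(r))^{-1} = \nE'(r^{-})$, whence $R^\mI = \nE'(R)$ for \emph{every} role $R$, not just role names. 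Given this, $\mI$ is a model of $\mR$ at once: for $R \sqsubseteq S \in \mR$ we have $R \sqsubseteq_\mR S$, so $R^\mI = \nE'(R) \subseteq \nE'(S) = S^\mI$; for $R \circ R \sqsubseteq R \in \mR$ we have $\Trans{R}$, so $R^\mI = \nE'(R)$ is transitive.

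Before the induction I would isolate the propagation fact for value restrictions: \emph{if $\V R.D \in \nC(x)$ and $\tuple{x,y} \in \nE'(R)$ then $D \in \nC(y)$.} By the characterization of $\nE'(R)$, either $\tuple{x,y} \in \nE(R)$, and~\eqref{eq:HGDSX 4} applies directly; or there is a transitive $R' \sqsubseteq_\mR R$ and a nonempty $\nE(R')$-chain $x = z_0, z_1, \dots, z_k = y$. In the second case~\eqref{eq:HGDSX 3} gives $\V R'.D \in \nC(x)$, iterated use of~\eqref{eq:HGDSX 4b} (legitimate since $\Trans{R'}$) propagates $\V R'.D$ along the chain up to $z_{k-1}$, and~\eqref{eq:HGDSX 4} on the final edge yields $D \in \nC(y)$.

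The induction on $C$ then proceeds as usual. The cases $\top$, $\bot$ (excluded by~\eqref{eq:HGDSX 0}), atomic and negated-atomic concepts (definition of $\cdot^\mI$ and~\eqref{eq:HGDSX 0}), $C \mand D$ (by~\eqref{eq:HGDSX 1}), $C \mor D$ (by~\eqref{eq:HGDSX 2}), $\E R.D$ (by~\eqref{eq:HGDSX 6}, using $\nE(R) \subseteq \nE'(R) = R^\mI$ and the induction hypothesis), and $\geq\!n\,R.D$ (by~\eqref{eq:HGDSX 7}: its $n$ witnesses in $\nE(R)$ are $R^\mI$-successors of $x$ lying in $D^\mI$ by the induction hypothesis) are routine. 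For $\V R.D$ one invokes the propagation fact: any $y$ with $\tuple{x,y} \in R^\mI = \nE'(R)$ has $D \in \nC(y)$, hence $y \in D^\mI$ by the induction hypothesis. The delicate case is $\leq\!n\,R.D$: here $R$ is simple, so $R^\mI = \nE(R)$; for $\tuple{x,y} \in R^\mI$, \eqref{eq:HGDSX 9} gives $D \in \nC(y)$ or $\ovl{D} \in \nC(y)$, and if $y \in D^\mI$ then the latter is impossible (the induction hypothesis would put $y$ into $\ovl{D}^\mI = \Delta \setminus D^\mI$, as $\ovl{D}$ is the NNF of $\lnot D$), so $D \in \nC(y)$; hence $\{\, y : \tuple{x,y} \in R^\mI,\ y \in D^\mI \,\}$ is contained in $\{\, y : \tuple{x,y} \in \nE(R),\ D \in \nC(y) \,\}$, which has at most $n$ elements by~\eqref{eq:HGDSX 8}. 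I expect the main obstacle to be precisely this interplay: guaranteeing that the transitive closures folded into $\nE'$ neither invent $R$-successors for a simple role $R$ nor fail to carry value restrictions along transitive chains --- both handled by the explicit form of $\nE'$ above. (Should $\nC(x)$ contain the annotated $\top_R$ it is trivial, as $\top_R^\mI = \Delta$; the purely syntactic constructors $\preceq\!n\,R.C$ and $\succeq\!n\,R.C$ are assumed absent from $\nC(x)$, as in the definition of $\FullLabel$.)
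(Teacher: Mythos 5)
Your proof is correct and follows the route the paper itself indicates (the paper dismisses both assertions as clear/straightforward and proves neither in detail): induction on the structure of $C$ for the second assertion, with the first assertion read off from the closure conditions on $\nE'$. Your explicit characterization $\nE'(R) = \nE(R) \cup \bigcup\{(\nE(R'))^{+} : R' \sqsubseteq_\mR R,\ \Trans{R'}\}$, the resulting identity $\nE'(R)=\nE(R)$ for simple $R$ (which is exactly what makes the $\leq\!n\,R.D$ case go through), and the propagation of value restrictions along transitive chains via \eqref{eq:HGDSX 3}, \eqref{eq:HGDSX 4b} and \eqref{eq:HGDSX 4} are precisely the details the paper leaves implicit, and you supply them correctly.
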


The first assertion of this lemma clearly holds. The second assertion can be proved by induction on the structure of~$C$ in a straightforward way.

Let $G = \tuple{V,E,\nu}$ be a \CSHIQ-tableau for $\tuple{\mR,\mT,\mA}$ and $v \in V$ be a non-state with $\Status(v) \neq \Unsat$. A {\em saturation path} of $v$ is a sequence $v_0 = v$, $v_1$, \ldots, $v_k$ of nodes of $G$, with $k \geq 1$, such that $\Type(v_k) = \State$ and 
\begin{itemize}
\item for every $0 \leq i < k$, $\Type(v_i) = \NonState$, $\Status(v_i) \neq \Unsat$ and $\tuple{v_i,v_{i+1}} \in E$,  
\item $\Status(v_k) \notin \{\Unsat,\Incomplete\}$. 
\end{itemize}
Observe that each saturation path of $v$ is finite.\footnote{If a non-state $v_{i+1}$ is a successor of a non-state $v_i$ then $\RFormulas(v_{i+1}) \supseteq \RFormulas(v_i)$ and either $\RFormulas(v_{i+1}) \supset \RFormulas(v_i)$ or $\FullLabel(v_{i+1}) \supset \FullLabel(v_i)$. Recall also that $\RFormulas(v_{i+1})$ and $\FullLabel(v_{i+1})$ are subsets of $\closure(\mR,\mT,\mA)$.} Furthermore, if $v_i$ is a non-state with $\Status(v_i) \neq \Unsat$ then $v_i$ has a successor $v_{i+1}$ with $\Status(v_{i+1}) \neq \Unsat$; if $v_i$ is a non-state with only one successor $v_{i+1}$ which is a state then $\Status(v_{i+1}) \neq \Incomplete$, because after a state gets status $\Incomplete$ all edges coming to its are deleted. Therefore, $v$ has at least one saturation path. 

\begin{lemma}[Completeness of \CSHIQ] \label{lemma: completeness}
Let $G = \tuple{V,E,\nu}$ be a \CSHIQ-tableau for $\tuple{\mR,\mT,\mA}$. Suppose $\Status(\nu) \neq \Unsat$. Then $\tuple{\mR,\mT,\mA}$ is satisfiable.
\end{lemma}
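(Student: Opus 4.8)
The plan is to build a consistent, $\mathcal{R}$-saturated model graph $M=\langle\Delta,\nI,\nC,\nE\rangle$ out of $G$ and then invoke Lemma~\ref{lemma: model graph}. First I would use the hypothesis $\Status(\nu)\neq\Unsat$ together with the analysis of the status-updating rules \UPSb and \KCCd to find a saturation path from $\nu$ ending at a complex state $v^\ast$ with $\Status(v^\ast)\notin\{\Unsat,\Incomplete\}$: the root is a non-state, every non-state of status $\neq\Unsat$ has a successor of status $\neq\Unsat$, and a non-state whose only successor is a state leads to a state that is not $\Incomplete$ (because once a state becomes $\Incomplete$ all edges into it are deleted). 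The ABox $\FullLabel(v^\ast)$ becomes the core of the model: its named individuals are placed into $\Delta$ via $\nI$, its role assertions seed $\nE$, and its concept assertions seed $\nC$ on those individuals. This is exactly the construction sketched in Remark~\ref{remark: main intuition}, which I would now make precise.

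Second, I would enlarge $\Delta$ by attaching disjoint trees. Each element is either a named individual coming from a complex state, or an unnamed element tagged by a simple state $u$ with $\Status(u)\neq\Unsat$. For such an element $y$ tagged (or seeded) by a state $v$, I fix a solution of $\ILConstraints(v)$ — legitimate because $\Status(v)\neq\Unsat$, so otherwise \UPSa or \UPSb would have set the status to $\Unsat$, hence $\ILConstraints(v)$ augmented with $x_w=0$ for the $\Unsat$ successors $w$ with $\CELabelT(w)=\CQF$ is feasible. Then, for every successor $w$ of $v$ with $\CELabelT(w)=\CQF$ I create $n_w$ fresh children of $y$, and for every successor $w$ with $\CELabelT(w)=\TUS$ I create one fresh child; each child is obtained by following a (finite) saturation path from the non-state $w$ down to a state $w'$ of status $\neq\Unsat$, tagging the new element with $w'$, connecting it to $y$ precisely via the roles in $\CELabelR(w)$, and taking $\nC$ of the new element to be the concepts occurring in $\FullLabel(w')$. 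I iterate this; disjointness of the trees holds because fresh elements are always used below simple states, even though global state caching makes one state tag many elements. Finally $\nE$ is closed under $\mathcal{R}$-subrole and inverse exactly as in the definition of the $\mathcal{R}$-model corresponding to $M$.

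Third, I would verify that $M$ is consistent and $\mathcal{R}$-saturated, matching each of \eqref{eq:HGDSX 0}–\eqref{eq:HGDSX 9} with the tableau rule that was exhausted at the relevant state: \eqref{eq:HGDSX 0} holds since a state of status $\neq\Unsat$ contains no $\bot$ and no clashing pair (else \UPSa fires); \eqref{eq:HGDSX 1} and \eqref{eq:HGDSX 3}–\eqref{eq:HGDSX 4b} come from \US together with the $\V$-handling inside \TP and \TF; \eqref{eq:HGDSX 2} and the $\leq n$-clash clause \eqref{eq:HGDSX 9} come from \NUS and its subrules~\ref{item: HGW3A}, \ref{item: JHEAA}, \ref{item: OSJRS}; \eqref{eq:HGDSX 6} from \TP; and the number-restriction clauses \eqref{eq:HGDSX 7}–\eqref{eq:HGDSX 8} from \TF together with feasibility of $\ILConstraints$, using that the $\preceq$/$\succeq$ bookkeeping of \FSa/\FSb exactly accounts for the single ``extra'' successor contributed by the parent along an inverse role. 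The step I expect to be the main obstacle is precisely this interaction of inverse roles with qualified number restrictions: I must show that when the element $y$ tagged by $v$ is glued under a parent $x$, every formula of the form $\V R^-.C$ or $\leq\!m\,R^-.C$ or $\geq\!n\,R^-.C$ living in the local graph of $v$'s predecessor is genuinely realized at $x$, which is exactly what the converse-compatibility rules \KCC, the attribute $\FmlsRC$, and the $\CEFullLabelR$ construction were designed to enforce. Concretely, I expect to need arguments dual to those in the proof of Lemma~\ref{lemma: SHQWD} — the $\CEFullLabelR$ machinery and the role chains $R_0,\dots,R_h$, $S_1,\dots,S_h$ — to show that no spurious $R$-edge back to $x$ is forced and that the $\preceq(n-1)$/$\succeq(n-1)$ adjustments are the correct ones, so that the fixed solutions of the integer linear constraints yield exactly the right number of children at every level.

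Once $M$ is shown consistent and $\mathcal{R}$-saturated, Lemma~\ref{lemma: model graph} gives the $\mathcal{R}$-model $\mathcal{I}$ corresponding to $M$, which is a model of $\mathcal{R}$ with $x\in C^\mathcal{I}$ for all $C\in\nC(x)$. Every concept of $\mathcal{T}$ is forced into the label of every state — it is copied into $\Label(\nu)$ at initialization and into the label of each new simple node by \TP and \TF — so the $\mathcal{T}$-axioms lie in $\nC(x)$ for all $x$ and $\mathcal{I}\models\mathcal{T}$; and $\mathcal{A}\subseteq\Label(\nu)$, whence the assertions of $\mathcal{A}$ survive to $\FullLabel(v^\ast)$ and hold in $\mathcal{I}$. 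Therefore $\tuple{\mR,\mT,\mA}$ is satisfiable.
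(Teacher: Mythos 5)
Your proposal follows essentially the same route as the paper's proof: take a saturation path from $\nu$ to a complex state, seed a model graph from its $\FullLabel$, attach disjoint trees by fixing solutions of $\ILConstraints$ at each state and following saturation paths from its successors, verify consistency and $\mR$-saturation, and conclude via Lemma~\ref{lemma: model graph}. The paper in fact leaves the saturation check as ``straightforward,'' so your discussion of which rules discharge which conditions \eqref{eq:HGDSX 0}--\eqref{eq:HGDSX 9} is, if anything, more explicit than the published argument.
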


\begin{proof}
Let $v_0 = \nu, v_1, \ldots,v_k$ be a saturation path of $\nu$. We define a model graph $M = \langle \Delta, \nI, \nC, \nE\rangle$ as follows:
\begin{enumerate}
\item\label{item: JROSA} Let $\Delta_0$ be the set of all individuals occurring in $\Label(v_k)$ and set $\Delta := \Delta_0$. For every $a \in \Delta_0$, define $\nI(a) = a$. Extend $\nI$ to other individuals from $\IN$ so that: if $(a \doteq b) \in$ $\Label(v_k)$ then $\nI(a) = \nI(b)$; if $a \in \IN$ does not occur in $\tuple{\mR,\mT,\mA}$ then $\nI(a)$ is some individual occurring in $\Label(v_k)$. 
For each $a \in \Delta_0$, mark $a$ as {\em unresolved}$\,$\footnote{Each node of $M$ will be marked either as unresolved or as resolved.} and set $\nC(a) := \{C \mid (a\!:\!C) \in \FullLabel(v_k)\}$.
For each role $R$, set $\nE(R) := \{\tuple{a,b} \mid R(a,b) \in \FullLabel(v_k)\}$.

\item\label{item: JHRES} For every unresolved node $y \in \Delta$ do:
  \begin{enumerate}
  \item If $y \in \Delta_0$ then let $u = v_k$ and $W = \{w \in V \mid \tuple{v_k,w} \in E$ and $\CELabelI(w) = y\}$.
  \item Else let $u = f(y)$ ($f$ is a constructed mapping that associates each node of $M$ not belonging to $\Delta_0$ with a simple state of $G$; as a maintained property of $f$, $\Status(u) \notin \{\Unsat,\Incomplete\}$) and let $W = \{w \in V \mid \tuple{u,w} \in E\}$.

  \item Fix a solution of $\ILConstraints(u)$, and for each $w \in W$:
	\begin{itemize}
	\item if $\CELabelT(w) = \TUS$ then let $n_w = 1$, 
	\item else let $n_w$ be the value of $x_w$ in that solution. 
	\end{itemize}
  \item Delete from $W$ elements $w$ with $n_w = 0$. 

  \item For each $w_0 \in W$ do:
     \begin{itemize}
     \item Let $w_0$, \ldots, $w_h$ be a saturation path of $w_0$ in $G$. 
     \item For $i := 1$ to $n_{w_0}$ do:
	\begin{itemize}
	\item Add a new element $z$ to $\Delta$ and mark $z$ as unresolved.
	\item For each $R \in \CELabelR(w_0)$, add $\tuple{y,z}$ to $\nE(R)$ and $\tuple{z,y}$ to $\nE(R^-)$.
	\item Set $\nC(z)$ to the set of concepts belonging to $\FullLabel(w_h)$ and set $f(z) := w_h$.
	\end{itemize}
     \end{itemize}

  \item Mark $y$ as {\em resolved}.
  \end{enumerate}
\end{enumerate}

The defined model graph $M$ may be infinite. It consists of a finite base created at the step~\ref{item: JROSA} and disjoint trees (with backward edges to predecessors) created at the step~\ref{item: JHRES}.

It is straightforward to prove that $M$ is a consistent $\mR$-saturated model graph. 

By the definition of \CSHIQ-tableaux for $\tuple{\mR,\mT,\mA}$ and the construction of $M$: if $(a\!:\!C) \in \mA$ then $C \in \nC(\nI(a))$; if $R(a,b) \in \mA$ then $\tuple{\nI(a),\nI(b)} \in \nE(R)$; if $(a \not\doteq b) \in \mA$ then $\nI(a) \neq \nI(b)$; and $\mT \subseteq \nC(y)$ for all $y \in \Delta_0$. We also have that $\mT \subseteq \nC(z)$ for all $z \in \Delta - \Delta_0$. Hence, by Lemma~\ref{lemma: model graph}, the interpretation corresponding to $M$ is a model of $\tuple{\mR,\mT,\mA}$.
\myEnd
\end{proof}


\section{Concluding Remarks}
\label{section: conc}

We have developed the first \EXPTIME tableau decision procedure for checking satisfiability of a knowledge base in the DL \SHIQ when numbers are coded in unary.
Our procedure has been designed to increase efficiency of reasoning:
\begin{itemize}
\item We use global state caching but not global caching plus inefficient cuts although the latter approach is much simpler and still guarantees the optimal complexity.

\item We use global state caching but not ``pairwise'' global state caching, cf.\ the lift from anywhere blocking to pairwise anywhere blocking~\cite{HorrocksST00}. This is a good optimization technique, as such a lift would make the graph larger and significantly reduce the chance of getting cache hits. It is a new technique for dealing with both inverse roles and quantified number restrictions. 

\item Similarly to our previous works~\cite{Nguyen-ALCI,SHI-ICCCI,nCPDLreg-long}, but in contrast to~\cite{GoreW09,GoreW10}, if $v$ is a non-state such that $\AfterTrans(v)$ holds then we also apply global caching for the local graph of $v$.

\item Using rules with the highest priority for updating statuses of nodes means that the final statuses $\Unsat$, $\Sat$ and $\Incomplete$ of nodes are propagated ``on-the-fly''.

\item In contrast to Farsiniamarj's method of exploiting integer programming for tableaux~\cite{Farsiniamarj08}, in order to avoid nondeterminism we only check feasibility and do not find and use solutions of the considered set of constraints. Thus, as far as we know, we are the first one who applied integer linear {\em feasibility} checking to tableaux. In the current presentation of our procedure, feasibility checking is done ``on-the-fly''. However, when it turns out that such checks would better be done not ``on-the-fly'', they can be delayed and executed occasionally without affecting soundness and completeness of the calculus (like checking fulfillment of eventualities in the tableau decision procedure for \CPDLreg in~\cite{nCPDLreg-long}). 

\item We do not use pre-compilation techniques. Our operations are direct and natural. (Of course, not all pre-compilation techniques are bad.) The use of NNF for formulas is also natural and can be efficiently handled~\cite{Nguyen08CSP-FI}. Treating TBox axioms as global assumptions is just for making the presentation simple. In practice, the absorption technique like the one discussed in~\cite{NguyenS10TCCI} can be used to deal with TBox axioms. 
\end{itemize}

Our tableau decision procedure for \SHIQ is a framework, which can be implemented with various optimization techniques~\cite{Nguyen08CSP-FI}. 
In~\cite{Nguyen08CSP-FI} we established a set of optimizations that co-operates very well with global caching and various search strategies for the DL \ALC, including formula normalization and caching, literal elimination, propagation of unsat (closedness) for parent and sibling nodes, as well as cutoffs and compacting the graph.\footnote{On small matters like how to implement the procedure $\FindProxy$ efficiently, in~\cite{Nguyen08CSP-FI} we used hashing, which is very efficient. Besides, even in the case the cost of that procedure is polynomial in the size of the graph, the complexity of our tableau decision procedure is still \EXPTIME (when numbers are coded in unary).} All of these optimization techniques can be adapted for \SHIQ, and probably, new ones can be found. 

Implementing an efficient tableau reasoner for \SHIQ is time-consuming. A preliminary implementation of our tableau decision procedure for \SHIQ will be done in the near future. By the preliminary experimental results of \cite{Nguyen08CSP-FI,GorePostniece08} (on global caching for \ALC) and \cite{Farsiniamarj08} (on exploiting integer programming for \SHQ), one can hope that our framework allows to create good reasoners for \SHIQ. 

Our work provides not only techniques for increasing efficiency of reasoning and making it scalable w.r.t.\ quantified number restrictions. It does provide also the first method for developing \EXPTIME tableau decision procedures instead of \NtEXPTIME ones~\cite{HorrocksST00,TobiesThesis} for \EXPTIME description logics with quantified number restrictions (when numbers are coded in unary). The method is also applicable to graded modal logics.


\section*{Acknowledgments} 

This work was supported by Polish National Science Centre grant 2011/01/B/ST6/02759.




\end{document}